\newcommand\ackname{Acknowledgements}
\newenvironment{acknowledgements}{
   \titlepage
   \null\vfil
   \@beginparpenalty\@lowpenalty
   \begin{center}
      \bfseries \ackname
      \@endparpenalty\@M
   \end{center}}
{\par\vfil\null\endtitlepage}
\newcommand{\homo}{{ \textrm{hom} }}
\newcommand{\Z}{\mathbb{Z}}
\newcommand{\sbv}{\textrm{SBV}}
\newcommand{\sbvp}{\textrm{SBV}_p}
\def\M{\mathbb M}
\newcommand{\bv}{\textrm{BV}}
\newcommand{\leb}{\mathcal{L}}
\def\ae{\mbox{\,a.e.\xspace}}
\newcommand{\dx}{{\,\mathrm{d}x}}
\newcommand{\dt}{{\,\mathrm{d}t}}
\newcommand{\dy}{{\,\mathrm{d}y}}
\newcommand{\N}{\mathbb{N}}
\newcommand{\R}{\mathbb{R}}
\newcommand{\Q}{\mathbb{Q}}
\newcommand\scal[2]{{\left\langle #1 ,#2\right\rangle}}
\newcommand{\ottoboh}[1]{ }
\newcommand{\akmboh}[1]{{\color{green}}}
\newcommand{\BrascampLiebBoh}[1]{}
\newcommand{\res}{\mathop{\hbox{\vrule height 7pt width .5pt depth 0pt \vrule height .5pt width 6pt depth 0pt}}\nolimits}
\newcommand{\hausd}{\mathcal H}
\newcommand{\eps}{\varepsilon}
\newcommand{\ro}{\varrho}
\def\d{\,\mathrm{d}}
\newcommand{\insieme}[1]{\left \{#1\right \}}
\newcommand{\ie}{i.e.,\xspace}
\DeclareMathOperator{\dist}{dist}
\DeclareMathOperator{\diam}{ diam}
\newtheorem{theorem}{Theorem}[section]
\newtheorem{lemma}[theorem]{Lemma}
\newtheorem{remark}[theorem]{Remark}
\newtheorem{rmk}[theorem]{Remark}
\newtheorem{proposition}[theorem]{Proposition}
\newtheorem{prop}[theorem]{Proposition}
\newtheorem{corollary}[theorem]{Corollary}
\title{Sobolev and SBV Representation Theorems for large volume limit Gibbs measures.} 
\author{Eris Runa\thanks{eris.runa@mis.mpg.de}}
\affil{Max Planck Institut for Mathematics in the Sciences,\\ Inselstrasse 22, Leipzig\\ Germany}
\newcommand{\abs}[1]{{\lvert #1\rvert}}
\newcommand{\1}{{\mathchoice {1\mskip-4mu\mathrm l}      
      {1\mskip-4mu\mathrm l} 
      {1\mskip-4.5mu\mathrm l} {1\mskip-5mu\mathrm l}}}
\newcommand{\Acal}   {{\mathcal A }}
\newcommand{\Fcal}   {{\mathcal F }}
\newcommand{\Mcal}   {{\mathcal M }} 
\newcommand{\Ncal}   {{\mathcal N }}
\newcommand{\Ucal}   {{\mathcal U }} 
\newcommand{\Vcal}   {{\mathcal V }}
\def\Dscr{\mathscr{D}}
\date{}
\begin{document}

\maketitle

\begin{abstract}
   We study the limit of large volume equilibrium Gibbs measures for a rather general Hamiltonians.   
   In particular, we study Hamiltonians which arise in naturally in Nonlinear Elasticity and Hamiltonians (containing surface terms) which arises naturally in Fracture Mechanics.   
   In both of these settings we show that an integral representation holds for the limit.  
   Moreover, we also show a homogenization result for the Nonlinear Elasticity setting.  
   This extends a recent result of R.~Koteck\'y and S.~Luckhaus in \cite{LK-comm-math-phys}. 
\end{abstract}


\section{Introduction} 
\label{sec:representation-theorems-introduction}
Recently, R.~Koteck\'y and S.~Luckhaus, have shown a remarkable result. 
They prove that in a fairly general setting, the limit of large volume equilibrium Gibbs measures for elasticity type Hamiltonians with clamped boundary conditions. 
The ``zero''-temperature case was considered by R.~Alicandro and M.~Cicalese in \cite{MR2083851}. 

Let us now briefly explain the results contained in \cite{LK-comm-math-phys}. 

The space of microscopic configurations consists of all $\varphi:\varepsilon\Z^d\to\R^m$.
In particular, if one considers $m=d$ then this would model the elasticity situation. 
In this case $\varphi(x)$ can be interpreted as the displacement of the atom ``positioned in $x$''. 
When $m=1$, this would model the random surface case, where $\varphi(x)$ can be interpreted as the height. 

In order to define the Gibbs measure, we fix a configuration $\psi$, a set $A\subset \R^{d}$, an Hamiltonian $H$ and a finite range interaction $U$. 
Namely there exists a set $F\subset \Z^{d}$, such that $U: \R^{F}\to \R$ and one denotes $R_0$ the range of the potential $U$ \ie $R_{0}=\diam(F)$. 
Denote by $\varphi_{F}$ to be the restriction of $\varphi$ to $\varepsilon F$. For simplicity of notation we denote $A_{\varepsilon}:=A\cap \varepsilon \Z^{d}$. 
Then they define the Hamiltonian $H$ via
\begin{equation*}
   H_{A,\varepsilon}(\varphi)= \sum_{j\in\varepsilon\Z^d\colon \tau_j(F)\subset A } U(\varphi_{\tau_j(F)})
\end{equation*}
with $\tau_j(F)=\varepsilon F+j=\{i\colon i-j\in \varepsilon F\}$.

Finally denote by $\1_{A,\varepsilon,\psi}$ the indicator function of the set 
\begin{equation*}
   \{ \varphi\in (\R^m)^{A_{\varepsilon}}:\   |\varphi(i)- \psi(i)|<1  \text{ for all } i\in S_{R_0}(\Lambda)\},
\end{equation*}
where 
\begin{equation*} 
   \begin{split}
      S_{R_0}(A)=\{x\in A_{\varepsilon} |  \dist(x,\varepsilon \Z^d\setminus A_{\varepsilon})\leq\varepsilon R_0\}.
   \end{split}
\end{equation*} 
Then the Gibbs measure $\mu_{A,\psi}(\d\varphi)$ is a measure on $(\R^m)^{\varepsilon A}$ and  is defined by
\begin{equation*}
   \mu_{A,{\varepsilon},\psi}(\d \varphi)=\frac{\exp\big\{- \beta H_{A,\varepsilon}(\varphi)\big\}}{Z_{A,\varepsilon,\psi}}
   \1_{A, \varepsilon,\psi}(\varphi)\prod_{i\in A_{\varepsilon}} d \varphi(i)
\end{equation*}
where $Z_{{A,\varepsilon,\psi}}$ is such that the above is a probability measure. 

Moreover, they assume that

\begin{enumerate}
   \item[{(1)}] There exist constants $ p>0$ and  $c\in (0,\infty)$  such that 
      \begin{equation*}
         U(\varphi_F)\geq c|\nabla \varphi(0)|^p
      \end{equation*}
      for any  $\varphi\in (\R^m)^{\varepsilon\Z^d}$. 
   \item[{(2)}] There exist  $r>1$  and  $C\in(1,\infty)$  such that 
      \begin{equation*}
         U(s \varphi_A+(1-s) \psi_A + \eta_A)\leq C\big(1+U( \varphi_A) + U(\psi_A) +\sum_{i\in \varepsilon F} |\eta(i)|^{r}\big)
      \end{equation*}
      for any $s\in[0,1]$ and any   $\varphi,\psi,\eta\in (\R^m)^{\varepsilon\Z^d}$.
\end{enumerate}

With the above notation, in \cite{LK-comm-math-phys}, the following theorem is proved:

\begin{theorem} 
   \label{thm:kotecky-luckhaus-main}
   Let $U$ be as above with $\frac{1}{r}>\frac{1}{p}-\frac{1}{d}$. For every  $u\in W^{1,p}(\Omega)$, let us define 
   \begin{equation*}
      F_{\kappa,\varepsilon}(u)=- \varepsilon^d \abs{\Omega}^{-1} \log Z_{\Omega_{\varepsilon }}({\mathcal N}_{\Omega_\varepsilon ,r}(u,\kappa)),
   \end{equation*}
   and 
   \begin{equation*}
      \begin{split}
         F_{\kappa}^+(u)=\limsup_{\varepsilon\to 0} F_{\kappa,\varepsilon}(u)\\
         F_{\kappa}^-(u)=\liminf_{\varepsilon\to 0} F_{\kappa,\varepsilon}(u)
      \end{split}
   \end{equation*}
   Then, there exist $W$ quasi-convex such that the following  hold 
   \begin{enumerate} 
      \item $\lim_{\kappa\to 0} F_{\kappa}^-(u)\geq \frac{1}{|\Omega|}\int_{\Omega} W(\nabla u(x)) \dx$.
      \item  If $u\in W^{1,r}(\Omega)$ then $\lim_{\kappa\to 0} F_{\kappa}^+(u)\leq \frac{1}{|\Omega|}\int_{\Omega} W(\nabla u(x)) \dx$.
   \end{enumerate} 
\end{theorem}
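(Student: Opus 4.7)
The natural candidate is the cell formula
\begin{equation*}
W(\xi)=\lim_{L\to\infty}\Bigl(-\tfrac{1}{L^{d}}\log Z_{Q_L,1,\psi_\xi}\Bigr),\qquad \psi_\xi(x)=\xi x,
\end{equation*}
with $Q_L$ a cube of side $L$ and $\psi_\xi$ the affine boundary datum associated with $\xi\in\R^{m\times d}$. Existence of the limit should follow from an almost-subadditivity argument on nested cubes: tile a large cube by translates of smaller ones, pick near-optimal inner configurations on each sub-cube, and use assumption (2) to interpolate these configurations along seams of width $O(R_0)$ at a cost that is of lower order in $L$ than the bulk term. Quasi-convexity of $W$ is then an output of this cell-formula structure: competing with $\psi_\xi$ by $\psi_\xi+\zeta$ for $\zeta$ compactly supported inside $Q_L$ is admissible in the partition function, and Jensen-type averaging produces the defining inequality of quasi-convexity.

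\textbf{Lower bound (i).} For $u\in W^{1,p}(\Omega)$, fix $\delta>0$ and partition $\Omega$ into cubes $Q_i$ of side $\delta$ on which $u$ is close to its affine tangent $\psi_{\xi_i,b_i}(x)=\xi_i x+b_i$. Assumption (1) forces any configuration in $\mathcal N_{\Omega_\varepsilon,r}(u,\kappa)$ to have discrete gradient controlled in $L^p$, so a Fubini-type decomposition of $Z_{\Omega_\varepsilon}(\mathcal N_{\Omega_\varepsilon,r}(u,\kappa))$ over the cubes $Q_i$ bounds it from above by a product of cube-wise partition functions with boundary values close to $\psi_{\xi_i,b_i}$; residual coupling terms along the seams are absorbed using assumption (2). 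Taking $-\varepsilon^d|\Omega|^{-1}\log$, one obtains a Riemann sum with summands $-\tfrac{1}{L_\varepsilon^d}\log Z_{Q_{L_\varepsilon},1,\psi_{\xi_i}}$ for $L_\varepsilon=\delta/\varepsilon$, which as $\varepsilon\to 0$ converges to $|\Omega|^{-1}\sum_i|Q_i|\,W(\xi_i)$; passing $\delta\to 0$ gives $|\Omega|^{-1}\int_\Omega W(\nabla u)\dx$, and $\kappa\to 0$ finally pins the configurations to $u$.

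\textbf{Upper bound (ii).} For $u\in W^{1,r}(\Omega)$, approximate $u$ by a piecewise affine $u_\delta\to u$ in $W^{1,r}$; on each cube of affinity $Q_i$ select, via the cell formula, a microscopic configuration $\varphi^i$ with boundary values $\psi_{\xi_i,b_i}$ and partition function within $o(1)$ of $Z_{Q_{L_\varepsilon},1,\psi_{\xi_i}}$. Concatenate the $\varphi^i$ across seams of relative width $\eta$ by linear interpolation; assumption (2) then bounds the seam contribution by sums of the form $\sum|\eta(i)|^r$, which via a discrete Sobolev inequality applied to $u_\delta-u$ is controlled by the $W^{1,p}$-norm of $u$. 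The hypothesis $\tfrac{1}{r}>\tfrac{1}{p}-\tfrac{1}{d}$ is exactly the Sobolev embedding exponent that makes this control possible and keeps the seam cost negligible. Multiplying the cube-wise partition functions by the gluing weight and taking $-\varepsilon^d|\Omega|^{-1}\log$ yields a Riemann sum that converges to $|\Omega|^{-1}\int_\Omega W(\nabla u)\dx+o(1)$ as $\varepsilon\to 0$, $\delta\to 0$, $\kappa\to 0$, providing the required upper bound for $F_\kappa^+$.

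\textbf{Main obstacle.} The crux of the argument is the gluing step, which appears in all three places: proving existence of $W$, the lower bound, and the upper bound each require interpolating two microscopic configurations along a thin seam with an additive cost that does not overwhelm the bulk entropy. Assumption (2) is precisely the quantitative interpolation inequality needed, and the hypothesis $\tfrac{1}{r}>\tfrac{1}{p}-\tfrac{1}{d}$ is what lets one transfer the $W^{1,p}$ gradient bound coming from (1) into the $L^r$-type control required to sum the seam costs. Additional care is needed in handling the boundary layer $S_{R_0}(\Omega)$, where the indicator $\1_{\Omega,\varepsilon,\psi}$ and the $\kappa$-neighbourhood interact.
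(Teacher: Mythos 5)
This theorem is not proved in the paper at all: it is quoted verbatim from Koteck\'y--Luckhaus \cite{LK-comm-math-phys} as background, and the introduction merely sketches their strategy (triangulation, $L^r$-neighbourhoods of linearizations of a minimiser, and the ``soft clamp'' trick of switching between $Z_{\Omega_\varepsilon }({\mathcal N}_{\Omega_\varepsilon ,r}(v,\kappa))$ and $Z_{\Omega_\varepsilon }({\mathcal N}_{\Omega_\varepsilon ,r}(v,2\kappa)\cap{\mathcal N}_{\Omega_\varepsilon ,R_0,\infty}(Z))$, enforcing $|\varphi(i)-\psi(i)|<1$ in a boundary strip of width $R_0$). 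Your proposal captures the same high-level ingredients --- cell formula by almost-subadditivity, quasi-convexity by competing with compactly supported perturbations, piecewise-affine discretization, gluing via assumption (2), and the Sobolev exponent $\tfrac1r>\tfrac1p-\tfrac1d$ controlling the seam cost --- so it is consistent with the route the paper attributes to Koteck\'y--Luckhaus.

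There is, however, a real gap in your lower-bound step. Writing ``a Fubini-type decomposition of $Z_{\Omega_\varepsilon}(\mathcal N_{\Omega_\varepsilon,r}(u,\kappa))$ over the cubes $Q_i$ bounds it from above by a product of cube-wise partition functions with boundary values close to $\psi_{\xi_i,b_i}$'' assumes what must be proved. The defining constraint $\mathcal N_{\Omega_\varepsilon,r}(u,\kappa)$ is a single global $L^r$ bound: it does not impose any pointwise control on $\varphi$ along the cube interfaces, so the integral does not factorize, and assumption (1) gives a lower bound on $H$ rather than an $L^p$ gradient bound on every admissible configuration. Converting the soft global $L^r$ closeness into cube-wise boundary data is precisely the hard step that Koteck\'y--Luckhaus resolve with the soft-clamp insertion quoted above (together with a Markov/pigeonhole argument selecting a good boundary layer). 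Without some substitute for this mechanism your lower bound does not close. (There is also a notational collision: in the upper bound you use $\eta$ both for the seam width and for the error field $\eta_A$ of assumption (2); these are distinct objects and the discrete Sobolev argument must be applied to $\varphi_{u_\delta,\varepsilon}-\varphi_{u,\varepsilon}$, not to the seam width.)
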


They also give an explicit formula of $W$. Moreover, via an example, they show   that $W$ may eventually not be convex.

From the above result it is not very difficult to obtain a Large  Deviation principle.

The crucial step in the proof of Theorem~\ref{thm:kotecky-luckhaus-main} is based on the possibility to approximate with partition functions on cells of a triangulation given in terms
of $L^r$-neighbourhoods of linearizations of a minimiser of the rate
functional. An important tool that allows them to impose a boundary
condition on each cell of the triangulation consists in switching between the
corresponding partition
function $Z_{\Omega_\varepsilon }({\mathcal N}_{\Omega_\varepsilon ,r}(v,\kappa))$ and the version $Z_{\Omega_\varepsilon }({\mathcal N}_{\Omega_\varepsilon ,r}(v,2\kappa)\cap{\mathcal N}_{\Omega_\varepsilon ,R_0,\infty}(Z))$  with  an additional soft clamp $|\varphi(i)- \psi(i)|<1$ enforced  in the boundary strip of the width $R_0> \diam(A)$ with $Z\in  {\mathcal N}_{\Omega_\varepsilon ,r}(v,\kappa)$ arbitrarily chosen.

We improve their result in the following manner:
\begin{enumerate} 
   \item We consider Hamiltonians, where the interaction is \emph{not} of finite range and is dependent\footnote{for the precise definition see the next section}  on the scale $\varepsilon$ and the position $x$.  We are also able to give an homogenisation result.   
   \item By considering a different version of the interpolation argument we 
      are able to consider ``hard'' boundary condition instead of the clamped 
      ones. In our opinion, this type of boundary conditions are more in line 
      with the standard theory of Statistical Mechanics.   
   \item We simplify some of the arguments by relying on the representation 
      formulas, hence avoiding triangulation  arguments.  
   \item We are able to consider more general potentials, which ``relax'' in $\sbv$.  
\end{enumerate} 

\section{Sobolev Representation Theorems}
\label{sec:sobolev-representation}

\subsection{Preliminary results} 
Let $\Omega $ be an open set.  We denote by $\Acal(\Omega) $ the family of all open sets contained in $\Omega$. 
We now recall a well-known result in measure theory due to E.  De Giorgi and G.  Letta. 
The proof can be found in \cite{AFPBV}.

\begin{theorem} 
   \label{thm:degiorgiLetta}
   Let $X $ be a metric space and let us denote by $\Acal $ its open sets.  
   Let $\mu:\mathcal A\to [0,\infty] $ be an increasing set function such that 
   \begin{enumerate}[label=({DL}\arabic*)] 
      \item \label{cond:DG-L:insieme-vuoto} $\mu(\emptyset )=0 $;
      \item \label{cond:DG-L:subadditivita} $A,B\in \Acal $ then $\mu(A\cup B)\leq \mu(A) +\mu(B)$;
      \item $A,B\in \Acal $, such that $A\cap B =\emptyset $ then 
         $\mu(A\cap B)\geq \mu(A)+\mu(B)$
      \item $\mu(A)=\sup \insieme{\mu(B):\ B\Subset A} $.  
         Then, the extension of $\mu $ to  every $C\subset X $ given by 
         \begin{equation*} 
            \begin{split}
               \mu(C)=\inf \insieme{\mu(A):\ A\in\Acal,\ A \supset C}
            \end{split}
         \end{equation*} 
         is an outer measure.   In particular the restriction of $\mu $ to the Borel $\sigma $-algebra is a positive measure.  
   \end{enumerate} 
\end{theorem}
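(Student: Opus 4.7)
The plan is to first verify that the extended set function is an outer measure by checking the three defining properties ($\mu(\emptyset)=0$, monotonicity, countable subadditivity), and then to apply Carathéodory's criterion to obtain a measure on the Borel $\sigma$-algebra. Vanishing on $\emptyset$ and monotonicity are immediate from the definition of the extension as an infimum. The only nontrivial property at this stage is countable subadditivity, and this is the first place where hypothesis \ref{cond:DG-L:subadditivita} must be upgraded from finite to countable unions of open sets.

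To handle countable subadditivity on open sets, let $A=\bigcup_n A_n$ with $A_n\in\Acal$. For any $B\Subset A$, the closure $\overline{B}$ is compact in $X$, hence contained in $\bigcup_{n\leq N}A_n$ for some finite $N$; iterating \ref{cond:DG-L:subadditivita} yields $\mu(B)\leq\sum_{n=1}^N\mu(A_n)\leq\sum_n\mu(A_n)$. Taking the supremum over $B\Subset A$ and invoking the inner regularity hypothesis (DL4) gives $\mu(A)\leq\sum_n\mu(A_n)$. The countable subadditivity for arbitrary subsets $C_n$ then follows by a standard $\varepsilon/2^n$ argument: choose open $A_n\supset C_n$ with $\mu(A_n)\leq\mu(C_n)+\varepsilon 2^{-n}$, so that $\bigcup_n A_n$ is an open cover of $\bigcup_n C_n$ and one concludes by the open case.

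For the Borel measurability statement, it suffices by Carathéodory's criterion and the fact that open sets generate $\BorelSets{X}$ to show that every $U\in\Acal$ satisfies
\begin{equation*}
   \mu(E)\geq\mu(E\cap U)+\mu(E\setminus U)\quad\text{for every }E\subset X.
\end{equation*}
Approximating $E$ from outside by open sets and using the definition of the extension, one reduces to the case $E\in\Acal$. Given such an $E$, for any $B\Subset E\cap U$ the set $E\setminus\overline{B}$ is open, is disjoint from $B$, and contains $E\setminus U$ (since $\overline{B}\subset U$). Applying the superadditivity hypothesis \emph{(iii)} to the disjoint open pair $B$ and $E\setminus\overline{B}$, whose union is contained in $E$, gives
\begin{equation*}
   \mu(E)\geq\mu\bigl(B\cup(E\setminus\overline{B})\bigr)\geq\mu(B)+\mu(E\setminus\overline{B})\geq\mu(B)+\mu(E\setminus U).
\end{equation*}
Taking the supremum over $B\Subset E\cap U$ and using (DL4) one more time yields the Carathéodory inequality.

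The main obstacle is the interplay between the inner regularity hypothesis (DL4) and the metric topology: the key structural input is that $B\Subset A$ delivers a compact set $\overline{B}$, which in turn makes (DL2) effective for countable covers and lets (DL3) be applied to the disjoint pair $B, E\setminus\overline{B}$. The rest of the argument is bookkeeping, and the two steps, countable subadditivity and Carathéodory measurability of opens, each use exactly one appeal to (DL4) combined with the corresponding finite additivity hypothesis.
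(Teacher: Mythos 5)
The paper does not supply a proof of Theorem~\ref{thm:degiorgiLetta}; it is stated with a citation to \cite{AFPBV}, where it appears as Theorem~1.53. Your argument is correct and reproduces the standard proof from that reference: reduce countable subadditivity on open sets to the finite case via the compactness of $\overline{B}$ for $B\Subset A$ together with (DL4), extend to arbitrary sets by an $\varepsilon 2^{-n}$ choice of open covers, and verify Carath\'eodory measurability of an open $U$ by testing against open $E$, splitting it as the disjoint open pair $B$ and $E\setminus\overline B$ with $B\Subset E\cap U$, and invoking (DL3) plus (DL4). The only point worth flagging is that your use of the finite subcover step requires $B\Subset A$ to mean compact containment of $\overline B$ in $A$ (not merely $\overline B\subset A$); this is the intended reading here and is exactly the compactness input that makes the theorem work in a metric space.
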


We recall the well-known integral representation formulas (see \cite{opac-b1087300}).
\begin{theorem}\label{thm:integral_represenatation_sobolev} 
   Let $1\leq p<\infty$ and let $F:W^{1,p}\times{\cal
      A}(\Omega)\to[0,+\infty]$ be a functional satisfying the following
   conditions:
   \begin{enumerate}
      \item[(i)]{\textrm{(locality)}} $F$ is local, i.e. $F(u,A)=F(v,A)$ if $u=v$ \ae on $A\in{\cal A}(\Omega)$;
      \item[(ii)]{\textrm{(measure property)}} for all $u\in W^{1,p}$ the set function
         $F(u,\cdot)$ is the restriction of a Borel measure to ${\cal
            A}(\Omega)$;
      \item[(iii)]{\textrm{(growth condition)}} there exists $c>0$ and $a\in L^1(\Omega)$ such
         that
         \begin{eqnarray*}
            F(u,A)\leq c\int_{A}(a(x)+|Du|^p)\ \dx
         \end{eqnarray*}
         for all $u\in W^{1,p}$ and $A\in{\cal A}(\Omega)$;
      \item[(iv)]{\textrm{(translation invariance in $u$)}} $F(u+z,A)=F(u,A)$ for all
         $z\in\R^d$, $u\in W^{1,p}$ and $A\in{\cal A}(\Omega)$;
      \item[(v)]{\textrm{(lower semicontinuity)}} for all $A\in{\cal A}(\Omega)$
         $F(\cdot,A)$ is sequentially lower semicontinuous with respect to
         the weak convergence in $ W^{1,p}$.
   \end{enumerate}

   Then there exists a Carath\'eodory function
   $f:\Omega\times\M^{d\times N}\to[0,+\infty)$ satisfying the growth
   condition
   \begin{eqnarray*}
      0\leq f(x,M)\leq c(a(x)+|M|^p)
   \end{eqnarray*}
   for all $x\in\Omega$ and $M\in M^{d\times N}$, such that
   \begin{eqnarray*}
      F(u,A)= \int\limits_{A}f(x,Du(x)) \dx
   \end{eqnarray*}
   for all $u\in W^{1,p}$ and $A\in{\cal A}(\Omega)$.\\
   If in addition it holds
   \begin{itemize}
      \item[(vi)]{\textrm{(translation invariance in $x$)}}
         \begin{eqnarray*}
            F(Mx,B(y,\ro))= F(Mx,B(z,\ro))
         \end{eqnarray*}
   \end{itemize}
   for all $M\in M^{d\times N},\ y,z\in\Omega$, and $\ro>0$ such that
   $B(y,\ro)\cup B(z,\ro)\subset\Omega$, then $f$ does not depend on
   $x$.
\end{theorem}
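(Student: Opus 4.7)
My plan is to construct the density $f$ from the behaviour of $F$ on affine functions and then extend the representation to all of $W^{1,p}$ by density, in the classical Buttazzo--Dal Maso style.

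First, for each matrix $M\in\M^{d\times N}$ consider the affine map $u_M(x):=Mx$. By the measure property (ii), the set function $A\mapsto F(u_M,A)$ extends to a Borel measure on $\Omega$; by the growth condition (iii), this measure is absolutely continuous with respect to $\Leb{d}$ with density bounded by $c(a(x)+|M|^p)$. Radon--Nikodym then gives
\begin{equation*}
f(x,M):=\lim_{\rho\to 0}\frac{F(u_M,B(x,\rho))}{\omega_d\,\rho^d},\qquad \Leb{d}\text{-a.e. }x\in\Omega.
\end{equation*}
Measurability of $x\mapsto f(x,M)$ is built in; to get the Carath\'eodory property I would first define $f$ by the above limit on a countable dense set $\Dcal\subset\M^{d\times N}$ (on a common set of full measure), then extend in $M$ using the weak lower semicontinuity (v) together with the growth bound (iii), which gives local boundedness and semicontinuity in $M$ at every Lebesgue point of $a$. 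A standard modification on a negligible set produces a genuine Carath\'eodory representative.

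Next, I verify the representation on piecewise affine functions. If $u$ is piecewise affine with slopes $M_j$ on pieces $T_j$, then by locality (i) together with translation invariance (iv), $F(u,A\cap T_j)=F(u_{M_j},A\cap T_j)$; summing using the measure property (ii),
\begin{equation*}
F(u,A)=\sum_j\int_{A\cap T_j}f(x,M_j)\dx=\int_A f(x,Du(x))\dx.
\end{equation*}
For a general $u\in W^{1,p}(\Omega)$, I approximate strongly in $W^{1,p}$ by piecewise affine $u_n$; the growth condition and the piecewise affine identity give a uniform bound on $\int_A f(x,Du_n)\dx$, and (v) applied to $F(\cdot,A)$ yields the inequality $F(u,A)\leq\liminf_n\int_A f(x,Du_n)\dx$. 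The converse inequality follows by applying the abstract lower semicontinuity criterion to the candidate $G(u,A):=\int_A f(x,Du)\dx$: assumption (v) forces $f(x,\cdot)$ to be quasiconvex, which in turn makes $G(\cdot,A)$ weakly lower semicontinuous, so the two functionals, which agree on piecewise affine $u$, must agree on all of $W^{1,p}$. Finally, under (vi) the measure $F(u_M,\cdot)$ is translation invariant in $x$, hence its density $f(x,M)$ is independent of $x$.

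The main obstacle will be Step 3, namely matching the upper and lower bounds for general Sobolev $u$. The growth condition (iii) only gives the upper side directly on strong $W^{1,p}$-approximations, while the lower side passes through weak lower semicontinuity and thus requires quasiconvexity of $f(x,\cdot)$ in the matrix variable. Deriving this quasiconvexity from (v) alone (by a blow-up/localization at a Lebesgue point combined with a periodic test perturbation) is the technical core, and is the step where I would lean most heavily on the Acerbi--Fusco / Braides--Defranceschi machinery cited in \cite{opac-b1087300}.
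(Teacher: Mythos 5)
The paper itself does not prove this theorem: it is quoted verbatim from Buttazzo's book \cite{opac-b1087300} as a preliminary tool, so there is no internal argument to compare yours against. Judged on its own, your Steps~1 and~2 follow the standard Buttazzo--Dal~Maso construction correctly (Radon--Nikodym density of $F(Mx,\cdot)$ on a countable dense set of matrices, then extension in $M$; representation on piecewise affine maps by locality and translation invariance).

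There is, however, a genuine gap in Step~3, in the converse inequality $F(u,A)\geq\int_A f(x,Du)\dx$. Your argument is that $f(x,\cdot)$ is quasiconvex, hence $G(u,A):=\int_A f(x,Du)\dx$ is weakly lower semicontinuous, and since $F$ and $G$ agree on the strongly dense class of piecewise affine maps they must coincide. But two weakly lower semicontinuous functionals that agree on a strongly dense set need not coincide: lower semicontinuity of $F$ under strong approximation by piecewise affine $u_n$ gives $F(u,A)\leq\lim G(u_n,A)=G(u,A)$, and lower semicontinuity of $G$ under the same approximation again gives $G(u,A)\leq\lim F(u_n,A)=G(u,A)$ --- both arrows point the same way. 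Nothing in your outline produces $F\geq G$. The standard way to close this is a Lebesgue point/blow-up comparison at the level of densities: at $\Leb{d}$-a.e.\ point $x_0$ where $Du$ has a Lebesgue point, rescaled copies of $u$ converge weakly to the affine map $Du(x_0)\cdot x$, and lower semicontinuity of $F$ (applied at the blow-up scale, not lower semicontinuity of $G$) together with the measure property (ii) yields
\begin{equation*}
\frac{\d F(u,\cdot)}{\d\Leb{d}}(x_0)\;\geq\; f\big(x_0,Du(x_0)\big),
\end{equation*}
which after integration gives the missing inequality. Quasiconvexity of $f$ is a by-product of the representation, not the engine that closes the converse bound. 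You should replace the ``agreement by density'' argument with this pointwise density comparison.
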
 

\subsection{Hypothesis and Main Theorem} 
\label{subsec:hypothesis-and-main-sobolev}

For any $u\in L^1_{\text{loc}}(\R^d,\R^m)$,  let 
$X_{u,\varepsilon}: \Z^d\to\R^m$  and $\varphi:\varepsilon \Z^{d}\to \R^{m}$  be defined by  
\begin{equation}
   \label{eq:def-varphi-X}
   \begin{split}
      X_{u,\varepsilon}(i)&= \frac1{\varepsilon} \fint_{\varepsilon i + Q(\varepsilon)} u(y)\, \dy\\
      \varphi_{u,\varepsilon}(\varepsilon i)&= \frac1{\varepsilon} \fint_{\varepsilon i + Q(\varepsilon)} u(y)\, \dy
   \end{split}
\end{equation}
for any $i\in\Z^d$. Here, 
$Q(\varepsilon)=[-\tfrac{\varepsilon}2,\tfrac{\varepsilon}2]^d$ and $\fint$ 
denotes the mean value,\ie for every $f\in L^{1}(\R^{d})$
\begin{equation*} 
   \begin{split}
      \fint_{A}f(x)\dx = \frac{1}{| A |} \int_{A} f(x)\dx
   \end{split}
\end{equation*} 

Let $u\in W^{1,p}(\R^{d})$, $A$ is an open set and $p\geq 1$. Then it is not 
difficult to prove that 
\begin{equation} 
   \label{eq:converngence-of-discrete-sobolev-norm}
   \begin{split}
      \lim_{\varepsilon \downarrow 0}\sum_{x\in A_{\varepsilon }}\varepsilon 
      ^{d} |\nabla \varphi_{u} (x) |^{p} = \int_{A}|\nabla u | ^{p}.
   \end{split}
\end{equation} 

On the other hand, let 
\begin{equation}
   \Pi_{\varepsilon}: (\R^m)^{\Z^d}_0\to  W^{1,p}(\R^d)
\end{equation}
be a canonical  interpolation $X\to v$ such that $v({\varepsilon} i)=\varepsilon \varphi (\varepsilon i)$ for any $i\in\Z^d$.
Here, $(\R^m)^{\Z^d}_0$ is the set of functions $X:\Z^d\to\R^m$ with finite support.
To fix ideas, we can consider a triangulation of $\Z^d$ into simplexes with vertices in $\varepsilon\Z^d$, and choose $v$ on each simplex as the linear interpolation of the values ${\varepsilon} \varphi(\varepsilon i)$ on the vertices $\varepsilon i$.

Let $\Omega $ be an open set with regular boundary.  We denote by 
$\Omega_{\varepsilon}= \varepsilon \Z^{d}\cap \Omega$
and by $\Acal(\Omega) $ the set of all open sets contained in $\Omega $ with 
regular boundary.  For every set $A\in \Acal(\Omega) $, we define  

\begin{equation*} 
   \begin{split}
      R_{\varepsilon}^{\xi} ( A ):=\{\alpha\in\varepsilon\Z^d\ [\alpha,\alpha+\varepsilon\xi]\subset A \}, 
   \end{split}
\end{equation*} 
where by $[x,y]$ we mean the segment connecting $x$ and $y$, \ie $\{ \lambda x +(1-\lambda )y:\ \lambda \in [0,1]\}$.

The Hamilton $H$ is defined by
\begin{equation*} 
   \label{eq:hamiltoniana}
   \begin{split}
      H(\varphi,\varepsilon) := \sum_{\xi\in \Z^d}\sum_{x\in R^{\xi}_{\varepsilon}(\Omega)}
      f_{\xi,\varepsilon}(x,\nabla_{\xi}\varphi),
   \end{split}
\end{equation*} 
where $\xi\in \Z^{d}$,  and
\begin{equation*} 
   \begin{split}
      \nabla_{\xi } \varphi (x) := \frac{\varphi (x +\varepsilon \xi ) -\varphi (x)}{|\xi  |}.
   \end{split}
\end{equation*} 
We will also define a second Hamilton which takes into account also the boundary condition:
\begin{equation*} 
   \begin{split}
      H_{\infty}({\varphi},A,\varepsilon):=\sum_{\xi\in\Z^{d}}\sum_{x\in A_{\varepsilon}} f_{\varepsilon ,\xi }(x,\nabla _{\xi }\varphi(x)).
   \end{split}
\end{equation*}

The functions $f_{\xi,\varepsilon}$ will be specified later.

In order to apply the representation formulas, we shall need to localize. 
For this reason,  for every $\varepsilon> 0 $ and $ A \subset \Omega $ open, set we introduce 
\begin{equation*} 
   \begin{split}
      H(\varphi,A,\varepsilon):=  \sum_{\xi\in \Z^d}\sum_{x\in R^{\xi}_{\varepsilon}(A)} 
      f_{\xi,\varepsilon}\big(x,\nabla_{\xi}\varphi (x)\big).
   \end{split}
\end{equation*} 

For simplicity of notation, we will also denote
\begin{equation*} 
   \begin{split}
      H^{\xi}(\varphi,A,\varepsilon):= \sum_{x\in R^{\xi}_{\varepsilon}(A)} f_{\xi,\varepsilon}\big(x,\nabla_{\xi}\varphi (x)\big).
   \end{split}
\end{equation*} 

Moreover, let $\insieme{ e_1,\ldots,e_d  }$ be the standard basis of $\R^{d}$.
In this section, the functions $ f_{\xi,\varepsilon}$ will satisfy the followings
\begin{enumerate}[label=({C}\arabic*)  ] 
   \item \label{cond:positivita} $f_{\xi,\varepsilon}> 0 $;
   \item \label{cond:stima-alto} there exist constants $C_\xi $ such that
      \begin{equation*} 
         \begin{split}
            f_{\xi,\varepsilon}(x,s + t) \leq f_{\xi,\varepsilon}(x,s) +C_\xi(|t |^{p} +1);
         \end{split}
      \end{equation*} 
      where the constants $C_\xi $ satisfy
      \begin{equation*} 
         \begin{split}
            \sum_{\xi\in \Z^{d}}  C_\xi< +\infty;
         \end{split}
      \end{equation*} 
   \item \label{cond:primi-vicini} there exists a constant $C$ such that
      \begin{equation*} 
         \begin{split}
            f_{e_i,\varepsilon}(x,t) \geq C \max(|t |^p -1,0).
         \end{split}
      \end{equation*} 
\end{enumerate}

\begin{remark}
   The above conditions are natural in the Calculus of Variations. Very similar conditions appear also in \cite{MR2083851}. To my knowledge, they are the most general conditions for a representation formula in the discrete to continuum limit. 
\end{remark}

In order to simplify notation, throughout this paper, we  will denote by 
\begin{equation}
   \label{dVarphi}
   \d\varphi = \Pi_{x\in A_{\varepsilon}} \d\varphi(x),
\end{equation}
whenever it is clear from the context.

For every $A\in\Acal(\Omega)$, we define the free-energy as
\begin{equation} 
   \label{eq:def-free-energy}
   \begin{split}
      F(u,A,\kappa,\varepsilon) := - \frac{\varepsilon^{d}}{|A |} \log \int_{\Vcal(u,A,\kappa,\varepsilon )} \exp \Big(- H(\varphi,A, \varepsilon) \Big)\d\varphi\\
      F_{\infty }(u,A,\kappa,\varepsilon) := -\varepsilon ^{d} \log \int_{\Vcal_{\infty}(u,A,\kappa,\varepsilon)} \exp\Big(-H_{\infty}(\varphi,A, \varepsilon) \Big)\d\varphi,
   \end{split}
\end{equation} 
where 
\begin{equation*} 
   \begin{split}
      \Vcal(u,A,\kappa,\varepsilon  )&= \insieme{\varphi:A_{\varepsilon }\to  \R^m|\ \frac{\varepsilon ^{d}}{|A| 
            ^{d}}\sum_{x\in A_{\varepsilon}}|u-\varepsilon \varphi|^{p}\leq \kappa^{p} }\\
      \Vcal_{\infty }(u,A,\kappa,\varepsilon  )&= \insieme{\varphi:\varepsilon\Z^{d} \to  \R^m|\ \frac{\varepsilon ^{d}}{|A| 
            ^{d}}\sum_{x\in A_{\varepsilon}}|u-\varepsilon \varphi|^{p}\leq 
         \kappa^{p}, \text{ and } \varphi(x) = \varphi_{u,\varepsilon }(x)\ \forall x\not \in A_{\varepsilon} },
   \end{split}
\end{equation*} 
where $\varphi_{u,\varepsilon }$ is defined in \eqref{eq:def-varphi-X}.

\begin{remark}
   \label{rmk:spf1}
   From the physical point of view in the above expression it is natural to pass to the limit first for $\varepsilon\downarrow 0 $ and afterwords for $\kappa\downarrow 0$.  This is due to the close relation of the free-energy to the Large Deviation Principle. Due to the generality of our formulation it is not true in general that the above described limits exist. 
\end{remark}

In order to understand weather the limit first in $\varepsilon\downarrow 0$ and afterwords in $\kappa\downarrow 0$ (or up to sub-sequences) exists it is natural to introduce the following notations:
\begin{equation} 
   \label{eq:definizioni-F}
   \begin{split}
      F' (u,A,\kappa)&:= \liminf_{\varepsilon\downarrow 0} F(u,A,\kappa,\varepsilon)\\
      F''(u,A,\kappa)&:= \limsup_{\varepsilon\downarrow 0} F(u,A,\kappa,\varepsilon)\\
      F' (u,A)&:= \lim_{\kappa\downarrow 0}\liminf_{\varepsilon\downarrow 0} 
      F(u,A,\kappa,\varepsilon)=\lim_{\kappa\downarrow 0}F' (u,A,\kappa)\\
      F''(u,A)&:= \lim_{\kappa\downarrow 0}\limsup_{\varepsilon\downarrow 0} 
      F(u,A,\kappa,\varepsilon)=\lim_{\kappa\downarrow 0}F''(u,A,\kappa) \\
      F_{\infty }' (u,A,\kappa)&:= \liminf_{\varepsilon\downarrow 0} F_{\infty }(u,A,\kappa,\varepsilon)\\
      F_{\infty }''(u,A,\kappa)&:= \limsup_{\varepsilon\downarrow 0} F_{\infty }(u,A,\kappa,\varepsilon)\\
      F_{\infty }' (u,A)&:= \lim_{\kappa\downarrow 0}\liminf_{\varepsilon\downarrow 0} 
      F_{\infty }(u,A,\kappa,\varepsilon)=\lim_{\kappa\downarrow 0}F_{\infty }' (u,A,\kappa)\\
      F_{\infty }''(u,A)&:= \lim_{\kappa\downarrow 0}\limsup_{\varepsilon\downarrow 0} 
      F_{\infty }(u,A,\kappa,\varepsilon)=\lim_{\kappa\downarrow 0}F_\infty ''(u,A,\kappa) \\
   \end{split}
\end{equation}

One of the main steps will be to show that $F'_{\infty }=F'$ and that $F''_{\infty }=F''$. 
The basic intuition behind is the so called interpolation lemma, which is well-known in the $\Gamma$-convergence community. 
Very informally, what it says that if one imposes ``closeness'' in $L^{p}(A)$ to some regular function $u$, then one can also impose the boundary condition by ``paying a very small price in energy''. 
More precisely, given a sequence $\{ v_{n}\}$ such that $v_{n}\to u$ in $L^{p}(A)$, where $A$ is an open set, then there exists a sequence $\{\tilde{v}_{n}\}$, such that $\tilde{v}_n \to u$ in $L^{p}(A)$, $\tilde{v}_{n}|_{\partial \Omega} =u|_{\partial \Omega }$  and such that
\begin{equation*} 
   \begin{split}
      \liminf_{n} \int_A |\nabla \tilde{v}_{n}|^{2} \leq \liminf_{n} \int_A |\nabla {v}_{n}|^{2}.
   \end{split}
\end{equation*} 

\begin{rmk} 
   \label{rmk:simple1}
   \begin{enumerate} 
      \item 
         The functional $F(u,A,\kappa,\varepsilon) $ is monotonically decreasing in $\delta , \kappa> 0 $, i.e.  
         \begin{equation*} 
            \begin{split}
               F(u,A,\kappa,\varepsilon) \leq F(u,A,\kappa+\delta,\varepsilon ).
            \end{split}
         \end{equation*} 
         This justifies the outer limit in the formulas of \eqref{eq:definizioni-F}. 
         Moreover, the outer limit in the formulas in  in  \eqref{eq:definizioni-F} can be 
         substituted with the supremum \ie
         \begin{equation*} 
            \begin{split}
               F' (u,A)&:= \sup_{\kappa>0}\liminf_{\varepsilon\downarrow 0} F(u,A,\kappa,\varepsilon)=\sup_{\kappa >0}F' (u,A,\kappa),\\
               F''(u,A)&:= \sup_{\kappa>0}\limsup_{\varepsilon\downarrow 0} 
               F(u,A,\kappa,\varepsilon)=\sup_{\kappa >0}F''(u,A,\kappa) .
            \end{split}
         \end{equation*} 
      \item Let $A,B $ be two open sets such that $A\cap B =\emptyset$, then 
         from the definitions it is not difficult to prove that  
         \begin{equation*} 
            \begin{split}
               F'(u,A)+F'(u,B) = F'(u,A\cup B)\qquad \text{and}\qquad F''(u,A)+F''(u,B) = F''(u,A\cup B).
            \end{split}
         \end{equation*} 
      \item Whenever the function $u$ is linear and the functions $f_{\xi,\varepsilon}$ do not depend on $\varepsilon$ and the space variable $x$, it is well-known that $F' =F''$.
         In Theorem~\ref{thm:sobolev_homog_main}, we are going to prove a more general result, which contains as a particular case the previous claim.
   \end{enumerate} 
\end{rmk}

\begin{prop} 
   \label{prop:propriet-compact}
   The maps $F',F'' $ are lower semicontinuous with respect to the $L^{p}(A) $ convergence. 
   Moreover, there exists a sequence $\insieme{\varepsilon_{n}} $ such that
   \begin{equation} 
      \label{eq:propret-compat-1}
      \begin{split}
         F'_{\insieme{\varepsilon_{n}}}(u)= F''_{\insieme{\varepsilon_{n}}}(u),
      \end{split}
   \end{equation} 
   where 
   \begin{equation*} 
      \begin{split}
         F'_{\insieme{\varepsilon_{n}} }(u):= \lim_{\kappa \downarrow 0} \liminf_{n\to \infty} 
         F(u,A,\kappa,\varepsilon_{n})  \qquad\text{and} \qquad F''_{\insieme{\varepsilon_{n}}}(u):=\lim_{\kappa 
            \downarrow 0} \limsup_{n\to \infty} F(u,A,\kappa,\varepsilon_{n}) .
      \end{split}
   \end{equation*} 

\end{prop}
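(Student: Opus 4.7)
I would prove the claim for $F'$; the argument for $F''$ is identical, replacing $\liminf_\varepsilon$ by $\limsup_\varepsilon$. The key observation is that the constraint $\Vcal(u,A,\kappa,\varepsilon)$ is Lipschitz in $u$ with slack $\kappa$: writing $d_\varepsilon(u,v)$ for the discrete $L^p$-distance on $A_\varepsilon$ appearing inside $\Vcal$, the triangle inequality in $\ell^p$ gives
\[
d_\varepsilon(u,v)\leq\delta\quad\Longrightarrow\quad\Vcal(v,A,\kappa,\varepsilon)\subset\Vcal(u,A,\kappa+\delta,\varepsilon),
\]
and since enlarging the integration set decreases the free-energy in \eqref{eq:def-free-energy}, this forces $F(u,A,\kappa+\delta,\varepsilon)\leq F(v,A,\kappa,\varepsilon)$. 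Given $u_n\to u$ in $L^p(A)$, I would interpret values at lattice points via the cell averages of \eqref{eq:def-varphi-X}, so that Jensen's inequality yields $d_\varepsilon(u_n,u)\lesssim\|u_n-u\|_{L^p(A)}$ uniformly in small $\varepsilon$. Consequently, for any $\delta>0$ and all $n$ large, $F(u,A,\kappa+\delta,\varepsilon)\leq F(u_n,A,\kappa,\varepsilon)$ for every sufficiently small $\varepsilon$. Taking $\liminf_{\varepsilon\downarrow 0}$ and then $\liminf_n$ gives $F'(u,A,\kappa+\delta)\leq\liminf_n F'(u_n,A,\kappa)\leq\liminf_n F'(u_n,A)$, and finally letting $\kappa+\delta\downarrow 0$ produces the desired $F'(u,A)\leq\liminf_n F'(u_n,A)$.

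\textbf{Compactness along a subsequence.} Fix a countable dense subset $K\subset(0,\infty)$ containing a sequence tending to $0$. Since $L^p(A)$ is separable, a diagonal application of the standard compactness theorem for $\Gamma$-convergence (Dal Maso, \emph{An Introduction to $\Gamma$-Convergence}) furnishes a subsequence $\{\varepsilon_n\}$ such that for every $\kappa\in K$ the functional $F(\cdot,A,\kappa,\varepsilon_n)$ $\Gamma$-converges in $L^p(A)$ to an lsc limit $G_\kappa$. Testing the $\Gamma$-$\liminf$ inequality with the constant sequence $u_n\equiv u$ yields $G_\kappa(u)\leq F'_{\{\varepsilon_n\}}(u,A,\kappa)$. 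Conversely, given $\kappa^*>\kappa$ in $K$ and a recovery sequence $v_n\to u$ with $\limsup_n F(v_n,A,\kappa,\varepsilon_n)\leq G_\kappa(u)$, the uniform control on $d_{\varepsilon_n}(v_n,u)$ forces $d_{\varepsilon_n}(v_n,u)<\kappa^*-\kappa$ eventually; the Lipschitz inclusion then gives $F(u,A,\kappa^*,\varepsilon_n)\leq F(v_n,A,\kappa,\varepsilon_n)$, and passing to the $\limsup$ yields $F''_{\{\varepsilon_n\}}(u,A,\kappa^*)\leq G_\kappa(u)$. Chaining the two bounds,
\[
F''_{\{\varepsilon_n\}}(u,A,\kappa^*)\ \leq\ G_\kappa(u)\ \leq\ F'_{\{\varepsilon_n\}}(u,A,\kappa),
\]
and letting $\kappa<\kappa^*$ both tend to $0$ along $K$ — legal by the monotonicity of $F'(\cdot,A,\cdot)$ and $F''(\cdot,A,\cdot)$ in $\kappa$, which collapses $\lim_{\kappa\downarrow 0}$ onto any sequence in $K$ — yields $F''_{\{\varepsilon_n\}}(u)\leq F'_{\{\varepsilon_n\}}(u)$. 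The opposite inequality is trivial.

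\textbf{Expected main obstacle.} The conceptually delicate point is the familiar gap between pointwise and $\Gamma$-convergence: the $\Gamma$-$\limsup$ inequality only bounds $\limsup_n F(v_n,\ldots)$ along a recovery sequence rather than $\limsup_n F(u,\ldots)$ along the constant one, so $G_\kappa$ alone does not control $F''_{\{\varepsilon_n\}}$. What rescues the argument is the slack parameter $\kappa$: the Lipschitz-in-$u$ property of $\Vcal$ converts $\Gamma$-recovery information at level $\kappa$ into a pointwise $\limsup$ bound at the slightly larger level $\kappa^*$, and this loss is absorbed harmlessly in the outer limit $\kappa\downarrow 0$. The one technicality that needs genuine care is ensuring that the discrete $L^p$-distance $d_\varepsilon$ entering $\Vcal$ is uniformly (in $\varepsilon$) dominated by the continuous $L^p(A)$-distance used both for the hypothesis $u_n\to u$ and for the $\Gamma$-recovery sequences; the cell-average interpretation \eqref{eq:def-varphi-X} combined with Jensen's inequality is the natural bridge, and should be verified carefully before the main chain of inequalities is assembled.
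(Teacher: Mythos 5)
Your proof is correct. For lower semicontinuity you use the same mechanism as the paper — the inclusion $\Vcal(v,A,\kappa,\varepsilon)\subset\Vcal(u,A,\kappa+\delta,\varepsilon)$ for $\|u-v\|_{L^p(A)}$ small forces $F(u,A,\kappa+\delta,\varepsilon)\leq F(v,A,\kappa,\varepsilon)$, and one then passes to the supremum in $\kappa$ and then in $\delta$ — so that half of the argument is essentially the paper's. The compactness step, however, is a genuinely different route. The paper does not invoke $\Gamma$-compactness as a black box: it fixes a countable family of balls $\{B_i\}$ centred at a dense subset of $L^p(A)$ with rational radii, iteratively selects near-minimizers $u_i\in B_i$ of $F'$ along the current subsequence, refines the subsequence so that $F'(u_i,A)=F''(u_i,A)$, diagonalizes, and then for an arbitrary $u$ concludes by sending $\diam(B_i)\to 0$ through balls containing $u$ and invoking the just-proved lower semicontinuity of $F''_{\{\varepsilon_n\}}$. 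Your version replaces this hand-rolled diagonalization with the abstract $\Gamma$-compactness theorem on the separable space $L^p(A)$: extract $\Gamma$-limits $G_\kappa$ for a countable set of slack parameters $\kappa$ accumulating at $0$, then sandwich $F''_{\{\varepsilon_n\}}(\cdot,A,\kappa^*)\leq G_\kappa\leq F'_{\{\varepsilon_n\}}(\cdot,A,\kappa)$ by using the slack $\kappa^*-\kappa$ to upgrade the $\Gamma$-$\limsup$ along a recovery sequence into a pointwise $\limsup$ at the larger level, and let $\kappa<\kappa^*$ tend to $0$. The two arguments are close in spirit — the paper is essentially reproving, in situ, the fragment of $\Gamma$-compactness it needs — but yours is shorter and more modular, at the price of importing the general theorem. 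Your closing remark about the one technicality that deserves real care is on the money: the uniform-in-$\varepsilon$ domination of the discrete $\ell^p$ distance appearing inside $\Vcal$ by the continuous $L^p(A)$ distance is exactly what makes the Lipschitz-in-$u$ inclusion usable with a $\delta$ independent of $\varepsilon$; the paper leaves this implicit, as does your proposal, and it would be worth spelling out once.
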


\begin{proof} 
   Let us recall the notations
   \begin{equation*} 
      \begin{split}
         F'_{\{ \varepsilon_{k}\}}(u,A,\kappa)= \liminf_{n\to +\infty}F(u,A,\kappa,\varepsilon_{n})\quad \text{and}
                         \quad
         F''_{\{ \varepsilon_{k}\}}(u,A,\kappa)= \limsup_{n\to +\infty}F(u,A,\kappa,\varepsilon_{n}).\\ 
      \end{split}
   \end{equation*} 

   Using $F(v,A,\kappa,\varepsilon ) \geq F (u,A,\kappa +\delta,\varepsilon) $ where $\|u-v \|_{L^{p}(A)}<\delta $, one has that
   \begin{equation*} 
      \begin{split}
         F'(v,A,\kappa)=\liminf_{n\to \infty} F (u,A,\kappa ,\varepsilon_{n}) \geq
         \liminf_{n\to \infty} F (v,A,\kappa +\delta,\varepsilon_{n}) =F'(u,A,\kappa +\delta ).
      \end{split}
   \end{equation*} 
   Thus, 
   \begin{equation*} 
      \begin{split}
         \liminf_{v\to u} \sup_{\kappa >0} F'(u,A,\kappa) \geq \sup_{\kappa>0} 
         F'(u,A,\kappa+\delta)
      \end{split}
   \end{equation*} 
   and finally passing also to the supremum in $\delta $ one has that $F'$ is 
   lower semicontinuous.  The statement for $F''$ follows in a similar fashion.

   Fix $\Dscr $ a countable dense set in $L^{p}(A) $ and let $\Ucal $ be the set of all balls centered in the elements of $\Dscr$ with  radii in $[0,1]\cap \Q$.  
   Let us enumerate the balls in $\Ucal$, namely $\Ucal:=\{ B_{i}:\ i\in \N\}$. 

   Let $u_{1}\in B_{1}$ be such that  
   \begin{equation*} 
      \begin{split}
         F'(u_{1},A) \leq \inf_{u\in B_{1}}F'(u) + \diam(B_{1}). 
      \end{split}
   \end{equation*} 
   Let $\{ \varepsilon^{(1)}_{n}\}$ be the sequence such that 
   \begin{equation*} 
      \begin{split}
         F'(u_{1},A)=F''(u_{1},A)=\lim_{\kappa \downarrow 0}\lim_{n\to  \infty }F(u_{1},A,\kappa ,\varepsilon_{n}^{(1)}).
      \end{split}
   \end{equation*} 
   In a similar way as for $B_{1}$, let $u_{2}\in B_{2}$ be such that
   \begin{equation*} 
      \begin{split}
         F'_{\{ \varepsilon_{n}^{(1)}\}}(u_{2},A) \leq \inf_{u\in B_{2}}F'_{\{ \varepsilon_{n}^{(1)}\}}(u) + \diam(B_{2}).
      \end{split}
   \end{equation*} 
   Moreover, let $\{ \varepsilon ^{(2)}_{n}\}\subset \{ \varepsilon ^{(1)}_{n}\} $ be such that 
   \begin{equation*} 
      \begin{split}
         F'(u_{2},A)=\lim_{\kappa \downarrow 0}\lim_{n\to  \infty }F(u,A,\kappa ,\varepsilon_{n}^{(2)}).
      \end{split}
   \end{equation*} 
   By an induction procedure it is possible to produce  a subsequence 
   $\{ \varepsilon ^{(k+1)}_{n}\}\subset\{\varepsilon ^{(k)}_{n}\} $ such that 
   \begin{equation*} 
      \begin{split}
         F'(u_{k},A)=\lim_{\kappa \downarrow 0}\lim_{n\to  \infty }F(u_{k},A,\kappa ,\varepsilon_{n}^{(k)}),
      \end{split}
   \end{equation*} 
   where $u_{k}$ is chosen such that 
   \begin{equation*} 
      \begin{split}
         F'_{\{ \varepsilon_{n}^{(k+1)}\}}(u_{k+1},A) \leq \inf_{B_{k+1}}F'_{\{ \varepsilon_{n}^{(k)}\}} + \diam(B_{k+1}).
      \end{split}
   \end{equation*} 

   By a diagonal argument it is possible to chose a single sequence $\{ \varepsilon_k\}$, such that all the above are satisfied. 
   Because the second claim of the Proposition~\ref{prop:propriet-compact} consists in showing \eqref{eq:propret-compat-1} for a particular sequence, one can assume without loss of generality that it satisfies the above relations. 

   Let us now show that $F'_{\{ \varepsilon_{n}\}}=F''_{\{ \varepsilon_{n}\}}$. 
   From the definitions it is trivial that $F'_{\{ \varepsilon_{n}\}} \leq F''_{\{ \varepsilon_{n}\}}$. 
   Let us now show the opposite inequality. Fix $u$. 
   For every $i$ such that $u\in B_{i}$  we have that\footnote{by the above construction}
   \begin{equation*} 
      \begin{split}
         F'_{\{ \varepsilon_{n}\}}(u,A) + \diam(B_{i})\geq F'_{\{ \varepsilon_{n}\}}(u_{i},A) = F''_{\{ \varepsilon_{n}\}}(u_{i},A).
      \end{split}
   \end{equation*} 
   Passing to the limit for $i\to  \infty $, using the lower semicontinuity of $F''_{\{ \varepsilon_{n}\}}$, and the arbitrarity of $\diam(B_i)$, we have the desired result. 
\end{proof}

Fix $\Omega$ an open set, $\varepsilon >0$ and $u\in W^{1,p}(\R^{d})$ and let $\varphi_{u,\varepsilon }$ be defined by in \eqref{eq:def-varphi-X}.  

We are now able to write the main result in this section:
\begin{theorem} 
   \label{thm:main-representation-sobolev-statmech}
   Assume the above hypothesis.   Then for every infinitesimal sequence 
   $(\varepsilon_{n})$ there exists a subsequence $(\varepsilon_{n_k})$ and 
   there exists a function $W:\Omega \times \R^{d\times m} \to \R $ (depending 
   on $\{\varepsilon_{n_k}\}$) such that
   \begin{equation} 
      \label{eq:thm-main-representation-sobolev-statmech-1}
      \begin{split}
         F'_{\{\varepsilon_{n_k}\}}(u,A)= F''_{\{ \varepsilon_{n_k} \}}(u,A) =  \int_A 
         W(x,\nabla u) \dx.  
      \end{split}
   \end{equation} 
\end{theorem}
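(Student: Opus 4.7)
The strategy is to exhibit the limit functional as an integral by verifying the hypotheses of Theorem~\ref{thm:integral_represenatation_sobolev}. First I invoke Proposition~\ref{prop:propriet-compact} to extract from $\{\varepsilon_n\}$ a subsequence $\{\varepsilon_{n_k}\}$ along which $F'_{\{\varepsilon_{n_k}\}} = F''_{\{\varepsilon_{n_k}\}}$; denote their common value $\Fcal(u,A)$. For fixed $u\in W^{1,p}(\Omega)$ I then check that the set function $A\mapsto \Fcal(u,A)$ is the restriction of a Borel measure to $\Acal(\Omega)$, and that the functional $(u,A)\mapsto \Fcal(u,A)$ has the required locality, growth, translation invariance, and lower semicontinuity.

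For the measure property I apply the De Giorgi--Letta criterion (Theorem~\ref{thm:degiorgiLetta}): (DL1) is trivial, (DL3) (superadditivity on disjoint open sets) is the second item of Remark~\ref{rmk:simple1}, while (DL2) and (DL4) are the technical core, discussed below. Locality is immediate since $H(\cdot,A,\varepsilon)$ and $\Vcal(u,A,\kappa,\varepsilon)$ both depend only on the restriction of their arguments to $A_\varepsilon$; translation invariance in $u$ follows from the change of variables $\varphi\mapsto \varphi+z/\varepsilon$ in the partition function together with the fact that $H$ depends on $\varphi$ only through the discrete differences $\nabla_\xi\varphi$; lower semicontinuity of $\Fcal(\cdot,A)$ in the $L^p$-topology (hence \emph{a fortiori} in the weak $W^{1,p}$-topology) is provided by Proposition~\ref{prop:propriet-compact}. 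For the growth condition I use $\varphi_{u,\varepsilon}$ as a reference configuration: integrating in the partition function over a small ball around $\varphi_{u,\varepsilon}\in \Vcal(u,A,\kappa,\varepsilon)$ and invoking hypothesis~\ref{cond:stima-alto} with $s=0$, together with \eqref{eq:converngence-of-discrete-sobolev-norm} and the summability $\sum_\xi C_\xi<\infty$, yields an upper bound of the form $c\int_A(a(x)+|\nabla u|^p)\,\dx$ with $a\in L^1(\Omega)$ independent of $\varepsilon$, which passes to the limit.

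The main obstacle is the subadditivity (DL2), together with the closely related inner regularity (DL4). Here I rely on the interpolation argument advertised after \eqref{eq:definizioni-F}, which identifies $F'$ with $F'_\infty$ and $F''$ with $F''_\infty$. Given $A'\Subset A$ and $B'\Subset B$ whose thin overlap covers $A\cup B$, I switch from $\Vcal$ to $\Vcal_\infty$ on $A'$ and $B'$, pinning the field to $\varphi_{u,\varepsilon}$ on the corresponding boundary strips; this decouples the two partition functions up to a boundary contribution whose Hamiltonian cost, controlled through~\ref{cond:stima-alto} and the summability $\sum_\xi C_\xi<\infty$, is uniform in $\varepsilon$ and vanishes as the shell thickness shrinks. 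Combining the resulting factorisation with the equality $F'=F''$ along $\{\varepsilon_{n_k}\}$ produces $\Fcal(u,A\cup B)\leq \Fcal(u,A)+\Fcal(u,B)$, and the same scheme with a single shell $A'\Subset A$ delivers (DL4). With all the hypotheses of Theorem~\ref{thm:integral_represenatation_sobolev} in force, that theorem yields a Carath\'eodory integrand $W(x,M)$ for which \eqref{eq:thm-main-representation-sobolev-statmech-1} holds.
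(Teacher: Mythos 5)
Your proposal matches the paper's strategy: extract a subsequence via Proposition~\ref{prop:propriet-compact} so that $F'=F''$, then verify hypotheses (i)--(v) of Theorem~\ref{thm:integral_represenatation_sobolev}, with the measure property obtained through the De~Giorgi--Letta criterion using the subadditivity, superadditivity and inner regularity lemmas. The only superficial difference is in how you frame the decoupling step: you phrase subadditivity and inner regularity in terms of switching from $\Vcal$ to $\Vcal_\infty$ and pinning on thin boundary strips, whereas the paper runs the shell-slicing and pigeonhole argument of Lemma~\ref{lemma:regularity} directly (and proves Lemma~\ref{lemma:cambio-dati-bordo} by the same slicing machinery); these are two presentations of one and the same mechanism, so the proofs are essentially identical.
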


\subsection{Proofs} 

The next technical lemma asserts that finite difference quotients along any direction can be controlled by finite difference quotients along the coordinate directions:

\begin{lemma}[{\cite[Lemma~3.6]{MR2083851}}]
   \label{lemma:zig-zag-sobolev} 
   Let $A\in {\Acal}(\Omega)$ and set $A_\varepsilon=\{x\in A:\ \dist(x,\partial A)>2\sqrt {d}\varepsilon\}$. Then for any $\xi\in\Z^{d}$ and
   $\varphi: A_\varepsilon \to \R^{m}$, it holds
   \begin{eqnarray}\label{stimadiff}
      \sum_{x\in R^{\xi}_{\varepsilon}(A)} \Big|\frac{\varphi(x+\varepsilon\xi) -\varphi(x)}{|\xi|}\Big|^p\leq C
      \sum_{i=1}^d\sum_{x\in R^{e_i}_{\varepsilon}(A)}|\nabla_{i}\varphi(x)|^p,
   \end{eqnarray}
   where the constant $C $ is independent of $\xi $.  
\end{lemma}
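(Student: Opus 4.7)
The plan is a classical \emph{zig-zag decomposition}. Given $\xi=(\xi_{1},\dots,\xi_{d})\in\Z^{d}$, let $\sigma_{i}=\sign(\xi_{i})$ and $|\xi|_{1}=\sum_{i}|\xi_{i}|$. I would connect $x$ to $x+\varepsilon\xi$ by a fixed polygonal path of $|\xi|_{1}$ unit coordinate steps: first $|\xi_{1}|$ steps in direction $\sigma_{1}e_{1}$, then $|\xi_{2}|$ steps in direction $\sigma_{2}e_{2}$, and so on, producing points $y_{0}=x,\,y_{1},\dots,\,y_{|\xi|_{1}}=x+\varepsilon\xi$ with each increment $y_{k}-y_{k-1}=\sigma_{i_{k}}\varepsilon e_{i_{k}}$ for a coordinate direction $e_{i_{k}}$. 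Telescoping $\varphi(x+\varepsilon\xi)-\varphi(x)$ along this path and applying the discrete Jensen (power-mean) inequality with $|\xi|_{1}$ terms gives
\begin{equation*}
\Big|\frac{\varphi(x+\varepsilon\xi)-\varphi(x)}{|\xi|}\Big|^{p}\leq \frac{|\xi|_{1}^{p-1}}{|\xi|^{p}}\sum_{k=1}^{|\xi|_{1}}|\varphi(y_{k})-\varphi(y_{k-1})|^{p},
\end{equation*}
and each summand on the right equals $|\nabla_{i_{k}}\varphi(z_{k})|^{p}$ for some base point $z_{k}$ (with a harmless unit shift of the base point absorbing the sign when $\sigma_{i_{k}}=-1$).

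Next I would sum over $x\in R^{\xi}_{\varepsilon}(A)$ and perform a multiplicity count. With the path rule fixed as above, a given coordinate edge $(z,z+\varepsilon e_{i})$ occurs on the path starting from $x$ for at most $|\xi_{i}|$ distinct values of $x$, namely one for each position it can occupy within the $e_{i}$-segment of the path. Hence
\begin{equation*}
\sum_{x\in R^{\xi}_{\varepsilon}(A)}\sum_{k=1}^{|\xi|_{1}}|\varphi(y_{k})-\varphi(y_{k-1})|^{p}\leq \sum_{i=1}^{d}|\xi_{i}|\sum_{z\in R^{e_{i}}_{\varepsilon}(A)}|\nabla_{i}\varphi(z)|^{p}\leq |\xi|_{1}\sum_{i=1}^{d}\sum_{z\in R^{e_{i}}_{\varepsilon}(A)}|\nabla_{i}\varphi(z)|^{p}.
\end{equation*}
Combining with the Jensen step produces a total prefactor $|\xi|_{1}^{p}/|\xi|^{p}$, and the elementary bound $|\xi|_{1}\leq\sqrt{d}\,|\xi|$ yields $C=d^{p/2}$, independent of $\xi$ as claimed.

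I expect the main obstacle to be precisely the $\xi$-uniformity of $C$: the Jensen step loses a factor $|\xi|_{1}^{p-1}$ that must be cancelled \emph{exactly} by the length of the telescoping path picked up by the summation-swap, leaving only the dimensionless ratio $|\xi|_{1}/|\xi|$ raised to the $p$-th power. A secondary but purely bookkeeping point is making sure each intermediate edge of the zig-zag from $\alpha\in R^{\xi}_{\varepsilon}(A)$ remains an edge counted by $R^{e_{i}}_{\varepsilon}(A)$; the $2\sqrt{d}\varepsilon$ buffer in the definition of $A_{\varepsilon}$ is precisely what provides the room needed for this, without affecting the final constant.
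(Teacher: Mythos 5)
Your Jensen-then-multiplicity scheme is the right skeleton, and the final arithmetic---Jensen costs $|\xi|_1^{p-1}/|\xi|^p$, the edge count recovers a factor $|\xi|_1$, leaving $(|\xi|_1/|\xi|)^p\le d^{p/2}$---produces exactly the $\xi$-uniform constant the lemma requires. The gap is the point you dismiss as ``purely bookkeeping'': the lexicographic path does not fit inside the stated buffer. With all $e_1$-steps first, take $\xi=(n,n,0,\dots,0)$: the corner $x+n\varepsilon e_1$ lies at Euclidean distance $n\varepsilon/\sqrt{2}$ from the segment $[x,x+\varepsilon\xi]$. This deviation grows linearly in $|\xi|$, so it overshoots the fixed $2\sqrt{d}\,\varepsilon$ buffer once $n$ is moderate, and the intermediate edges of your path are then not counted by $R^{e_i}_\varepsilon(A)$; the right-hand side no longer controls them. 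Note that the paper gives no proof of this lemma, only the citation, so the comparison must be against the source.

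There are two ways to close the gap. In the cited source (Alicandro--Cicalese, Lemma~3.6) the buffer is in fact $2\sqrt{d}\,\varepsilon|\xi|$ and the left-hand sum runs over the correspondingly shrunk set, so the buffer is $\xi$-dependent; with that statement your lexicographic path stays inside the axis-aligned bounding box of the segment, whose diameter is $\varepsilon|\xi|$, hence inside $A$, and your argument goes through verbatim. If one insists on a $\xi$-independent buffer as written here, the lexicographic order must be traded for a staircase (Bresenham-type) path: take $y_k=x+\varepsilon\zeta_k$ with $\zeta_k\in\Z^d$ a coordinatewise rounding of $\tfrac{k}{|\xi|_1}\xi$, rounded consistently so that $\|\zeta_k\|_1=k$ and consecutive $\zeta_k$ differ by a signed unit coordinate vector. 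Then every vertex and edge of the path lies within Euclidean distance $\sqrt{d}\,\varepsilon$ of the segment, hence inside $A$ whenever $[x,x+\varepsilon\xi]\subset A_\varepsilon$. For any fixed deterministic path rule with exactly $|\xi_i|$ steps in direction $e_i$, the Jensen step and the $|\xi_i|$-to-one multiplicity count are identical to yours, so only the geometry of the path changes and the constant $d^{p/2}$ survives.
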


%
%
%

The following lemma is a simple modification of \cite[Lemma~A.1]{LK-comm-math-phys}:

\begin{lemma}
   \label{lemma:kotecky-luckhaus_A_1}
   Let $A\subset \R^d$ and let $g:\R\to [0,+\infty]$ be such that  
   \begin{equation*}
	   \int_{\R} \exp(-g(t)) \dt =:c.
   \end{equation*}
   Then there exists $\varepsilon_{0}$ such that for every $\varepsilon \leq \varepsilon_{0}$, it holds
   \begin{equation*} 
      \begin{split}
         \int_{\Vcal_{\infty}(0,A,\kappa,\varepsilon )} \exp( -\sum_{x\in A_{\varepsilon} }\sum_{i=1}^{d} g(|\nabla_{i} \varphi(x)|) ) \d\varphi \leq     \exp\Big(C|A_{\varepsilon}|\log(c)\Big).  
      \end{split}
   \end{equation*} 
   Moreover,
   \begin{equation*} 
      \begin{split}
         \int_{\Vcal(0,A,\kappa,\varepsilon )} \exp( -\sum_{i=1}^{d}\sum_{R^{e_i}_{\varepsilon}} g(|\nabla_{i} \varphi(x)|) ) \d\varphi \leq  \frac{C|A_{\varepsilon} |}{\varepsilon}    \exp\Big(C|A_{\varepsilon}|\log(c)\Big).  
      \end{split}
   \end{equation*} 
\end{lemma}

\begin{proof} 
   The second statement is contained in \cite[Lemma~A.1]{LK-comm-math-phys}. 
   We will show now how to obtain the first statement which is also a simple modification of  \cite[Lemma~A.1]{LK-comm-math-phys}. 
   To see how to get the above statement it is enough to find a connected curve $\Gamma$ such that visits all the points exactly once, whose initial point is outside of $A_{\varepsilon}$ and afterwards never leaves $A_{\varepsilon}$. As this curve can be seen as a graph we have that
   \begin{equation*}
      \sum_{x\in A_{\varepsilon}} \sum_{i=1} ^d g(\nabla_{i} \varphi) \geq \sum_{e \in \mathrm{Edg}(\Gamma)} g(|\nabla_e \varphi|)
   \end{equation*}
   By using the above we have that
   \begin{equation*}
      \int_{\Vcal(0,A,\kappa,\varepsilon )} \exp( -\sum_{x\in A_{\varepsilon} }\sum_{i=1}^{d} g(|\nabla_{i} \varphi(x)|) ) \d\varphi \leq
      \int_{\R^{|\Gamma|}} \exp\big( -\sum_{e\in \mathrm{Edges}(\Gamma) } g(|\nabla_{e}\varphi| )\big) \d\varphi
   \end{equation*}
   from which the claim follows. 
\end{proof} 

Let $G^{\lambda}$ be the free-energy (see \eqref{eq:def-free-energy} for the definition) induced by the Hamiltonian
\begin{equation*} 
   \begin{split}
      \tilde{H}^{\lambda}(\varphi,A,\varepsilon):=\lambda \sum_{i=1}^{d}\sum_{x\in R_{\varepsilon }^{e_{i}}(A)} |\nabla_{i}\varphi |^{p}.
   \end{split}
\end{equation*}

\begin{lemma} 
   \label{lemma:utile-stime-g}
   Let $A$ be an open set.  Then, there exist constants $C_{\lambda},D_{\lambda}$, such that it holds
   \begin{equation*} 
      \begin{split}
         |A| C_{\lambda } \leq  \liminf_{\varepsilon\downarrow 0}G^{\lambda }(0,A,\kappa ,\varepsilon ) \leq \limsup_{\varepsilon\downarrow 0}G^{\lambda }(0,A,\kappa ,\varepsilon )  \leq D_{\lambda } |A|
      \end{split}
   \end{equation*} 
\end{lemma}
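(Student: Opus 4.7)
The plan is to establish the upper and lower bounds separately; the upper bound is constructive, while the lower bound follows from Lemma~\ref{lemma:kotecky-luckhaus_A_1}. For the upper bound, I would exhibit a subset $B$ of $\Vcal(0,A,\kappa,\varepsilon)$ on which $\tilde H^\lambda$ is uniformly bounded and whose Lebesgue measure can be computed explicitly. Fixing a reference vertex $x_0\in A_\varepsilon$, set
\begin{equation*}
   B:=\Big\{\varphi:A_\varepsilon\to\Rm \ :\ |\varphi(x_0)|\le\tfrac{\kappa}{2\varepsilon},\ |\varphi(x)-\varphi(x_0)|\le 1\ \forall x\in A_\varepsilon\Big\}.
\end{equation*}
For $\varepsilon$ small, every $\varphi\in B$ satisfies $|\varepsilon\varphi(x)|\le\kappa$, so $B\subset\Vcal(0,A,\kappa,\varepsilon)$. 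The linear change of variables $\varphi\mapsto(\varphi(x_0),(\varphi(x)-\varphi(x_0))_{x\neq x_0})$ has unit Jacobian, giving $|B|=\omega_m^{|A_\varepsilon|}(\kappa/(2\varepsilon))^m$, where $\omega_m$ denotes the volume of the unit ball in $\Rm$. Since every discrete gradient on $B$ satisfies $|\nabla_{e_i}\varphi(x)|\le 2$, one has $\tilde H^\lambda(\varphi,A,\varepsilon)\le 2^p\lambda d|A_\varepsilon|$, yielding $\log Z\ge\log|B|-2^p\lambda d|A_\varepsilon|$. Multiplying by $-\varepsilon^d/|A|$ and using $\varepsilon^d|A_\varepsilon|\to|A|$ together with $\varepsilon^d\log\varepsilon\to 0$ gives $\limsup_\varepsilon G^\lambda(0,A,\kappa,\varepsilon)\le 2^p\lambda d-\log\omega_m=:D_\lambda$.

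For the lower bound I would discard all directions but $e_1$ in the Hamiltonian,
\begin{equation*}
   Z\le\int_{\Vcal(0,A,\kappa,\varepsilon)}\exp\Big(-\lambda\sum_{x\in R_\varepsilon^{e_1}(A)}|\nabla_1\varphi(x)|^p\Big)\d\varphi,
\end{equation*}
and apply Lemma~\ref{lemma:kotecky-luckhaus_A_1} with $g(v):=\lambda|v|^p$ and $c(\lambda):=\int_{\Rm}\exp(-\lambda|v|^p)\d v<\infty$. This produces $Z\le C|A_\varepsilon|\kappa\varepsilon^{-1}\exp(|A_\varepsilon|\log c(\lambda))$; taking $-\varepsilon^d\log(\cdot)/|A|$, the prefactor contributes a term of order $\varepsilon^d\log\varepsilon\to 0$ while the exponential contributes $-\log c(\lambda)+o(1)$, so $\liminf_\varepsilon G^\lambda(0,A,\kappa,\varepsilon)\ge-\log c(\lambda)=:C_\lambda$.

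The only subtle point is the $\Rm$-valued version of Lemma~\ref{lemma:kotecky-luckhaus_A_1}, as its statement reads most naturally in the scalar setting. However, the proof proceeds by integrating $\varphi$ line-by-line in the direction $e_1$ after the linear change of variables from node values to successive gradients, and the only integrability hypothesis actually needed is $\int_{\Rm}\exp(-g(v))\d v<\infty$, which is satisfied here. I do not anticipate a genuine obstacle.
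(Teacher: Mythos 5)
Your argument is correct. The lower bound coincides with the paper's: both reduce to Lemma~\ref{lemma:kotecky-luckhaus_A_1} after discarding all directions except $e_1$, and you are right that its proof transfers verbatim to $\R^m$-valued fields since the only hypothesis used is $\int_{\R^m}\exp(-g(|v|))\,\d v<\infty$.

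For the upper bound your route differs from the paper's in a small but genuine way. The paper bounds the gradient energy pointwise by a \emph{separable} node-value energy, $\tilde H^\lambda(\varphi)\le 2^{p-1}d\,\lambda\sum_{x\in A_\varepsilon}|\varphi(x)|^p$, then restricts $Z$ to the $L^\infty$-ball $\{|\varepsilon\varphi|\le\kappa\}\subset\Vcal$ and factors the integral over the sites by Fubini, so that $D_\lambda$ arises as $-\log\int_{\R^m}\exp(-2^{p-1}d\lambda|t|^p)\,\d t$. You instead exhibit the smaller, anchored set $B$ on which $\tilde H^\lambda$ is uniformly bounded by $2^p\lambda d|A_\varepsilon|$ and compute $|B|$ explicitly, so that $D_\lambda=2^p\lambda d-\log\omega_m$. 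Your version avoids the separability/Fubini step and is arguably more transparent, at the cost of producing a slightly larger constant $D_\lambda$; since the lemma only claims existence of some $D_\lambda$, this is immaterial. Both arguments correctly reduce the $\varepsilon$-asymptotics to $\varepsilon^d\log\varepsilon\to 0$ and $\varepsilon^d|A_\varepsilon|\to|A|$.

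One small remark: the inclusion $B\subset\Vcal(0,A,\kappa,\varepsilon)$ (and likewise the paper's $\{|\varepsilon\varphi|\le\kappa\}\subset\Vcal$) tacitly takes the normalization in the definition of $\Vcal$ to be $\varepsilon^d/|A|$ rather than the printed $\varepsilon^d/|A|^d$, which is almost certainly a typo in the paper; under that reading the inclusion is correct for $\varepsilon$ small, as you claim.
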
 

\begin{proof} 
   Let us prove now the upper bound, namely
   \begin{equation} 
      \label{stima-alto-g}
      \begin{split}
         G^{\lambda }(0,A,\kappa ,\varepsilon ) \leq D_{\lambda } |A |.
      \end{split}
   \end{equation} 
   Let us  observe  that
   \begin{equation} 
      \label{eq:04181397841972}
      \begin{split}
         \tilde{H}^{\lambda}(\varphi ,A,\varepsilon )\leq 2^{p-1}d\cdot\lambda \sum_{x\in A_{\varepsilon}} |\varphi (x)|^{p},
      \end{split}
   \end{equation} 
   hence
   \begin{equation*} 
      \begin{split}
         \int_{\Vcal(0,A,\kappa ,\varepsilon )} \exp\left(-\tilde{H}^{\lambda }(\varphi ,A,\varepsilon )\right) \geq \int_{\{ \varphi :\ |\varepsilon \varphi | \leq \kappa \}} \exp\left( -\sum_{x\in A_{\varepsilon}}|\varphi (x)|^{p}\right).
      \end{split}
   \end{equation*} 
   Thus, by using the Fubini Theorem,  we have that
   \begin{equation*} 
      \begin{split}
         \int_{\Vcal(0,A,\kappa ,\varepsilon )} \exp\left(-\tilde{H}^{\lambda }(\varphi ,A,\varepsilon )\right) \geq \exp\left( -\varepsilon ^{-d} D_{\varepsilon,\varepsilon} |A|\right),
      \end{split}
   \end{equation*} 
   where
   \begin{equation*} 
      \begin{split}
         D_{\varepsilon,\lambda}:= - \log \int_{|t|\leq \kappa/\varepsilon}\exp\left( -t^{p}\right).
      \end{split}
   \end{equation*} 
   Using the definition of the free-energy, and the fact that $D_{\varepsilon,\lambda} \to - \log \int_{\R}\exp\left( -t^{p}\right)$, we have the first part of the claim. 

   The second inequality is implied by Lemma~\ref{lemma:kotecky-luckhaus_A_1}.
\end{proof}

\begin{lemma} 
   \label{lemma:stima-basso}
   Let $\insieme{f_{\xi,\varepsilon }} $ satisfy our hypothesis.
   Then there exists a constant $D$ such that for every $\kappa < 1$, one has that
   \begin{equation} 
      \label{eq:stima-basso-1}
      \begin{split}
         \exp\left(-\varepsilon ^{-d} F(u,A,\kappa ,\varepsilon )\right) \leq  
         \exp\left( D |A| \varepsilon ^{-d} + D 
            \sum_{i=1}^{d}\sum_{x\in R^{e_{i}}_{\varepsilon }(A)} |\nabla_{e_i}\varphi_{u,\varepsilon } (x) |^{p}  \right),
      \end{split}
   \end{equation} 
   where $\varphi_{u,\varepsilon }$.
\end{lemma}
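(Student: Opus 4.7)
\medskip

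The plan is to lower-bound the Hamiltonian $H(\varphi,A,\varepsilon)$ in terms of a Hamiltonian centered at the interpolant $\varphi_{u,\varepsilon}$ plus the ``target'' term $\sum_i\sum_x |\nabla_i \varphi_{u,\varepsilon}|^p$, and then absorb the centered Gaussian-type partition function using condition \ref{cond:primi-vicini} together with Lemma~\ref{lemma:kotecky-luckhaus_A_1}. More concretely, I would perform the affine change of variables $\psi = \varphi - \varphi_{u,\varepsilon}$ so that $\nabla_\xi \varphi = \nabla_\xi \psi + \nabla_\xi \varphi_{u,\varepsilon}$, and then apply hypothesis~\ref{cond:stima-alto} in the form
\[
f_{\xi,\varepsilon}(x,\nabla_\xi\psi) \;\le\; f_{\xi,\varepsilon}(x,\nabla_\xi\varphi) + C_\xi\bigl(|\nabla_\xi\varphi_{u,\varepsilon}|^p + 1\bigr),
\]
so that upon rearranging and summing in $x$ and $\xi$ one obtains
\[
H(\varphi,A,\varepsilon) \;\ge\; H(\psi,A,\varepsilon) \;-\; \sum_{\xi\in\Z^d} C_\xi \!\!\sum_{x\in R^\xi_\varepsilon(A)} \bigl(|\nabla_\xi\varphi_{u,\varepsilon}|^p + 1\bigr).
\]

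Next I would invoke Lemma~\ref{lemma:zig-zag-sobolev} uniformly in $\xi$ to replace $\sum_{x}|\nabla_\xi\varphi_{u,\varepsilon}|^p$ by a constant multiple of $\sum_i \sum_x |\nabla_i\varphi_{u,\varepsilon}|^p$, and then use the summability $\sum_\xi C_\xi < \infty$ guaranteed by \ref{cond:stima-alto} to reduce the double sum to a single constant times $\sum_i\sum_x |\nabla_i \varphi_{u,\varepsilon}|^p$. The trivial cardinality bound $|R^\xi_\varepsilon(A)| \le |A|\varepsilon^{-d}$ together with $\sum_\xi C_\xi < \infty$ controls the ``$+1$'' term by a constant multiple of $\varepsilon^{-d}$. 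This gives exactly the lower bound
\[
H(\varphi,A,\varepsilon) \;\ge\; H(\psi,A,\varepsilon) - D_1 \varepsilon^{-d} - D_2 \sum_{i=1}^d\sum_{x\in R^{e_i}_\varepsilon(A)} |\nabla_i\varphi_{u,\varepsilon}|^p
\]
with $D_1,D_2$ independent of $\varepsilon$.

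Plugging this into the definition of $F$, the partition function satisfies
\[
\int_{\Vcal(u,A,\kappa,\varepsilon)}\!\! e^{-H(\varphi,A,\varepsilon)}\,\mathrm d\varphi
\;\le\; e^{D_1\varepsilon^{-d} + D_2 \sum_i\sum_x |\nabla_i\varphi_{u,\varepsilon}|^p}
\int_{\widetilde \Vcal}\! e^{-H(\psi,A,\varepsilon)}\,\mathrm d\psi,
\]
where $\widetilde\Vcal$ is the translate of $\Vcal(u,A,\kappa,\varepsilon)$. To bound the remaining integral I would discard all contributions with $\xi\ne e_i$ (they are nonnegative by~\ref{cond:positivita}) and use~\ref{cond:primi-vicini} to bound $H(\psi,A,\varepsilon) \ge C\sum_i\sum_x(|\nabla_i\psi|^p - 1)$, and then apply Lemma~\ref{lemma:kotecky-luckhaus_A_1} (equivalently, the upper half of Lemma~\ref{lemma:utile-stime-g}) with $g(t) = C|t|^p$ to obtain $\int e^{-H(\psi,A,\varepsilon)}\,\mathrm d\psi \le e^{D_3 \varepsilon^{-d}}$. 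Collecting all constants into a single $D$ yields the claim.

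The main obstacle is step three: one needs to check that after the translation $\varphi \mapsto \psi$ the constraint set $\widetilde\Vcal$ is still a set of the form to which Lemma~\ref{lemma:kotecky-luckhaus_A_1} applies, i.e.\ essentially a $\Vcal(0,A,\kappa',\varepsilon)$ with some $\kappa' \lesssim \kappa$. This amounts to showing that the discretization error $\|u - \varepsilon\varphi_{u,\varepsilon}\|_{\ell^p(A_\varepsilon)}$ is comparable to, or smaller than, $\kappa$ when $\varepsilon$ is small, which follows from the definition~\eqref{eq:def-varphi-X} since $\varepsilon\varphi_{u,\varepsilon}$ is a local average of $u$; together with the triangle inequality this gives $\widetilde \Vcal \subset \Vcal(0,A,2\kappa,\varepsilon)$ for $\varepsilon$ small, which is all that is needed.
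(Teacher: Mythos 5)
Your proof is correct, and it arrives at the same end-point as the paper (the reduction to Lemma~\ref{lemma:utile-stime-g} / Lemma~\ref{lemma:kotecky-luckhaus_A_1}), but the first step takes a genuinely different route. You relate $H(\varphi)$ to $H(\psi)$ directly via hypothesis~\ref{cond:stima-alto}, \ie $f_{\xi,\varepsilon}(x,\nabla_\xi\psi)\le f_{\xi,\varepsilon}(x,\nabla_\xi\varphi)+C_\xi(|\nabla_\xi\varphi_{u,\varepsilon}|^p+1)$, keeping the full interaction across all $\xi$, and then invoke Lemma~\ref{lemma:zig-zag-sobolev} together with $\sum_\xi C_\xi<\infty$ to compress the error term to nearest neighbours. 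The paper instead truncates to nearest-neighbour interactions immediately: it uses positivity~\ref{cond:positivita} plus the lower bound~\ref{cond:primi-vicini} to write $H(\varphi)\ge C_1\sum_i\sum_x|\nabla_{e_i}\varphi|^p$, and only then splits $\nabla_{e_i}\varphi=\nabla_{e_i}\psi+\nabla_{e_i}\varphi_{u,\varepsilon}$ via the elementary inequality $\|b-a\|^p\ge 2^{1-p}\|a\|^p-\|b\|^p$. The paper's version is a bit leaner (no need for the zig-zag lemma here, only~\ref{cond:positivita} and~\ref{cond:primi-vicini}), whereas yours exploits~\ref{cond:stima-alto} in the way it is designed to be used, which makes the cancellation structure more transparent and is closer in spirit to the sibling Lemma~\ref{lemma:stima-alto}. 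Your final paragraph about the translated constraint set $\widetilde\Vcal$ correctly identifies a point the paper passes over silently; it is harmless for exactly the reason you give, namely that the bound from Lemma~\ref{lemma:kotecky-luckhaus_A_1} depends on $\kappa$ only through a factor $\kappa\varepsilon^{-1}$ inside a logarithm, which is negligible against $\varepsilon^{-d}$.
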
 

\begin{proof} 
   Given that $\|b-a\|^{p} \geq 2^{1-p}\|a \|^{p} - \|b \|^{p} $ one has 
   that there exists a constant $C_1$  such that 
   \begin{equation*} 
      \begin{split}
         H(\varphi ,A,\varepsilon ) \geq C_{1}\sum_{i=1}^{d}\sum_{x\in R^{e_{i} 
            }_{\varepsilon }{(A)}} |\nabla_{e_i}\varphi(x)  |^{p} \geq 
         C_{1} \sum_{i=1}^{d}\sum_{x\in R^{e_{i}  }_{\varepsilon }{(A)}} 
         |\nabla \psi |^{p} -C_{1} \sum_{i=1}^{d}\sum_{x\in R^{e_{i} 
            }_{\varepsilon }{(A)}} |\nabla_{e_i}(\varphi_{ u,\varepsilon} )(x)  |^{p},
      \end{split}
   \end{equation*} 
   where $\psi = \varphi - \varphi_{u,\varepsilon}$ and $\varphi_{u,\varepsilon}$ is defined in \eqref{eq:def-varphi-X}. 
   Hence, the estimate \eqref{eq:stima-basso-1} reduces to prove that there exists a constant $D$ such that 
   \begin{equation*} 
      \begin{split}
         \int_{\Vcal(0,A,\kappa ,\varepsilon )} \exp\Big(- C 
         \sum_{i=1}^{d}\sum_{x\in R^{e_{i} }_{\varepsilon }{(A)}} |\nabla_{e_i}\varphi |^p\Big) \leq 
         \exp\left( D|A| \varepsilon ^{-d}\right).
      \end{split}
   \end{equation*} 
   The above inequality was proved in Lemma~\ref{lemma:utile-stime-g}.

\end{proof}

\begin{remark} 
   \label{rmk:almost-monotonicity}
   A simple consequence of the reasoning done in Lemma~\ref{lemma:stima-basso}, is that there exists a constant $C$ such that 
   \begin{equation*} 
      \begin{split}
         A\mapsto F'(u,A) + C(\|\nabla u \|^p_{L^{p}(A)} +|A| ) \qquad  A\mapsto F''(u,A)+ C(\|\nabla u \|^p_{L^{p}(A)} +|A| )
      \end{split}
   \end{equation*} 
   are monotone with respect to the inclusion relation \ie for every $A\subset B$  it holds that 
   \begin{equation} 
      \label{eq:monotonicity_hypthesis}
      \begin{split}
         F'(u,A) + C(\|\nabla u \|^p_{L^{p}(A)} +|A| ) \leq F'(u,B) + C(\|\nabla u \|^p_{L^{p}(B)} +|B|).
      \end{split}
   \end{equation} 

   Without loss of generality we may assume that $F'$ and $F''$ satisfy \eqref{eq:monotonicity_hypthesis}. Indeed, there is a representation formula as in the claim of Theorem~\ref{thm:main-representation-sobolev-statmech} for 
   $$F_{\{ \varepsilon_{n_k}\}}(u,A) + \|\nabla u \|^p_{L^p(A)} + |A|,$$
   if and only if there is a representation formula for $F_{\{ \varepsilon_{n_k}\}}(u,A)$. 

   To make analysis less technical, we will assume \eqref{eq:monotonicity_hypthesis} will be assumed. 
   
\end{remark}

\begin{lemma} 
   \label{lemma:stima-alto}
   Let $f_{\xi ,\varepsilon }$ satisfy our hypothesis and let $A $ be an open 
   set. Then there exists  a constant $D>0$, such that 
   \begin{equation} 
      \label{eq:stima-alto-1}
      \begin{split}
         \exp\left(-\varepsilon ^{-d} F(u,A,\kappa ,\varepsilon )\right) \geq  
         \exp\left( - D |A |\varepsilon ^{-d} -  \sum_{i=1}^{d}\sum_{x\in R^{e_{i}}_{\varepsilon }(A)} |\nabla_{e_i}\varphi_{u,\varepsilon } (x) |  \right),
      \end{split}
   \end{equation} 
   where $\varphi_{u,\varepsilon }$ is defined in \eqref{eq:def-varphi-X}.
   Moreover, there exits $\varepsilon_{0},D_{0}$ such that  for every $\varepsilon <  \varepsilon_{0}$ one has that
   \begin{equation} 
      \label{eq:stima-alto-2}
      \begin{split}
         \exp\left(-\varepsilon ^{-d} F_{\infty}(u,A,\kappa ,\varepsilon )\right) \geq  
         \exp\left( - D_{0} |A |\varepsilon ^{-d} - D_{0} \sum_{i=1}^{d}\sum_{x\in R^{e_{i}}_{\varepsilon }(A)} |\nabla_{e_i}\varphi_{u,\varepsilon } (x) |  \right),
      \end{split}
   \end{equation} 
\end{lemma} 

\begin{proof} 
   Let us initially prove \eqref{eq:stima-alto-1}.
   Using Lemma~\ref{lemma:zig-zag-sobolev}, one has that there exists a 
   constant $C$ such that 
   \begin{equation*} 
      \begin{split}
         H(\varphi ,A,\varepsilon ) \leq C \sum_{i=1}^{d}\sum_{x\in R^{e_{i} }_{\varepsilon }{(A)}} |\nabla_{e_i}\varphi(x)  |^{p}
      \end{split}
   \end{equation*} 

   Given that $\|a + b \|^{p} \leq 2^{p-1} \|a \|^{p} + 2^{p-1}\|b \|^{p-1} $, there 
   exist a constant $C_{1} $  such that
   \begin{equation*} 
      \begin{split}
         H(\varphi,A,\varepsilon)\leq 
         C_{1} \sum_{i=1}^{d}\sum_{x\in R^{e_i}_{\varepsilon }(A)} (  |\nabla_{e_i}\varphi_{u,\varepsilon } |^{p}  +1 ) +\sum_{i=1}^{d}\sum_{x\in R^{e_i}_{\varepsilon }(A)}  |\nabla_{i} \psi(x) |^{p },
      \end{split}
   \end{equation*} 
   where $\psi = \varphi - \varphi_{u,\varepsilon }  $. 
   Hence, the estimate \eqref{eq:stima-alto-1} reduces to prove that there exists a constant $D$ such that 
   \begin{equation*} 
      \begin{split}
         \int_{\Vcal(0,A,\kappa ,\varepsilon )} \exp\Big(- C 
         \sum_{i=1}^{d}\sum_{x\in R^{e_{i} }_{\varepsilon }{(A)}} |\nabla_{e_i}\varphi |^p\Big) \leq 
         \exp\left( D|A| \varepsilon ^{-d}\right).
      \end{split}
   \end{equation*} 
   The above inequality was proved in Lemma~\ref{lemma:utile-stime-g}.

   Let us now turn to \eqref{eq:stima-alto-2}.  Let $A_1 \supset A$. 

   Given that $\|a + b \|^{p} \leq 2^{p-1} \|a \|^{p} + 2^{p-1}\|b \|^{p-1} $, there 
   exist a constant $C_{1} $  such that
   \begin{equation*} 
      \begin{split}
          H(\varphi,A_1,\varepsilon)\leq C_{1}  \sum_{\xi\in \Z^{d}}C_{\xi}\sum_{x\in A_{1,\varepsilon}} 
          |\nabla_{\xi}\psi |^{p} +  C_{1} \sum_{\xi\in \Z^{d}} \sum_{x\in A_{1,\varepsilon}} C_{\xi} (|\nabla_{\xi}\varphi_{u,\varepsilon} |^{p} +1 )
      \end{split}
   \end{equation*} 
   where $\psi = \varphi - \varphi_{u,\varepsilon }  $ and $\psi$ is $0$ outside $A_{\varepsilon}$. 
   By using the Lemma~\ref{lemma:zig-zag-sobolev}, and because $\psi$ is $0$ outside $A_{\varepsilon}$, we have that 
   \begin{equation*}
      \sum_{\xi\in \Z^d}\sum_{x\in A_{1,\varepsilon}} C_{\xi}  |\nabla_{\xi}\psi |^{p}  \leq  \sum_{x\in A_{1,\varepsilon}} \sum_{i=1}^{d}   |\nabla_{i}\psi |^{p}  = 
       \sum_{x\in A_{\varepsilon}} \sum_{i=1}^{d}   |\nabla_{i}\psi |^{p}  .
   \end{equation*}
   On the other side as, since $\varphi_{u,\varepsilon}$ is a discretization of $u$, it is not difficult to show that there exists an $\varepsilon_{0}$ an $C$ such that 
   \begin{equation}
      \sum_{\xi\in \Z^{d}}C_{\xi}\sum_{x\in A_{1,\varepsilon}}  |\nabla_{\xi}\varphi_{u,\varepsilon} |^{p} \leq C \sum_{x\in A_{\varepsilon}} \sum_{i=1}^{d}| \nabla_{i}\varphi_{u,\varepsilon}|
   \end{equation}

   Hence, the estimate \eqref{eq:stima-alto-2} reduces to prove that there exists a constant $D$ such that 
   \begin{equation*} 
      \begin{split}
         \int_{\Vcal(0,A,\kappa ,\varepsilon )} \exp\Big(- C 
         \sum_{i=1}^{d}\sum_{x\in A_{\varepsilon} } |\nabla_{e_i}\varphi |^p\Big) \leq 
         \exp\left( D|A| \varepsilon ^{-d}\right).
      \end{split}
   \end{equation*} 

   The above inequality was proved in Lemma~\ref{lemma:kotecky-luckhaus_A_1}.

\end{proof} 

\begin{remark}
   \label{rmk:dopo_stima_alto}
   From the definitions of $F''$ and $F''_{\infty}$ and  \eqref{eq:stima-alto-1} and \eqref{eq:stima-alto-2}, it follows
   \begin{equation}
      \label{eq:dopo_stima_alto_1}
     F''(u,A)  \lesssim |A| + \| \nabla u \|^p_{L^p(A)}
   \end{equation}
   and
   \begin{equation}
      \label{eq:dopo_stima_alto_2}
     F''_{\infty}(u,A)  \lesssim |A| + \| \nabla u \|^p_{L^p(A)}.
   \end{equation}
\end{remark}

\begin{lemma}[exponential tightness] 
   \label{lemma:exponential-tightness}
   Let $A$ be an open set and $K\geq 0$.   Denote by 
   \begin{equation*} 
      \begin{split}
         \Mcal_{K} := \insieme{\varphi:\ H(\varphi,A,\varepsilon) \geq K 
            \varepsilon^{-d} |A|}.
      \end{split}
   \end{equation*} 
   and
   \begin{equation*} 
      \begin{split}
         \Mcal^{\infty}_{K} := \insieme{\varphi:\ H_{\infty}(\varphi,A,\varepsilon) \geq K 
            \varepsilon^{-d} |A|}.
      \end{split}
   \end{equation*} 
   Then there exist constants $ D,K_{0},\varepsilon_{0} $ such that for every 
   $ K\geq K_{0},\ \varepsilon\leq \varepsilon_0  $ and $u\in L^{p}(A)$  it holds
   \begin{equation*} 
      \begin{split}
         \int_{\Mcal_{K}\cap \Vcal(u,A,\kappa)} \exp \big( -H(\varphi 
         ,A,\varepsilon )  \big) \leq \exp\Big(- \frac{1}{2}K\varepsilon^{-d} |A|+ 
         D |A |{\varepsilon^{-d}} -D\sum_{i-1}^{d}\sum_{x\in R_{\varepsilon }^{e_i}(A)}|\nabla_{e_i}\varphi_u  |^{p}\Big) 
      \end{split}
   \end{equation*} 
   and
   \begin{equation*} 
      \begin{split}
         \int_{\Mcal^{\infty}_{K}\cap \Vcal_{\infty}(u,A,\kappa)} \exp \big( -H_{\infty}(\varphi 
         ,A,\varepsilon )  \big) \leq \exp\Big(- \frac{1}{2}K\varepsilon^{-d} |A|+ 
         D |A |{\varepsilon^{-d}} -D\sum_{i-1}^{d}\sum_{x\in R_{\varepsilon }^{e_i}(A)}|\nabla_{e_i}\varphi_u  |^{p}\Big) .
      \end{split}
   \end{equation*} 
\end{lemma} 
\begin{proof} 
   For every $\varphi \in \Mcal_{K} $ it holds
   \begin{equation*} 
      \begin{split}
         H(\varphi,A,\varepsilon) \geq K/2 \varepsilon^{-d} + \frac{1}{2}H(\varphi,A,\varepsilon).
      \end{split}
   \end{equation*} 
   Hence, by using Lemma~\ref{lemma:stima-basso},  we have the desired result.

\end{proof}

We will now proceed to prove the hypothesis of Theorem~\ref{thm:integral_represenatation_sobolev}.

Even though in the next two lemmas a very similar reasoning is used, they cannot be derived one from the other.

\begin{figure}
   \label{fig:1}
   \begin{center}
      \includegraphics[scale=1]{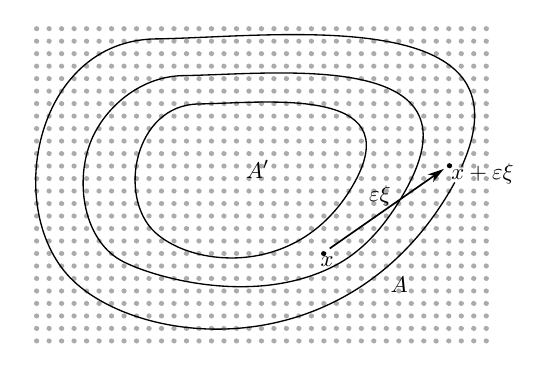}
   \end{center}
   \caption{}
\end{figure}

\begin{lemma}[regularity] 
   \label{lemma:regularity}
   Let $f_{\xi,\varepsilon} $ satisfy the usual hypothesis and $u\in W^{1,p}(\Omega)$. Then
   \begin{equation*} 
      \begin{split}
         \sup_{A'\Subset A} F''(u,A') = F''(u,A).
      \end{split}
   \end{equation*} 
\end{lemma}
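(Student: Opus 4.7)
The equality is the inner-regularity condition (DL4) of the De~Giorgi--Letta criterion, needed to obtain the measure representation in Theorem~\ref{thm:main-representation-sobolev-statmech}. The plan is to establish the two inequalities separately.

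For $\sup_{A'\Subset A} F''(u,A')\leq F''(u,A)$, I would use a direct restriction argument. For $A'\Subset A$, the map $\varphi\mapsto \varphi|_{A'_\varepsilon}$ sends $\Vcal(u,A,\kappa,\varepsilon)$ into $\Vcal(u,A',\tilde\kappa,\varepsilon)$ for a $\tilde\kappa$ comparable to $\kappa$, and satisfies $H(\varphi|_{A'_\varepsilon},A',\varepsilon)\leq H(\varphi,A,\varepsilon)$ because the sum defining $H(\cdot,A',\varepsilon)$ simply drops the non-negative terms indexed by $R^\xi_\varepsilon(A)\setminus R^\xi_\varepsilon(A')$. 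Applying Fubini to integrate out the variables on $(A\setminus A')_\varepsilon$, one obtains $Z(u,A,\kappa,\varepsilon)\leq V_\varepsilon\cdot Z(u,A',\tilde\kappa,\varepsilon)$, where $V_\varepsilon$ is the Lebesgue volume of configurations on $(A\setminus A')_\varepsilon$ compatible with the $L^p$ proximity constraint. Estimating $V_\varepsilon$ by the explicit volume of an $L^p$-ball in $\R^{m|(A\setminus A')_\varepsilon|}$ shows that $(\varepsilon^d/|A|)\log V_\varepsilon \to 0$ as $\varepsilon\to 0$, yielding $F''(u,A')\leq F''(u,A)$.

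For the substantive inequality $F''(u,A)\leq \sup_{A'\Subset A} F''(u,A')$, I would fix $\eta>0$ and, by absolute continuity of $\int|\nabla u|^p$ together with the regularity of $\partial A$, choose $A_0\Subset A_1\Subset A$ with $\|\nabla u\|^p_{L^p(A\setminus A_0)}<\eta$. As suggested by Figure~\ref{fig:1}, the idea is to lower-bound $Z(u,A,\kappa,\varepsilon)$ by restricting integration to \emph{glued} configurations: set $\tilde\varphi=\varphi^{(1)}$ on $A_{1,\varepsilon}$ for a freely varying $\varphi^{(1)}\in \Vcal(u,A_1,\kappa/2,\varepsilon)$, set $\tilde\varphi=\varphi_{u,\varepsilon}$ on $(A\setminus A_0)_\varepsilon$, and let $\tilde\varphi$ linearly interpolate across the transition layer $A_1\setminus A_0$. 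The Hamiltonian $H(\tilde\varphi,A,\varepsilon)$ then splits into (i) an interior contribution bounded by $H(\varphi^{(1)},A_1,\varepsilon)$; (ii) an outer contribution controlled by Lemma~\ref{lemma:stima-alto} as $C\varepsilon^{-d}(1+\|\nabla u\|_{L^p(A\setminus A_0)}^p)$; and (iii) a transition-layer interaction, which condition \ref{cond:stima-alto} bounds by $\sum_\xi C_\xi(|\nabla_\xi(\varphi^{(1)}-\varphi_{u,\varepsilon})|^p+1)$ summed over the layer, with the $C_\xi$ summable in $\xi$.

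The main obstacle is piece (iii). I would invoke the classical De~Giorgi slicing argument: considering $N\sim \dist(A_0,\partial A_1)/\varepsilon$ parallel translates of the transition layer, at least one translate realizes a transition cost no larger than the average over the slicing, which is of order $\varepsilon^{-d}\|\nabla(\varphi^{(1)}-\varphi_{u,\varepsilon})\|^p_{L^p(A_1\setminus A_0)}\lesssim\varepsilon^{-d}\eta$ after using $\varphi^{(1)}\in\Vcal(u,A_1,\kappa/2,\varepsilon)$ and the pointwise proximity of $\varepsilon\varphi_{u,\varepsilon}$ to $u$. Substituting yields the lower bound $Z(u,A,\kappa,\varepsilon)\geq Z(u,A_1,\kappa/2,\varepsilon)\exp(-C\varepsilon^{-d}\eta)$, which upon taking $-(\varepsilon^d/|A|)\log(\cdot)$ reads $F(u,A,\kappa,\varepsilon)\leq F(u,A_1,\kappa/2,\varepsilon)+C\eta$. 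Passing to $\limsup_{\varepsilon\to 0}$, then to $\kappa\to 0$, and finally letting $\eta\to 0$ (so that $A_1\nearrow A$) concludes the proof.
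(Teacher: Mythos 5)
Your plan correctly identifies this as the inner-regularity step needed for the De~Giorgi--Letta criterion and correctly names De~Giorgi slicing as the engine, but both halves contain gaps that the paper's argument is built precisely to avoid. For the inequality $\sup_{A'\Subset A}F''(u,A')\leq F''(u,A)$, the volume factor does not vanish: $(A\setminus A')_\varepsilon$ contains about $\varepsilon^{-d}|A\setminus A'|$ sites, each coordinate ranges over an interval of size about $\kappa/\varepsilon$, and a Stirling estimate gives $\varepsilon^d\log V_\varepsilon\sim(1-d/p)\,|A\setminus A'|\,|\log\varepsilon|$, which diverges for fixed $A'$ unless $p=d$. What actually controls the outer integral is the Hamiltonian on $A\setminus A'$ itself (through Lemma~\ref{lemma:kotecky-luckhaus_A_1} and Lemma~\ref{lemma:utile-stime-g}); the paper encodes this as the almost-monotonicity of Remark~\ref{rmk:almost-monotonicity} combined with absolute continuity of $\|\nabla u\|_{L^p(\cdot)}^p$, rather than a bare Lebesgue volume bound.

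For the substantive inequality, two independent obstacles appear. First, restricting $\int e^{-H}$ to glued configurations $\tilde\varphi$ (pinned to $\varphi_{u,\varepsilon}$ off the inner set, deterministically interpolated on the transition layer) restricts to an affine slice of dimension $m|A_{0,\varepsilon}|$ inside $\R^{m|A_\varepsilon|}$, which is Lebesgue-null, so the resulting bound is $Z(u,A,\kappa,\varepsilon)\geq 0$ and nothing more. The paper never constructs a cut-off configuration: it works with the original $\varphi$, splits $H(\varphi,A,\varepsilon)$ as $H(\varphi,A_i,\varepsilon)+H(\varphi,A\setminus\bar A_i,\varepsilon)+(\text{layer }S_i^{\xi,\varepsilon})$ over a nested family of annuli, performs the slicing \emph{in $\varphi$-space} by partitioning $\Vcal(u,A,\kappa,\varepsilon)$ into the sets $\Ncal_i$ of configurations whose first good slice is $i$, and factorises by Fubini using the full-dimensional inclusion $\Vcal(u,A_1,\kappa,\varepsilon)\times\Vcal(u,A\setminus\bar A_1,\kappa,\varepsilon)\subset\Vcal(u,A,\kappa,\varepsilon)$. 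Second, the claimed transition-layer cost $\lesssim\varepsilon^{-d}\eta$ is not available: membership of $\varphi^{(1)}$ in $\Vcal(u,A_1,\kappa/2,\varepsilon)$ controls only the $L^p$ distance of $\varepsilon\varphi^{(1)}$ to $u$, not $\nabla\varphi^{(1)}$ on the layer, which is a priori unbounded. The averaging in fact yields a good slice of cost $\lesssim H(\varphi,A,\varepsilon)/(N-2)$, and turning that into $o(\varepsilon^{-d})$ requires discarding, at exponentially negligible cost, the set $\Mcal_K=\{H(\varphi,A,\varepsilon)\geq K\varepsilon^{-d}|A|\}$ --- that is, the exponential tightness Lemma~\ref{lemma:exponential-tightness}, which your plan never invokes and without which the slicing error cannot be closed.
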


\begin{proof} 

   Let us fix $A'\Subset  A$ and $N\in \N$ ({to be chosen later}).  Let $\delta =\dist(A',A^C)$, and let 
   $0< t_{1},\ldots,t_N\leq \delta$ such that $t_{ i+1 }-t_{i}> \frac{\delta }{2N}$. 
   Without loss of generality, we may assume that there exists no 
   $x\in A_{\varepsilon }$ such that $\dist(x,A^{C})=t_i$. 
   For every $i$, we define
   \begin{equation*} 
      \begin{split}
         A_{i}:=\insieme{x\in A_{\varepsilon }:\ \dist(x,A^{C})\geq t_i}
      \end{split}
   \end{equation*} 
   and
   \begin{equation*} 
      \begin{split}
         S^{\xi ,\varepsilon }_{i}:=\insieme{x\in  A_{i} :\ x+\varepsilon \xi \in A\setminus A_{i} }.
      \end{split}
   \end{equation*}

   With the above definitions it holds
   \begin{equation*} 
      \begin{split}
         R^{\xi }_{\varepsilon }(A) =  R^{\xi }_{\varepsilon }(A_i) \cup R^{\xi }_{\varepsilon }(A\setminus \bar{A_{i}}) \cup S^{\varepsilon ,\xi }_{i},
      \end{split}
   \end{equation*} 
   thus
   \begin{equation*} 
      \begin{split}
         H^{\xi }(\varphi ,A,\varepsilon ) \leq H^{\xi } (\varphi ,A\setminus 
         \bar{A_{i}},\varepsilon ) + H^\xi (\varphi ,A_{i},\varepsilon ) + 
         \sum_{x\in  S^{\xi ,\varepsilon }_{i}} f_{\xi ,\varepsilon }(\nabla \varphi (x)).
      \end{split}
   \end{equation*} 

   Hence, by using hypothesis \ref{cond:stima-alto} one has that,
   \begin{equation*} 
      \begin{split}
         H(\varphi ,A ,\varepsilon ) &\leq H(\varphi ,A_{i},\varepsilon ) + 
         H(\varphi ,A\setminus A_{i},\varepsilon ) + \sum_{\xi \in \Z^d}\sum_{x\in  S_{i}^{\xi 
               ,\varepsilon }} C_{\xi }(|\nabla_{\xi }\varphi (x) |^{p}+1).
      \end{split}
   \end{equation*} 

   Let us now estimate the last term in the previous inequality.  

   We separate the sum into two terms
   \begin{equation} 
      \label{eq:10121381574063}
      \begin{split}
         \sum_{\xi \in \Z^{d}} \sum_{x\in S^{\xi,\varepsilon  }_{i}} |\nabla_{\xi }\varphi (x)|^{p}= &
         \sum_{|\xi|\leq M} \sum_{x\in S^{\xi,\varepsilon  }_{i}} |\nabla_{\xi }\varphi (x)|^{p}+
         \sum_{ |\xi  |> M} \sum_{x\in S^{\xi,\varepsilon  }_{i}} |\nabla_{\xi }\varphi (x)|^{p},
      \end{split}
   \end{equation} 

   where $M\in \N$. 
   From hypothesis \ref{cond:stima-alto} and  by taking $M$ sufficiently large,  we may also assume  without loss of  generality that 
   \begin{equation*} 
      \begin{split}
         \sum_{|\xi|\geq M} C_{\xi} \leq \delta_1,
      \end{split}
   \end{equation*} 
   hence  using Lemma~\ref{lemma:zig-zag-sobolev},
   \begin{equation*} 
      \begin{split}
         \sum_{|\xi|\geq M} \sum_{x\in S^{\xi ,\varepsilon }_{i}} |\nabla_{\xi }\varphi (x) |^{p}\leq
         C \delta_1   \sum_{k=1}^{d} \sum_{x\in R^{e_{k} }_{\varepsilon }(A)} 
         |\nabla_{e_{k} }\varphi (x)  |^{p} \leq \tilde{C}\delta_{1} H(\varphi ,A,\varepsilon ),
      \end{split}
   \end{equation*} 
   where in the last inequality we have used hypothesis  \ref{cond:primi-vicini}.

   Let $|\xi|< M$.  If  $\varepsilon MN\leq 2\delta $, then 
   \begin{equation*} 
      \begin{split}
         S^{\xi,\varepsilon}_{i} \cap S^{\xi,\varepsilon}_{j}=\emptyset \qquad 
         \text{ whenever  } |i-j |\geq 2.
      \end{split}
   \end{equation*} 
   Without loss of generality we may assume the above condition given that we are interested in  $\varepsilon \to  0$.  

   Given that 
   \begin{equation*} 
      \begin{split}
         \sum_{i=1}^{N-2} \sum_{|\xi| < M}\sum_{x\in  S_{i}^{\varepsilon ,\xi }} 
         |\nabla_{\xi} \varphi (x) |^{p} \leq  2C H(\varphi ,A,\varepsilon ),
      \end{split}
   \end{equation*} 
   there exist $0<i\leq N-2$ such that 
   \begin{equation} 
      \label{eq:cagata1}
      \begin{split}
         \sum_{|\xi| <M} \sum_{x\in S_{i}^{\xi ,\varepsilon }}|\nabla_{\xi }\varphi |^{p}<\frac{2}{N-2}H(\varphi ,A,\varepsilon ).
      \end{split}
   \end{equation} 

   Let us denote by $\Ncal_{i}$ the set of all 
   $\varphi\in \Vcal(u,A,\kappa,\varepsilon )$ such that 
   \eqref{eq:cagata1} holds 
   for the first time, namely for every $j\leq i$ 
   \begin{equation} 
      \label{eq:cagata2}
      \begin{split}
         \sum_{|\xi| <M} \sum_{x\in S_{i}^{\xi ,\varepsilon }}|\nabla_{\xi }\varphi |^{p} <  \frac{2}{N-2}H(\varphi ,A,\varepsilon )
      \end{split}
   \end{equation} 
   On one side, one has that
   \begin{equation} 
      \label{eq:interp-stima-alto}
      \begin{split}
         \int_{\Vcal(u,A,\kappa ,\varepsilon )} \exp\left( -H(\varphi ,A,\varepsilon )   \right) \leq 
         \sum_{i=1}^{N}\int_{\Ncal_{i}}\exp\left( - H(\varphi 
            ,A_{i},\varepsilon) - H(\varphi ,A\setminus 
            \bar{A}_{i},\varepsilon )\right).
      \end{split}
   \end{equation} 
   On the other side, one has that
   \begin{equation} 
      \label{eq:interp-stima-baso}
      \begin{split}
         \int_{\Vcal(u,A,\kappa ,\varepsilon )} \exp\left( -H(\varphi ,A,\varepsilon )   \right) \geq 
         \sum_{i=1}^{N}\int_{\Ncal^{K}_{i}}\exp\left( -H(\varphi ,A,\varepsilon )  \right) ,
      \end{split}
   \end{equation} 
   where $\Ncal^{K}_{i}:= \Ncal_{i}\setminus \Mcal_{K}$.  By using  
   \eqref{eq:cagata2}, one has that for every $\varphi \in  \Ncal_{i}^{K}$ it holds
   \begin{equation*} 
      \begin{split}
         H(\varphi ,A,\varepsilon )\leq H(\varphi, A_{i}) + H(\varphi 
         ,A\setminus \bar{A}_{i}) + \frac{K|A |\varepsilon ^{-d}}{N-2},
      \end{split}
   \end{equation*} 
   and for every $\varphi$ it holds
   \begin{equation} 
      \label{eq:11191384880856}
      \begin{split}
         H(\varphi ,A,\varepsilon )\geq H(\varphi, A_i,\varepsilon ) + H(\varphi ,A\setminus \bar{A}_i,\varepsilon ) .
      \end{split}
   \end{equation} 

   Hence,
   \begin{equation*} 
      \begin{split}
         \int_{\Vcal(u,A,\kappa ,\varepsilon )} \exp\left( -H(\varphi ,A,\varepsilon )     \right) \geq  
         \sum_{i=1}^{N}\int_{\Ncal^{K}_{i}}\exp\left( - H(\varphi, A_{i},\varepsilon) - H(\varphi 
            ,A\setminus \bar{A}_{i},\varepsilon) - \frac{K |A|\varepsilon^{-d}}{N-2} \right).
      \end{split}
   \end{equation*} 
   By using Lemma~\ref{lemma:exponential-tightness}, \ie the fact that there exist 
   $K_{0},\ \varepsilon_{0}$  and $D$ such that for every $K>K_{0}$ and 
   $\varepsilon \leq \varepsilon_{0}$ one has that
   \begin{equation*} 
      \begin{split}
         \int_{\Mcal_{K}\cap \Vcal(u,A,\kappa ,\varepsilon )} \exp \big( 
         -H(\varphi ,A,\varepsilon )  \big) \leq \exp\Big(- 
         \frac{1}{2}K\varepsilon^{-d}|A|+ D\varepsilon^{-d}|A|\Big),
      \end{split}
   \end{equation*} 
   and by using \eqref{eq:interp-stima-alto}, one has that 
   \eqref{eq:interp-stima-baso} can be further  estimated as
   \begin{equation*} 
      \begin{split}
         \exp\Big(-\frac{K\varepsilon^{-d}}{N-2} -& \frac{1}{2}K\varepsilon ^{-d}|A | + D \varepsilon 
         ^{-d}|A|\Big) + \int_{\Vcal(u,A,\kappa ,\varepsilon )}\exp\left( - H (\varphi 
            ,A,\varepsilon )\right) \\ \geq & \exp\left(-\frac{K\varepsilon^{-d}}{N-2}\right) \sum_{i=1}^{N}\int_{\Ncal_{i}} \exp\left(   
            -H(\varphi ,A_{i},\varepsilon )- H(\varphi ,A\setminus \bar{A}_{i})  \right).
      \end{split} 
   \end{equation*} 

   We also notice that by using \eqref{eq:11191384880856} one has that
   \begin{equation*} 
      \begin{split}
         \sum_{i=1}^{N}\int_{\Ncal_{i}}\exp\big( -H(\varphi ,A_i,\varepsilon )- H(\varphi ,A\setminus \bar{A}_{i},\varepsilon)\big) \geq \int 
         _{\Vcal(u,A,\varepsilon )}  \exp\left( -H(\varphi ,A,\varepsilon ) \right),
      \end{split}
   \end{equation*} 
   thus there exists $1 \leq i_{0}\leq N$ such that 
   \begin{equation} 
      \label{eq:11191384882796}
      \begin{split}
         \int_{\Ncal_{i_{0}}}\exp\big(-  H(\varphi ,A_{i_0},\varepsilon )- H(\varphi ,A\setminus \bar{A}_{i_0},\varepsilon)\big) \geq \frac{1}{N} \int 
         _{\Vcal(u,A,\varepsilon )} \exp\left( -H(\varphi ,A,\varepsilon )\right).
      \end{split}
   \end{equation}

   Without loss of generality, we may assume that $i_0=1$.   
   Hence, combining \eqref{eq:11191384882796} with the previous estimates we have that 
   \begin{equation} 
      \label{eq:mstr1}
      \begin{split}
         \exp\Big(-\frac{K|A| \varepsilon^{-d}}{N-2} &- \frac{1}{2}K\varepsilon ^{-d}|A | + D \varepsilon 
         ^{-d}|A|\Big)  +   \int_{\Vcal(u,A,\kappa ,\varepsilon )}\exp\left( - H (\varphi 
            ,A,\varepsilon )\right) \\ &\geq \frac{1}{N}\exp\left(-\frac{K|A | \varepsilon^{-d}}{N-2}\right) 
         \int_{\Ncal_{1}} \exp\left(   
            -H(\varphi ,A_{1},\varepsilon )- H(\varphi ,A\setminus 
            \bar{A}_{1})  
         \right). 
      \end{split}
   \end{equation}

   We notice that the variables $H(\varphi ,A_1,\varepsilon  ) $ and  
   $H(\varphi ,A\setminus \bar{A}_1,\varepsilon  ) $ are 
   independent, thus by using the Fubini theorem one has that
   \begin{equation} 
      \label{eq:mstr2}
      \begin{split}
         \int_{\Vcal(u,A,\kappa ,\varepsilon )} \exp\left(   
            -H(\varphi ,A_{1},\varepsilon )- H(\varphi ,A\setminus 
            \bar{A}_{1})\right)\geq  \int_{\Vcal(u,A_{1},\kappa ,\varepsilon )} \exp\left(   
            -H(\varphi ,A_{1},\varepsilon )\right)   \\ \times  \int_{\Vcal(u,A\setminus \bar{A}_{1},\kappa ,\varepsilon )} \exp\left(   
            - H(\varphi ,A\setminus 
            \bar{A}_{1})\right),
      \end{split}
   \end{equation} 
   where in the previous inequality we have also used that
   \begin{equation*} 
      \begin{split}
         \Vcal(u,A\setminus \bar{A}_{1},\kappa ,\varepsilon ) \cap\Vcal(u,{A}_{1},\kappa ,\varepsilon )\subset \Vcal(u,A,\kappa ,\varepsilon ).
      \end{split}
   \end{equation*} 

   From \eqref{eq:mstr1} and \eqref{eq:mstr2}, we have that
   \begin{equation*}
      F(u,A,\kappa,\varepsilon) \leq \varepsilon^d\log(N)  + \frac{K}{N-2} + F(u,A_1,\kappa,\varepsilon) + F(u,A\setminus \bar{A}_1,\kappa,\varepsilon).
   \end{equation*}

   Finally, to conclude it is enough to pass to the limit in $\varepsilon$, then in $N$ and then in $\kappa$, and  use the monotonicity of the map $A\mapsto F''(u,A)$(see Remark~\ref{rmk:almost-monotonicity} ) and Lemma~\ref{lemma:stima-alto} to estimate the term  $F(u,A\setminus \bar{A}_{1},\kappa ,\varepsilon)$. 
\end{proof}

\begin{lemma} 
   \label{lemma:cambio-dati-bordo}
   Let $u\in W^{1,p}(\R^d)$,  and let  $A$ be an open set with piecewise regular regular boundary, suppose that $\partial A$ has finite length,  and let $u\in W^{1,p}(\R^{d})$. Then the followings hold
   \begin{equation*} 
      \begin{split}
         F'(u,A)=F'_{\infty }(u,A)\qquad\text{and}\qquad  F''(u,A)=F''_{\infty }(u,A).
      \end{split}
   \end{equation*} 
\end{lemma}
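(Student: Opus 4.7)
The plan is to establish each equality by two matching inequalities. By extending any $\varphi:A_\varepsilon\to\R^m$ to $\varepsilon\Z^d$ via the value $\varphi_{u,\varepsilon}$ outside $A_\varepsilon$, the sets $\Vcal(u,A,\kappa,\varepsilon)$ and $\Vcal_\infty(u,A,\kappa,\varepsilon)$ are canonically identified as measure spaces (both carry the product Lebesgue measure $\prod_{x\in A_\varepsilon}d\varphi(x)$). Under this identification $H_\infty-H$ collects exactly those terms $f_{\xi,\varepsilon}(x,\nabla_\xi\varphi(x))$ for which $x\in A_\varepsilon$ and the segment $[x,x+\varepsilon\xi]$ exits $A$; all of them are non-negative by \ref{cond:positivita}. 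Hence $Z_\infty\leq Z$ and $F_\infty\geq F$ pointwise in $(\kappa,\varepsilon)$, which yields the easy inequalities $F'\leq F'_\infty$ and $F''\leq F''_\infty$ after passing to $\liminf_\varepsilon$, $\limsup_\varepsilon$ and $\sup_\kappa$.

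The substantive direction, $F'_\infty\leq F'$ and $F''_\infty\leq F''$, is obtained by a layer decomposition modelled on Lemma~\ref{lemma:regularity}, now used to bound $Z_\infty$ from below. Fix $A'\Subset A$ with $\delta=\dist(A',\partial A)$ and $N\in\N$, and build nested open sets $A'\Subset A_1\Subset\cdots\Subset A_N\subset A$ with each shell $A_i\setminus A_{i-1}$ of width $\delta/N$. Following verbatim the argument of Lemma~\ref{lemma:regularity}---splitting $\sum_\xi$ into $|\xi|\leq M$ and $|\xi|>M$ as in \eqref{eq:10121381574063}, using the tail bound $\sum_{|\xi|>M}C_\xi<\delta_1$ together with Lemma~\ref{lemma:zig-zag-sobolev} and \ref{cond:primi-vicini}---one selects by pigeonhole a shell $i_0$ on which the cross-shell energy does not exceed $\frac{2}{N-2}H_\infty(\varphi,A,\varepsilon)$, and discards the complementary high-energy set $\Mcal_K$ via Lemma~\ref{lemma:exponential-tightness}. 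On the resulting set $\Ncal_{i_0}$ the Hamiltonian decomposes as
\begin{equation*}
H_\infty(\varphi,A,\varepsilon)\leq H(\varphi,A_{i_0},\varepsilon)+H_\infty(\varphi,A\setminus\bar A_{i_0},\varepsilon)+\frac{K\varepsilon^{-d}}{N-2}.
\end{equation*}

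A Fubini factorization then bounds $Z_\infty(u,A,\kappa+\delta',\varepsilon)$ from below by the product of an interior partition function that exceeds $Z(u,A_{i_0},\kappa,\varepsilon)$ and a boundary-layer partition function on $A\setminus\bar A_{i_0}$ controlled by the analogue of Lemma~\ref{lemma:stima-alto} applied to $H_\infty$ on the thin shell, together with the convergence \eqref{eq:converngence-of-discrete-sobolev-norm} to estimate $\sum|\nabla_{e_i}\varphi_{u,\varepsilon}|^p$. Taking $-\varepsilon^d|A|^{-1}\log$ and sending $\varepsilon\to 0$, then $N\to\infty$ (which kills the $K/(N-2)$ and the $\varepsilon^d\log N$ contribution), then $\delta'\to 0$ (by monotonicity of $F_\infty$ in $\kappa$), and finally $A'\nearrow A$ via Lemma~\ref{lemma:regularity} and Remark~\ref{rmk:almost-monotonicity} yields $F_\infty''(u,A)\leq F''(u,A)$; the inequality for $F_\infty'$ follows by replacing $\limsup$ with $\liminf$ throughout. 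The main obstacle, already present in Lemma~\ref{lemma:regularity}, is the quantitative control of the cross-shell energy in the presence of infinite-range interactions: the naive count of boundary sites grows like $\varepsilon^{1-d}$, and only the combination of the tail bound $\sum_{|\xi|>M}C_\xi<\delta_1$ with the zig-zag estimate produces the uniform $O(1/N)$ cost that the pigeonhole relies upon. Once this is handled, the rest of the argument is a routine Fubini factorization exactly parallel to that of Lemma~\ref{lemma:regularity}.
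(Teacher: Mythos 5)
Your proposal follows essentially the same route as the paper: the easy inequality comes from $H_\infty\geq H$ after identifying $\Vcal$ and $\Vcal_\infty$ as measure spaces, and the reverse inequality uses the same shell decomposition as Lemma~\ref{lemma:regularity}, the $|\xi|\leq M$ versus $|\xi|>M$ split with the tail bound and zig-zag estimate, pigeonhole selection of a good shell, exponential tightness to discard high-energy configurations, Fubini factorization, and then regularity together with the vanishing of the thin-shell contribution. You are somewhat more explicit than the paper about the Fubini step and the need to enlarge $\kappa$ slightly when factorizing $\Vcal$, but the argument is the same.
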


\begin{proof}



   Given that $H_{\infty}(\varphi,A,\varepsilon) \geq H(\varphi,A,\varepsilon)$, it is not difficult to notice that 
   \begin{equation*} 
      \begin{split}
         F'_{\infty}(u,A,\kappa,\varepsilon) \geq F'(u,A,\kappa,\varepsilon)\quad \text{and}\quad F'_{\infty}(u,A,\kappa,\varepsilon) \geq F'(u,A,\kappa,\varepsilon).
      \end{split}
   \end{equation*} 

   The rest of the proof will consist in proving the opposite inequalities. 

   Let $A' \subset\!\subset A$ and a family of sets $(A_i)_{i=1}^n$ so that   
   \begin{equation*} 
      \begin{split}
         A_{i} := \insieme{x\in A_{\varepsilon}:\ \dist(x,A^C) > t_i} \quad \text{and}\quad S_{i}^{\xi,\varepsilon}:= \insieme{x\in A_i:\ x + \varepsilon \xi \not\in A_i}
      \end{split}
   \end{equation*} 
   here $t_i = i \frac{\dist(A',A^C)}{n}$.  From  hypothesis (C2) and  Lemma~\ref{lemma:zig-zag-sobolev}, one has that for every $\delta_0$, there exists $N\in \N$ such that 
   \begin{equation*} 
      \begin{split}
   \sum_{|\xi | > N}\sum_{x\in A_{\varepsilon}} f_{\xi,\varepsilon}(x,\nabla_{\xi}\varphi) \leq \sum_{|\xi | > N}\sum_{x\in A_{\varepsilon}} C_{\xi} (|\nabla \varphi(x) |^p + 1) \leq \delta_{0}\sum_{x\in A_\varepsilon} \sum_{j=1}^{d}  (|\nabla_j \varphi(x) |^p + 1). 
      \end{split}
   \end{equation*} 
   Thus, it holds
   \begin{equation*} 
      \begin{split}
         \sum_{\xi\in \Z^d} \sum_{x\in A_{\varepsilon}} f_{\xi,\varepsilon}(x,\nabla \varphi(x))   \leq  \sum_{|\xi| <N} \sum _{x\in A_{\varepsilon}} f_{\xi,\varepsilon}(x,\nabla \varphi(x))  + \delta_{0} \sum_{x\in A_{\varepsilon}}\sum_{j=1}^{d}(|\nabla_j \varphi(x) |^{p} + 1).
      \end{split}
   \end{equation*} 
   The right hand side can be rewritten by
   \begin{equation*} 
      \begin{split}
         \sum_{|\xi|<N}  \sum _{x\in A_i} f_{\xi,\varepsilon}(x,\nabla_{\xi} 
	 \varphi(x)) & + \sum_{|\xi| < N}\sum_{x\in S_{i}^{\xi,\varepsilon}} 
	 f_{\xi,\varepsilon}(x,\nabla_{\xi} \varphi(x))+ 
            \sum_{|\xi| <N} \sum _{x\in A_\varepsilon\setminus A_i} 
	    f_{\xi,\varepsilon}(x,\nabla_\xi \varphi(x)) \\ &+ \delta_{0} 
	    \sum_{x\in A_{\varepsilon}} (|\nabla \varphi(x) |^{p}+1 ).
      \end{split}
    \end{equation*} 
    Thus, there exists a constant $C_{N}$ (depending on $N$) such that
   \begin{equation*} 
      \begin{split}
         \sum_{i=1}^n\sum_{|\xi|<N} \sum_{x\in S_{i}^{\xi,\varepsilon}} 
f_{\xi,\varepsilon}(x,\nabla \varphi(x)) \leq C_N H(\varphi,A,\varepsilon).
      \end{split}
   \end{equation*} 
    From the above one has that  there exists $1\leq i \leq n$
    \begin{equation} 
       \label{eq:proofdatiBordo1}
       \begin{split}
          \sum_{x\in S_{i}^{\xi,\varepsilon}}f_{\xi,\varepsilon}(x,\nabla\varphi(x))  \leq \frac{C_N}{n} H(\varphi,A,\varepsilon). 
       \end{split}
    \end{equation} 

    Thus, the range of interractions $N$ will be fixed, and  take $n$ large.  Combining the above we have 
    \begin{equation*}
	    H_{\infty} (\varphi,A,\varepsilon) \leq H(\varphi,A_i,\varepsilon) + \frac{C_N}{n} H_{\infty}(\varphi,A,\varepsilon) + H_{\infty}(\varphi,A\setminus A_i,\varepsilon),
    \end{equation*}
    where in the we have estimated
    \begin{equation*}
       \sum_{|\xi| <N} \sum _{x\in A_\varepsilon\setminus A_i} 
	    f_{\xi,\varepsilon}(x,\nabla_\xi \varphi(x))\leq H_{\infty}(\varphi,A\setminus A_{i},\varepsilon).
    \end{equation*}

    As in the proof of Lemma~\ref{lemma:regularity}, by 
    using Lemma~\ref{lemma:exponential-tightness} 
    one can show that there exists ${i\in \{1,\ldots,n\}}$ and $K$(depending 
    eventually on $N,\delta_0,p$), such that
         $\limsup_{\varepsilon\to 0} F_{\infty}(0,A,\kappa,\varepsilon) $ can 
	 be bounded from above by
    \begin{equation*}
	    \limsup_{\varepsilon\downarrow 0} -\varepsilon^d\log
	 \int_{\Vcal(0,A,\kappa,\varepsilon)} \exp\left( -H(\varphi,A_i,\varepsilon)- 
		 H_{\infty}(\varphi,A\setminus A_i,\varepsilon) - K|A|/n \varepsilon ^{-d} \right).
    \end{equation*}
    
    Thus by passing to the limit in $\varepsilon$ and $\kappa$
    \begin{equation*}
       F''_{\infty}(u,A) \leq  \sup_{i} F''(u,A_{i}) + F''_{\infty}(u,A\setminus A').
    \end{equation*}
    In order to conclude, we use Lemma~\ref{lemma:regularity} and \eqref{eq:dopo_stima_alto_2}.

\end{proof}

\begin{lemma}[subadditivity] 
   \label{lemma:subadditivity}
   Let $A',A,B',B \subset \Omega $ be open sets such that $A'\Subset A $ and 
   such that $B' \Subset B $.  Then for every $u\in W^{1,p}$ one has that
   \begin{equation*} 
      \begin{split}
         F''(u,A'\cup B') \leq F''(u,A) + F''(u,B).
      \end{split}
   \end{equation*} 
\end{lemma}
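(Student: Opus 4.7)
The plan is to prove this inequality by a slicing-and-gluing argument closely modelled on the proofs of Lemma~\ref{lemma:regularity} and Lemma~\ref{lemma:cambio-dati-bordo}. The first reduction is to replace each occurrence of $F''$ by $F''_{\infty}$ via Lemma~\ref{lemma:cambio-dati-bordo}: the hard boundary condition is very convenient for the gluing step, since configurations in $\Vcal_{\infty}(u,C,\kappa,\varepsilon)$ are deterministically prescribed to be $\varphi_{u,\varepsilon}$ outside $C_{\varepsilon}$, turning two ``independent'' configurations on $A$ and $B$ into a well-defined object on all of $\varepsilon\Z^{d}$.

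Fix $\delta_{0}>0$ small and a large integer $N$. Using the positive distances $\dist(A',A^{c})$ and $\dist(B',B^{c})$, build nested sets $A'\Subset A_{1}\Subset\cdots\Subset A_{N}\Subset A$ and $B'\Subset B_{1}\Subset\cdots\Subset B_{N}\Subset B$ with the associated boundary strips $S_{i}^{\xi,\varepsilon}$ as in Lemma~\ref{lemma:regularity}; pick also $M$ large enough that $\sum_{|\xi|>M}C_{\xi}\leq\delta_{0}$. By Lemma~\ref{lemma:exponential-tightness} the high-energy tail $\{H_{\infty}>K\varepsilon^{-d}\}$ contributes only exponentially small mass for $K$ large, and the long-range part of the interaction can be bounded by $C\delta_{0}H_{\infty}$ through Lemma~\ref{lemma:zig-zag-sobolev} and hypothesis~\ref{cond:stima-alto}. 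A standard averaging argument then provides, for each pair $(\varphi_{A},\varphi_{B})$ in the low-energy part, an index $i\in\{1,\dots,N\}$ for which the short-range cross-strip contribution $\sum_{|\xi|\leq M}\sum_{x\in S_{i}^{\xi,\varepsilon}}f_{\xi,\varepsilon}(x,\nabla_{\xi}\varphi)$ is at most $\tfrac{C_{M}K}{N}\varepsilon^{-d}$.

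For such $i$, define $\varphi$ to equal $\varphi_{A}$ on $A_{i}$ and $\varphi_{B}$ on the complement. Hypothesis~\ref{cond:stima-alto} together with the above bounds yields
\[
H_{\infty}(\varphi,A'\cup B',\varepsilon)\leq(1+\delta_{0})\bigl(H_{\infty}(\varphi_{A},A,\varepsilon)+H_{\infty}(\varphi_{B},B,\varepsilon)\bigr)+\tfrac{C_{M}K}{N}\varepsilon^{-d},
\]
and $\varphi\in\Vcal_{\infty}(u,A'\cup B',c\kappa,\varepsilon)$ for a constant $c$, since it inherits the $L^{p}$ closeness from $\varphi_{A}$ on $A$ and from $\varphi_{B}$ on $B$. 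Fubini over the disjoint products of sites in $A_{\varepsilon}\setminus B_{\varepsilon}$ and $B_{\varepsilon}\setminus A_{\varepsilon}$ factorises the resulting integral and produces
\[
Z_{\infty}(u,A'\cup B',c\kappa,\varepsilon)\geq e^{-C_{M}K\varepsilon^{-d}/N}\,Z_{\infty}(u,A,\kappa,\varepsilon)\,Z_{\infty}(u,B,\kappa,\varepsilon).
\]
Taking $-\varepsilon^{d}\log$ and sending in order $\varepsilon\to 0$, $\delta_{0}\to 0$, $N\to\infty$ and $\kappa\to 0$ gives the claim.

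The main obstacle, exactly as in Lemmas~\ref{lemma:regularity} and \ref{lemma:cambio-dati-bordo}, is the infinite range of the interaction: a plain pigeonhole over the buffer strips cannot handle all cross-boundary pairs alone and must be combined with the tail estimate $\sum_{|\xi|>M}C_{\xi}<\infty$ from hypothesis~\ref{cond:stima-alto} and the zig-zag comparison in Lemma~\ref{lemma:zig-zag-sobolev}. A secondary technical point is the book-keeping when $A'$ and $B'$ overlap: one must verify that the glued configuration still belongs to $\Vcal_{\infty}(u,A'\cup B',c\kappa,\varepsilon)$ after possibly increasing $\kappa$ by a constant factor, but this is inherited from the $L^{p}$ constraints on $A$ and $B$ separately, and the constant is absorbed in the outer limit in $\kappa$.
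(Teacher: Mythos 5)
Your proposal takes a genuinely different route from the paper, and the route contains a gap. The paper's (one-line) proof of this lemma simply refers to Lemma~\ref{lemma:regularity}, whose mechanism is: keep a \emph{single} configuration $\varphi$ drawn from $\Vcal(u,A'\cup B',\kappa,\varepsilon)$, slice its energy as $H(\varphi,A'\cup B')\le H(\varphi,A_i)+H(\varphi,(A'\cup B')\setminus\bar A_i)+\text{(strip of the same $\varphi$)}$, pigeonhole in $i$ using the exponential-tightness bound on $H(\varphi,\cdot)$, and then factor the integral over the disjoint index sets $(A_i)_\varepsilon$ and $((A'\cup B')\setminus\bar A_i)_\varepsilon$ because the integrand already factorises. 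Every term in the strip is a discrete gradient of \emph{one} $\varphi$, hence controlled by $H(\varphi,A'\cup B')\le K\varepsilon^{-d}$ after removing the tail. After that one uses Lemma~\ref{lemma:regularity} again to pass $F''(u,A_i)\le F''(u,A)$ and $F''(u,(A'\cup B')\setminus\bar A_i)\le F''(u,B)$.

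You instead start from the product $Z_\infty(u,A,\kappa,\varepsilon)\cdot Z_\infty(u,B,\kappa,\varepsilon)$, take two \emph{independent} fields $\varphi_A$, $\varphi_B$, glue them along $\partial A_i$, and bound $H_\infty$ of the glued field. That is where the gap is. The strip contribution of the glued configuration involves increments of the form $\big(\varphi_B(x+\varepsilon\xi)-\varphi_A(x)\big)/|\xi|$, which are \emph{not} discrete gradients of either $\varphi_A$ or $\varphi_B$: they carry the pointwise mismatch $\varphi_A-\varphi_B$ at neighbouring lattice sites. Hypothesis~\ref{cond:stima-alto} turns such a term into $f_{\xi,\varepsilon}(x,\nabla_\xi\varphi_A)+C_\xi\big(|\varphi_A-\varphi_B|^p/|\xi|^p+1\big)$. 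The averaging over the $N$ strips controls the sum of the first pieces through the tightness bound, exactly as in Lemma~\ref{lemma:regularity}; it does \emph{not} control the second pieces. The only handle on $\sum_x|\varphi_A-\varphi_B|^p$ is the $L^p$ constraint defining $\Vcal_\infty$, which only gives $\sum_x|\varphi_A-\varphi_B|^p\lesssim(\kappa/\varepsilon)^p\varepsilon^{-d}$. After dividing by $N$ and multiplying by $\varepsilon^d$, this leaves a term of order $(\kappa/\varepsilon)^p/N$ in the free energy, which diverges as $\varepsilon\to0$ for every fixed $\kappa$ and $N$, and your order of limits ($\varepsilon\to0$, then $\delta_0\to0$, then $N\to\infty$, then $\kappa\to0$) does not rescue it. Your displayed inequality
\begin{equation*}
   H_{\infty}(\varphi,A'\cup B',\varepsilon)\leq(1+\delta_{0})\bigl(H_{\infty}(\varphi_{A},A,\varepsilon)+H_{\infty}(\varphi_{B},B,\varepsilon)\bigr)+\tfrac{C_{M}K}{N}\varepsilon^{-d}
\end{equation*}
simply omits this mismatch term. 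There is also a secondary issue: the gluing map $(\varphi_A,\varphi_B)\mapsto\varphi$ discards the values of $\varphi_A$ outside $A_i$ and of $\varphi_B$ inside $A_i$, so it is not volume-preserving; the final ``Fubini'' inequality $Z_\infty(A'\cup B')\ge e^{-o(\varepsilon^{-d})}Z_\infty(A)Z_\infty(B)$ cannot be read off from a product factorisation of the integrand as you claim. To repair the argument you would need either to interpolate rather than glue (with the error again quantified against the order of limits) or to abandon the gluing and simply run the single-$\varphi$ slicing of Lemma~\ref{lemma:regularity} on the set $A'\cup B'$, which avoids the cross-boundary mismatch altogether.
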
 

\begin{proof} 

   The proof is very similar to Lemma~\ref{lemma:regularity}. We will show the main steps needed to prove the desired claim.

   As in the proof of Lemma~\ref{lemma:regularity}, fix and $N\in \N$ ({to be chosen later}).  
   Let $\delta =\dist(A',A^C)$, and let $0< t_{1},\ldots,t_N\leq \delta$ such that $t_{ i+1 }-t_{i}> \frac{\delta }{2N}$. 

   Without loss of generality, we may assume that there exists no $x\in A_{\varepsilon }$ such that $\dist(x,A^{C})=t_i$. 
   For every $i$, we define
   \begin{equation*} 
      \begin{split}
         A_{i}:=\insieme{x\in A_{\varepsilon } :\ \dist(x,A^{C})\geq t_i}
      \end{split}
   \end{equation*} 
   and
   \begin{equation*} 
      \begin{split}
         S^{\xi ,\varepsilon }_{i}:=\insieme{x\in  A_{i} :\ x+\varepsilon \xi \in A\setminus A_{i} }.
      \end{split}
   \end{equation*} 

   With the above definitions it holds
   \begin{equation*} 
      \begin{split}
         R^{\xi }_{\varepsilon }(A' \cup B') =  R^{\xi }_{\varepsilon }(A') \cup R^{\xi }_{\varepsilon }(B'\setminus \bar{A_i}) \cup S^{\varepsilon ,\xi }_{i},
      \end{split}
   \end{equation*} 
   thus
   \begin{equation*} 
      \begin{split}
         H^{\xi }(\varphi ,A' \cup B',\varepsilon ) \leq H^{\xi } (\varphi ,B'\setminus 
         \bar{A_{i}},\varepsilon ) + H^\xi (\varphi ,A_{i},\varepsilon ) + 
         \sum_{x\in  S^{\xi ,\varepsilon }_{i}} f_{\xi ,\varepsilon }(\nabla \varphi (x)).
      \end{split}
   \end{equation*} 

   By using hypothesis \ref{cond:stima-alto}, and Lemma~\ref{lemma:zig-zag-sobolev} one has that,
   \begin{equation} 
      \label{eq:mstr3}
      \begin{split}
         H(\varphi ,A' \cup B' ,\varepsilon ) &\leq H(\varphi ,A_{i},\varepsilon ) + 
         H(\varphi ,B'\setminus A_{i},\varepsilon ) + \sum_{\xi \in \Z^d}\sum_{x\in  S_{i}^{\xi ,\varepsilon }} C_{\xi }(|\nabla_{\xi }\varphi (x) |^{p}+1).
      \end{split}
   \end{equation} 

   By using the tightness Lemma as in the proof of Lemma~\ref{lemma:regularity}, we can restrict ourselves to 
   restricting to $\Mcal_K$, namely
   \begin{equation*}
      \Mcal_K : = \insieme{ \varphi: H(\varphi, A\cup B, \varepsilon) \leq K|A\cup B |\varepsilon^{-d} }.
   \end{equation*}

   We then proceed to write the last term in \eqref{eq:mstr3} as the sum of two terms
   \begin{equation}
      \label{eq:mstr4}
      \sum_{\xi \in \Z^d}\sum_{x\in  S_{i}^{\xi }} C_{\xi }(|\nabla_{\xi }\varphi (x) |^{p}+1) \leq 
      \sum_{|\xi| > M}\sum_{x\in  S_{i}^{\xi }} C_{\xi }(|\nabla_{\xi }\varphi (x) |^{p}+1) +
      \sum_{|\xi| \leq M}\sum_{x\in  S_{i}^{\xi }} C_{\xi }(|\nabla_{\xi }\varphi (x) |^{p}+1) 
   \end{equation}
   The first term in the r.h.s.  of \eqref{eq:mstr4}, can be dealt in the same way as in Lemma~\ref{lemma:regularity}.
   Similarly to \eqref{eq:cagata1}, we have that that there exist $0<i\leq N-2$ such that 
   \begin{equation*} 
      \begin{split}
         \sum_{|\xi| <M} \sum_{x\in S_{i}^{\xi ,\varepsilon }}|\nabla_{\xi }\varphi |^{p}<\frac{2}{N-2}H(\varphi ,A \cup B,\varepsilon ).
      \end{split}
   \end{equation*} 
   After this step the proof continues in the same manner as the proof of Lemma~\ref{lemma:regularity}.
\end{proof}

\begin{lemma}[locality] 
   \label{lemma:locality-sobolev}
   Let $u,v \in W^{1,p}(\Omega)$ such that $u\equiv v $ in $A$.   Then 
   \begin{equation} 
      \label{eq:lemma-locality}
      \begin{split}
         F'(u,A)=F'(v,A)\qquad \text{and}\qquad F''(u,A)=F''(v,A)
      \end{split}
   \end{equation} 
\end{lemma}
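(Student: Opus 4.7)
The approach is to exploit approximate locality of both the Hamiltonian and the constraint: $H(\varphi, A, \varepsilon)$ depends only on $\varphi|_{A_\varepsilon}$ because for every $x \in R^\xi_\varepsilon(A)$ both $x$ and $x+\varepsilon\xi$ lie in $A$ by definition, while $\Vcal(u, A, \kappa, \varepsilon)$ depends on $u$ only through cell averages at points of $A_\varepsilon$, with just an $\varepsilon$-boundary layer where these averages can differ between $u$ and $v$. My plan is to combine these two observations to produce an inclusion $\Vcal(u, A, \kappa, \varepsilon) \subset \Vcal(v, A, \kappa + \delta_\varepsilon, \varepsilon)$ with $\delta_\varepsilon \to 0$, and to propagate this through the $\liminf_\varepsilon$, $\limsup_\varepsilon$ and $\lim_{\kappa \downarrow 0}$ in the definitions of $F'$ and $F''$.

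Concretely, for $\varphi \in \Vcal(u, A, \kappa, \varepsilon)$ I would first use the triangle inequality for the discrete $L^p$-pseudoseminorm to write
\begin{equation*}
   \Bigl(\frac{\varepsilon^d}{|A|^d} \sum_{x \in A_\varepsilon} |v - \varepsilon\varphi|^p\Bigr)^{1/p} \leq \Bigl(\frac{\varepsilon^d}{|A|^d} \sum_{x \in A_\varepsilon} |u - \varepsilon\varphi|^p\Bigr)^{1/p} + \delta_\varepsilon,
\end{equation*}
where $\delta_\varepsilon^p := \frac{\varepsilon^d}{|A|^d} \sum_{x \in A_\varepsilon} |\varepsilon\varphi_{u,\varepsilon}(x) - \varepsilon\varphi_{v,\varepsilon}(x)|^p$. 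Since $\varepsilon\varphi_{u,\varepsilon}(x) = \fint_{x+Q(\varepsilon)} u$ by \eqref{eq:def-varphi-X} and $u \equiv v$ a.e.\ on $A$, each summand vanishes whenever $x + Q(\varepsilon) \subset A$; the remaining indices are confined to an $\varepsilon$-thick boundary layer $B_\varepsilon$ of $\partial A$, and Jensen's inequality then gives
\begin{equation*}
   \delta_\varepsilon^p \leq \frac{1}{|A|^d} \int_{B_\varepsilon} |u - v|^p \dx.
\end{equation*}
Since $u - v \in L^p(\Omega)$ and $|B_\varepsilon| \to 0$, absolute continuity of the Lebesgue integral forces $\delta_\varepsilon \to 0$.

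The inclusion $\Vcal(u, A, \kappa, \varepsilon) \subset \Vcal(v, A, \kappa + \delta_\varepsilon, \varepsilon)$ together with the locality of $H$ then gives $F(v, A, \kappa + \delta_\varepsilon, \varepsilon) \leq F(u, A, \kappa, \varepsilon)$. For any fixed $\delta > 0$, for $\varepsilon$ small enough that $\delta_\varepsilon < \delta$, monotonicity of $F$ in its third argument (Remark~\ref{rmk:simple1}) upgrades this to $F(v, A, \kappa + \delta, \varepsilon) \leq F(u, A, \kappa, \varepsilon)$. Taking $\liminf_{\varepsilon \downarrow 0}$ yields $F'(v, A, \kappa + \delta) \leq F'(u, A, \kappa)$; specialising to $\kappa = \delta = \kappa_0/2$ for arbitrary $\kappa_0 > 0$ and then taking $\sup_{\kappa_0 > 0}$ (using the supremum characterisation $F'(v, A) = \sup_{\kappa > 0} F'(v, A, \kappa)$ of Remark~\ref{rmk:simple1}) gives $F'(v, A) \leq F'(u, A)$. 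Swapping $u$ and $v$ produces the reverse inequality, and replacing $\liminf$ with $\limsup$ throughout disposes of $F''$.

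The only genuinely nontrivial step is the boundary-layer estimate $\delta_\varepsilon \to 0$, and even that reduces to an elementary application of absolute continuity of the $L^p$-integral of $u - v$ over a family of sets of vanishing Lebesgue measure; everything else is bookkeeping with monotonicity in $\kappa$ and the outer limit defining $F'$ and $F''$.
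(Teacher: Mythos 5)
Your proof is correct, and it does more than the paper does: the paper's entire argument is the single sentence ``The statement follows from the definitions,'' which is true only if one interprets the constraint in $\Vcal$ as pointwise evaluation of $u$ at lattice points in $A_\varepsilon\subset A$. You instead read the constraint (consistently with the definition of $\varphi_{u,\varepsilon}$ in \eqref{eq:def-varphi-X}) through the cell averages $\varepsilon\varphi_{u,\varepsilon}(x)=\fint_{x+Q(\varepsilon)}u$, and then the two constraint sets are genuinely \emph{not} equal: for grid points $x$ whose cell $x+Q(\varepsilon)$ sticks out of $A$ the averages of $u$ and $v$ can differ. Your boundary-layer estimate $\delta_\varepsilon\to 0$ (via Jensen plus absolute continuity of the $L^p$ integral of $u-v$ over the vanishing strip $B_\varepsilon$) is exactly the missing ingredient, and the subsequent inclusion $\Vcal(u,A,\kappa,\varepsilon)\subset\Vcal(v,A,\kappa+\delta_\varepsilon,\varepsilon)$, combined with the monotonicity of $F$ in $\kappa$ and the supremum characterization of $F'$ and $F''$ from Remark~\ref{rmk:simple1}, cleanly yields $F'(v,A)\le F'(u,A)$ and the reverse by symmetry. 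In short: same route as the paper in spirit, but you make visible and resolve a boundary-layer issue that the paper elides; the one thing worth flagging is that the inequality written in Remark~\ref{rmk:simple1} has the direction reversed (the functional is \emph{decreasing} in $\kappa$, since enlarging $\Vcal$ enlarges the partition function), which you use with the correct sign.
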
 

\begin{proof} 
   The statement follows from the definitions.  

\end{proof} 

\begin{proof}[Proof of {Theorem~\ref{thm:main-representation-sobolev-statmech}}] 
   \hfill

   Let us suppose initially that there exists a sequence for which 
   $F(\cdot,\cdot)=F'(\cdot,\cdot)= F''(\cdot,\cdot)$.  Then to conclude it is 
   enough to notice that $F $ satisfies the conditions of 
   Theorem~\ref{thm:integral_represenatation_sobolev}. 
   Indeed, in the previous Lemmas we prove that all the conditions (i)-(v) of Theorem~\ref{thm:integral_represenatation_sobolev} 
   hold.  
\end{proof}

\begin{corollary} 
   Because of Lemma~\ref{lemma:cambio-dati-bordo}, 
   the same statement holds true for $F_{\infty} $. This in particular implies 
   that for the sequence $\insieme{\varepsilon_{n_{k}}}$ in 
   Theorem~\ref{thm:main-representation-sobolev-statmech}  there holds a large 
   deviation principle with rate functional 
   \begin{equation}
      I(v)=\int_{\Omega} W(x,\nabla v) \dx - \min_{\bar v\in W^{1,p}_0(\Omega)+u}\,  
      \int_{\Omega} W(x,\nabla \bar v(x))dx. 
   \end{equation} 
\end{corollary}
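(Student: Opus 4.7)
The plan is to transfer the integral representation from Theorem~\ref{thm:main-representation-sobolev-statmech} to $F_\infty$ via Lemma~\ref{lemma:cambio-dati-bordo}, and then read off the LDP by comparing the logarithmic asymptotics of $L^p$-neighborhood probabilities against those of the full partition function. Combining the two results, along the subsequence $\{\varepsilon_{n_k}\}$ and for every $v \in W^{1,p}(\Omega)$ one has
\begin{equation*}
F'_\infty(v,\Omega) = F''_\infty(v,\Omega) = \int_\Omega W(x,\nabla v(x))\,dx,
\end{equation*}
with the same density $W$ produced in the theorem. Writing $Z_\varepsilon$ for the denominator of the Gibbs measure and $\Phi_\varepsilon := -\varepsilon^d \log Z_\varepsilon$, one obtains
\begin{equation*}
-\varepsilon^d \log \mu_\varepsilon\bigl(\Vcal_\infty(v,\Omega,\kappa,\varepsilon)\bigr) = F_\infty(v,\Omega,\kappa,\varepsilon) - \Phi_\varepsilon,
\end{equation*}
so the entire task reduces to identifying $\Phi := \lim_k \Phi_{\varepsilon_{n_k}}$ (which exists along a further subsequence by exponential tightness) with the constant $m := \min_{\bar v \in u + W^{1,p}_0(\Omega)}\int_\Omega W(x,\nabla \bar v)\,dx$.

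The upper bound $\Phi \le m$ follows by restricting the defining integral of $Z_\varepsilon$ to any single $\Vcal_\infty(\bar v,\Omega,\kappa,\varepsilon)$ with admissible $\bar v$ and invoking the integral representation. For the matching lower bound I would cover the bounded-energy sublevel set $\{\varphi: H_\infty(\varphi,\Omega,\varepsilon) \le K\varepsilon^{-d}\}$ by countably many $L^p$-balls centered at a dense family $\{\bar v_j\}\subset u + W^{1,p}_0(\Omega)$, discard the high-energy complement via Lemma~\ref{lemma:exponential-tightness}, and sum the contributions from each ball using the integral representation. This yields $\Phi \ge \inf_j \int_\Omega W(x,\nabla \bar v_j) - \eta$ for every $\eta > 0$; combined with the $L^p$-lower semicontinuity of Proposition~\ref{prop:propriet-compact} and the coercivity in hypothesis~\ref{cond:primi-vicini}, this forces $\Phi \ge m$ and shows that the infimum is attained.

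Once the Laplace-type identity $\Phi = m$ is established, the LDP follows from a standard covering argument: for a closed set $C$ of bounded energy, cover $C$ by finitely many neighborhoods $\Vcal_\infty(v_j,\Omega,\kappa,\varepsilon)$ with $I(v_j)$ close to $\inf_C I$ and sum the resulting upper bounds; for an open set $O$, pick any $v \in O$ with $I(v)$ close to $\inf_O I$ and use that $\Vcal_\infty(v,\Omega,\kappa,\varepsilon) \subset O$ for $\kappa$ small. Exponential tightness together with hypothesis~\ref{cond:primi-vicini} handles the unbounded-energy tail and ensures goodness of $I$.

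\textbf{Main obstacle.} The principal difficulty is the identification $\Phi = m$, i.e.\ showing that boundary-incompatible or otherwise pathological configurations do not contribute an exponentially larger share to $Z_\varepsilon$ than admissible profiles matching $u$ on $\partial\Omega$. Lemma~\ref{lemma:cambio-dati-bordo} is indispensable here because it equates the soft- and hard-boundary free energies, which is precisely what permits the reduction of the variational problem to the affine class $u + W^{1,p}_0(\Omega)$ appearing in the stated rate functional.
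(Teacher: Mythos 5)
The paper supplies no proof of this corollary, treating it as an immediate consequence of Lemma~\ref{lemma:cambio-dati-bordo} and Theorem~\ref{thm:main-representation-sobolev-statmech}, so there is no internal proof to compare against; your proposal is filling in a gap the authors leave implicit. Your plan is the standard local-to-global (Bryc/Varadhan) route and the structure is sound: the transfer to $F_\infty$ is indeed just Lemma~\ref{lemma:cambio-dati-bordo}, the decomposition $-\varepsilon^d\log\mu_\varepsilon(\Vcal_\infty) = F_\infty - \Phi_\varepsilon$ is the right starting point, the upper bound $\Phi\le m$ by restricting $Z_\varepsilon$ to a single tube is correct, and you have correctly identified the matching lower bound $\Phi\ge m$ as the crux.

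One step deserves to be made explicit rather than absorbed into the phrase ``cover the bounded-energy sublevel set by countably many $L^p$-balls'': the finite (not merely countable) covering, and hence the lower bound on $\Phi$ and the LDP upper bound for closed sets, both rely on Rellich--Kondrachov precompactness of the interpolations $\Pi_\varepsilon\varphi$ of configurations with $H_\infty(\varphi,\Omega,\varepsilon)\le K\varepsilon^{-d}$ and hard boundary data $\varphi=\varphi_{u,\varepsilon}$ off $\Omega_\varepsilon$, together with the observation that any $L^p$-limit of such interpolations lies in $u+W^{1,p}_0(\Omega)$. This compactness, granted by hypothesis~\ref{cond:primi-vicini} and the clamped boundary condition, is also what guarantees that the rate functional is $+\infty$ off the affine class $u+W^{1,p}_0(\Omega)$, that the minimum $m$ is attained, and that $I$ is good. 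With that addition, your outline is complete and consistent with what the paper must have in mind.
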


\subsection{Homogenisation} 
\label{sub:Homogenisation}

In this section we will show that if the functions $f_{\xi ,\varepsilon }$ are 
obtained by rescaling by $\varepsilon $ in the space variable, then a LDP 
result holds true. This models the case when the arrangement of the ``material 
points'' presents a periodic feature, namely:
\begin{enumerate}[label=({H}\arabic*)] 
   \item\label{hypothesis:homog-periodicity} periodicity:
      \begin{equation*} 
         \begin{split}
            f_{\xi,\varepsilon}(x,t) = f^{\xi}\Big( \frac{x}{\varepsilon}, t\Big)
         \end{split}
      \end{equation*} 
      where the functions $f^{\xi }$ are such that  $f^{\xi }(x+M e_{i},t)=f^{\xi }(x,t)$.  

   \item\label{hypothesis:homog-lowerbound-nearest-neighbours} lower bound on the nearest neighbours:
      \begin{equation*} 
         \begin{split}
            f^{e_i} (x,t) \geq c_1 (|t|^{p} -1)
         \end{split}
      \end{equation*} 
   \item\label{hypothesis:homog-upper-bound} upper bound
      \begin{equation*} 
         \begin{split}
            f^{\xi}(x,t) \leq C_{\xi} (|t |^{p} +1)
         \end{split}
      \end{equation*} 

\end{enumerate}

The next homogenization result follows from Theorem~\ref{thm:main-representation-sobolev-statmech} and adapting homogenization arguments to our setting. 
For a similar result on the discrete see also~\cite{MR2083851}.

\begin{theorem} 
   \label{thm:sobolev_homog_main}
   Let the functions $f_{\xi ,\varepsilon}^{\xi} $ satisfy the above 
   conditions. Then there exists a function $f_{\hom}$ such that for every  
   $A \subset \Omega$ open set it holds
   \begin{equation} 
      \label{eq:teo-homog_main_eq1}
      \begin{split}
         F(u,A)= \begin{cases} 
            \int_{A} f_{\homo}(\nabla u)& \text{ if }u\in W^{1,p}(\Omega;\R^d)\\
            +\infty& \text{otherwise, }
         \end{cases} 
      \end{split}
   \end{equation} 
   where
   \begin{equation} 
      \label{eq:definizione-f-ldp}
      \begin{split}
         f_\homo(M):=\frac{1}{|A |} \lim_{\varepsilon \downarrow 0}F'(Mx,A,\kappa ,\varepsilon ).
      \end{split}
   \end{equation} 

\end{theorem}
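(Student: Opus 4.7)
My plan is to combine the subsequential integral representation of Theorem~\ref{thm:main-representation-sobolev-statmech} with a cell-formula argument that exploits the periodicity assumption~\ref{hypothesis:homog-periodicity}. For any infinitesimal sequence $\{\varepsilon_n\}$, Theorem~\ref{thm:main-representation-sobolev-statmech} yields a subsequence $\{\varepsilon_{n_k}\}$ and a Borel density $W\colon\Omega\times\R^{m\times d}\to\R$ with
\[
   F'_{\{\varepsilon_{n_k}\}}(u,A)=F''_{\{\varepsilon_{n_k}\}}(u,A)=\int_A W(x,\nabla u)\,\dx.
\]
The task then reduces to showing that $W$ depends neither on $x$ nor on the chosen subsequence, which will force the common value to agree with $f_\homo$ as in~\eqref{eq:definizione-f-ldp}.

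For the $x$-independence, I verify condition (vi) of Theorem~\ref{thm:integral_represenatation_sobolev} along the subsequence. Let $\tau$ denote the scalar period in~\ref{hypothesis:homog-periodicity} and let $M\in\R^{m\times d}$ be the gradient in the test function $u(x)=Mx$. For every $k\in\Z^d$, the change of variables $\tilde\varphi(y):=\varphi(y+\varepsilon\tau k)-\tau M k$ is measure preserving, maps $\Vcal(Mx,A+\varepsilon\tau k,\kappa,\varepsilon)$ bijectively onto $\Vcal(Mx,A,\kappa,\varepsilon)$, and, by the $\tau$-periodicity of $f^\xi$ in its first argument, leaves the Hamiltonian invariant, so that $F(Mx,A+\varepsilon\tau k,\kappa,\varepsilon)=F(Mx,A,\kappa,\varepsilon)$. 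Given two balls $B(y,\rho),B(z,\rho)\subset\Omega$, I approximate $z-y$ by $\varepsilon_{n_k}\tau k_{n_k}$ with $k_{n_k}\in\Z^d$; the residual displacement has size $O(\varepsilon_{n_k})$ and moves at most $O(\rho^{d-1}/\varepsilon_{n_k}^{d-1})$ lattice sites in or out of the ball, whose contribution to $-\varepsilon^d\log Z/|B|$ is $O(\varepsilon/\rho)$ by the growth estimate of Lemma~\ref{lemma:stima-alto}. Passing to the limit yields $F'_{\{\varepsilon_{n_k}\}}(Mx,B(y,\rho))=F'_{\{\varepsilon_{n_k}\}}(Mx,B(z,\rho))$, and condition (vi) of Theorem~\ref{thm:integral_represenatation_sobolev} then forces $W(x,M)=:w_{\{\varepsilon_{n_k}\}}(M)$ to be independent of $x$.

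For independence from the subsequence, I switch to the hard-boundary functional $F_\infty$ via Lemma~\ref{lemma:cambio-dati-bordo}: the partition function is then exactly decoupled on disjoint open sets, up to a long-range boundary contribution summable thanks to~\ref{cond:stima-alto} and handled as in the proof of Lemma~\ref{lemma:regularity}. Combined with the periodicity-invariance of the previous step, a tiling of a large cube by $\varepsilon\tau$-translates of a fundamental cell (a Dal Maso--Modica type subadditive argument) shows that for linear $u(x)=Mx$ the limit $\lim_{\varepsilon\to0}\tfrac1{|A|}F(Mx,A,\kappa,\varepsilon)$ exists along the \emph{full} sequence, is independent of $A$, and coincides with $f_\homo(M)$ as in~\eqref{eq:definizione-f-ldp}. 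Hence $w_{\{\varepsilon_{n_k}\}}=f_\homo$ for every subsequence, $W(x,M)=f_\homo(M)$ is canonical, the full limits $F'$ and $F''$ coincide, and the representation~\eqref{eq:teo-homog_main_eq1} follows; the $+\infty$ value outside $W^{1,p}$ comes from the coercivity in~\ref{hypothesis:homog-lowerbound-nearest-neighbours} via Lemma~\ref{lemma:stima-basso}. \textbf{The main obstacle is precisely this cell-formula step:} obtaining subadditivity strong enough for a subergodic argument in the presence of both the global $L^p$-tightness constraint in $\Vcal$ and the infinite range of the interactions. The former is removed by the boundary-data switch of Lemma~\ref{lemma:cambio-dati-bordo}, the latter is absorbed by the summability of $\{C_\xi\}$ from~\ref{cond:stima-alto}, but the bookkeeping is the technical heart of the proof.
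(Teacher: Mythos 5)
Your proposal follows the same route as the paper: it first invokes Theorem~\ref{thm:main-representation-sobolev-statmech} for a subsequential representation $\int_A W(x,\nabla u)\,\dx$, then verifies condition~(vi) of Theorem~\ref{thm:integral_represenatation_sobolev} by shifting by lattice-periodic translates $\varepsilon\tau k$ and comparing balls, and finally removes the subsequence dependence by showing that for linear $u(x)=Mx$ the limit $\lim_{\varepsilon\downarrow 0}F'(Mx,A,\kappa,\varepsilon)$ exists via a subadditive cell-formula argument. The only real difference is the amount of detail you give in the two comparison steps: for the $x$-independence the paper simply uses the (``almost'') monotonicity of $A\mapsto F(u,A,\kappa,\varepsilon)$ together with the regularity Lemma~\ref{lemma:regularity} to absorb the shift of size $O(\varepsilon)$, while you control the boundary defect by counting the $O(\rho^{d-1}/\varepsilon^{d-1})$ moved lattice sites through Lemma~\ref{lemma:stima-alto}; and for the cell formula the paper delegates the approximate-subadditivity ergodic argument entirely to \cite[Proposition~1.2]{LK-comm-math-phys}, whereas you flag the need to switch to $F_\infty$ via Lemma~\ref{lemma:cambio-dati-bordo} to achieve decoupling and then invoke a Dal Maso--Modica-type tiling, which is the content of that reference. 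Both are valid, and your version is arguably more self-contained in the two places where the paper is terse.
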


\begin{proof} 
   The proof is an adaptation 

   Let $(\varepsilon _n)$ be a sequence of positive numbers converging
   to $0$. From Proposition~\ref{thm:main-representation-sobolev-statmech} 
   we can extract a subsequence (that we do not relabel for simplicity) such that
   \begin{equation*} 
      \begin{split}
         F'_{\{\varepsilon_{n}\}}(u,A)= F''_{\{ \varepsilon_{n} \}}(u,A) = 
         \int_A f_{\{ \varepsilon _{n}\}}(x,\nabla u) \dx.   
      \end{split}
   \end{equation*} 

   The theorem is proved if we show that $f$ does not depend on the
   space variable $x$ and on the chosen sequence $\varepsilon _{n}$. 
   To prove the first claim, by Theorem~\ref{thm:integral_represenatation_sobolev},  it suffices to show that, if one denotes by 
   \begin{equation*} 
      \begin{split}
         F(u,A)=\int_A f(x,\nabla u)\dx,
      \end{split}
   \end{equation*} 
   then 
   \begin{equation*} 
      \begin{split}
         F(Mx,B(y,\rho))=F(Mx,B(z,\rho))
      \end{split}
   \end{equation*} 
   for all $M\in {\R}^{d\times m}$, $y,z\in\Omega $ and $\rho>0$ such
   that $B(y,\rho)\cup B(z,\rho)\subset\Omega $. We will prove that
   \begin{equation*} 
      \begin{split}
         F(Mx,B(y,\rho))\leq F(Mx,B(z,\rho)).
      \end{split}
   \end{equation*} 
   The proof of the opposite inequality is analogous.

   Let $x,y\in \R^{d}$ and let 
   $x_\varepsilon =\arg \min(\dist(y,x + (\varepsilon  M )\Z^{d})) \big]
   $. Then $x_{\varepsilon } \to y$ as $\varepsilon \downarrow 0$.  
   From the periodicity hypothesis, one has that 
   \begin{equation*} 
      \begin{split}
         F\big(M,B(x,\rho,\kappa ,\varepsilon )\big) = F\big(M,B(x_{\varepsilon },\rho,\kappa 
         ,\varepsilon )\big) \leq  F\big(M,B(y,\rho +\delta ,\kappa 
         ,\varepsilon )\big)
      \end{split}
   \end{equation*} 
   where in the last inequality we have used the monotonicity with respect to the 
   inclusion relation of $A\mapsto F(u,A,\kappa ,\varepsilon )$ and $\delta $ is 
   such that $|y-x_{\varepsilon }|\leq \delta$.

   Let us now turn to the independence on the sequence on the chosen sequence. 
   Let us initially notice that because of the LDP, whenever $u=Mx$ where $M$ is 
   a linear map it holds 
   \begin{equation} 
      \label{eq:conclusione-utile-ldp}
      \begin{split}
         F'(u,A,\kappa)=F'(u,A)\qquad\text{and}\qquad F''(u,A,\kappa)=F''(u,A).
      \end{split}
   \end{equation} 

   Because of Theorem~\ref{thm:integral_represenatation_sobolev}, it is enough to  show that for every linear map $M$ the following limit exists and 
   \begin{equation*} 
      \begin{split}
         \frac{1}{|A |} \lim_{\varepsilon \downarrow 0}F'(Mx,A,\kappa ,\varepsilon)
      \end{split}
   \end{equation*} 

   The existence of the above limit(and its independence on $\kappa$) follows easily by  the standard methods with the help of an approximative subadditivity. 
   A simple proof can be found in \cite[Proposition~1.2]{LK-comm-math-phys}.
\end{proof}

\section{$\sbv$ Representation Theorem} 
\label{sec:sbv-representation-theorems}

In this section we extend the results of the previous section to more general local interactions, where the problem relaxes naturally in $\sbv$. 
The strategy will be very similar to the one used in Section~\ref{sec:sobolev-representation}.
One key argument in the previous section  was  Theorem~\ref{thm:integral_represenatation_sobolev}. We will replace this result from the literature with a similar  representation theorem for $\sbv$ functions. 
%

\subsection{A \emph{very} short introduction to $\sbv$} 

Before going into the details of our main Theorem of this section, let us 
define the functional spaces $\bv$ and $\sbv$.  For a general introduction on these spaces see \cite{AFPBV}.
However, please notice that the definitions given in this section differ slightly from the ones in \cite{AFPBV}. 
More precisely, in the following,  we additionally impose  the finiteness of $(n-1)$-Hausdorff measure of the jump set. 
This technical modification is done in order to be able to use the  general representation theorems, namely Theorem~\ref{thm:integral_represenatation_sbv-sbv}

Let $\Omega $ be an open set.  
We say that $u\in L^{1}(\Omega )$ belongs to $\bv(\Omega )$, if there exists a vector measure
$Du = (D_{1}u,\ldots, D_{n}u)$ with finite total variation in $\Omega $, 
such that 
\begin{equation*} 
   \begin{split}
      \int_{\Omega } u \partial _{i} \varphi  \dx = - \int \varphi  \d D_{i}u 
      \qquad \forall \varphi \in C^{1}_{0}(\Omega)
   \end{split}
\end{equation*} 

Let $Du = D^{a }u +D^{s} u $ be the Radon-Nikodym decomposition of $Du$ in 
absolutely continuous and singular part with respect to the $\leb^n$ and let 
$\nabla u$ be the density of $D^{a}u$.   It can be seen that $u$ is approximately 
differentiable at $x$ and the approximate differential equals to $\nabla u(x)$, 
\ie
\begin{equation*} 
   \begin{split}
      \lim_{\rho \downarrow 0} \rho ^{-n}\int_{B_{\rho }(x)}   \frac{|u(y) - 
         u(x) - \scal{\nabla u}{y-x} |}{|y-x |}\dy =0
   \end{split}
\end{equation*} 
for $\leb^{n}$-\ae   $x\in \Omega $.  

For the singular part, it is useful to introduce the upper and lower 
approximate limits $u_{+},u_{-}$, defined by
\begin{equation*} 
   \begin{split}
      u_{-}(x) = \inf \insieme{t\in [-\infty ,+\infty ]:\ \{ x\in \Omega :\ u(x) >t\} \text{ has density $0$ at } x}\\
      u_{+}(x) = \sup \insieme{t\in [-\infty ,+\infty ]:\ \{ x\in \Omega :\ u(x) <t\} \text{ has density $0$ at } x}.
   \end{split}
\end{equation*} 
It is well-known that $u_{+}(x)\in \R$ for $\hausd^{d-1}$-\ae  $x\in \Omega $.  
The jump set $S_u$ is defined by
\begin{equation*} 
   \begin{split}
      S_u:= \insieme{x\in : \ u_{-}(x)<u_{+}(x)}.
   \end{split}
\end{equation*} 
We define the jump part $Ju$ of the derivative  as the restriction of $D^{s}u$ 
to the jump set $S_{u}$.  
We also recall that there exists a Borel map $\nu _{u}:S_{u}\to S^{d-1}$ such that
\begin{equation*} 
   \begin{split}
      \nu_{E_t}(x) =\nu _{u} \qquad\text{for }\hausd^{d-1}\text{\ae 
         $x\in \partial^{*}E_t\cap S_{u}$ }
   \end{split}
\end{equation*} 
for any $t$ such that $E_{t}:=\insieme{x: u>t}$.  
\begin{proposition} 
   \label{prop:jump-part-sbv}
   Let $u\in \bv(\Omega)$.   Then, the jump part of the derivative is 
   absolutely continuous with respect to $\hausd^{d-1}$ and 
   \begin{equation*} 
      \begin{split}
         Ju = (u_{+} - u_{-})\nu_{u} \hausd^{d-1}\res S_{u}
      \end{split}
   \end{equation*} 
\end{proposition}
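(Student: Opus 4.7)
The plan is to combine the coarea formula for $\bv$ functions with De Giorgi's structure theorem for sets of finite perimeter, using a Fubini-type argument on the level sets $E_t = \{u>t\}$. The absolute continuity of $Ju$ with respect to $\hausd^{d-1}$ will be an immediate consequence of the resulting representation.

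First I would invoke the coarea formula for $u\in \bv(\Omega)$: for every Borel set $B\subset \Omega$ one has $Du(B) = \int_{\R} D\1_{E_t}(B)\,dt$, where $E_t$ has finite perimeter for $\leb^1$-a.e.\ $t$. Decomposing $Du = \nabla u\,\leb^n + D^c u + Ju$ and observing that the absolutely continuous part does not charge $S_u$ (which is $\hausd^{d-1}$-$\sigma$-finite, hence $\leb^n$-null), while the Cantor part is by definition singular with respect to $\hausd^{d-1}$ and therefore vanishes on $S_u$, one reduces to $Ju = Du\res S_u$ and thus
\begin{equation*}
   Ju = \int_{\R} \big(D\1_{E_t} \res S_u\big)\,dt.
\end{equation*}
By De Giorgi's theorem, $D\1_{E_t} = \nu_{E_t}\,\hausd^{d-1}\res \partial^{*}E_t$ (with the convention of the statement). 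Using the compatibility $\nu_{E_t}(x) = \nu_u(x)$ for $\hausd^{d-1}$-a.e.\ $x\in \partial^{*}E_t\cap S_u$, recalled immediately before the proposition, this rewrites as
\begin{equation*}
   Ju = \int_{\R} \nu_u\,\hausd^{d-1}\res(\partial^{*}E_t\cap S_u)\,dt.
\end{equation*}

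The key step is then a Fubini-Tonelli computation. For $\hausd^{d-1}$-a.e.\ $x\in S_u$, the set $\{t\in \R : x\in \partial^{*}E_t\}$ coincides up to an $\leb^1$-null set with the interval $(u_-(x), u_+(x))$: indeed, for $t<u_-(x)$ the point $x$ has density $1$ in $E_t$, while for $t>u_+(x)$ it has density $0$, so in both cases $x\notin \partial^{*}E_t$. Integrating in $t$ therefore gives $\int_{\R}\1_{\partial^{*}E_t\cap S_u}(x)\,dt = u_+(x)-u_-(x)$, and swapping the order of integration yields $Ju = (u_+-u_-)\,\nu_u\,\hausd^{d-1}\res S_u$, from which the absolute continuity with respect to $\hausd^{d-1}\res S_u$ is immediate.

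The main obstacle is ensuring the measurability required by Fubini: one must establish the joint Borel measurability of $(x,t)\mapsto \1_{\partial^{*}E_t}(x)$ on $S_u\times \R$, and handle simultaneously the $\hausd^{d-1}$-null exceptional sets arising from applying De Giorgi's theorem to each $E_t$ for $\leb^1$-a.e.\ $t$. A subsidiary technical point is the $\hausd^{d-1}$-rectifiability of $S_u$, needed so that $Du\res S_u$ is well-defined and coincides with the jump part in the standard sense; both points are handled by the slicing and approximation machinery standard in the $\bv$ literature.
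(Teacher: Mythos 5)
The paper does not actually prove Proposition~\ref{prop:jump-part-sbv}; it is recalled there as a known structure result (the Federer--Vol'pert theorem), and the cited reference \cite{AFPBV} contains a complete proof (Theorem~3.78 in that book). Your proposal is essentially that standard argument: coarea formula to write $Du\res S_u=\int_\R D\1_{E_t}\res S_u\dt$, De Giorgi's theorem for each finite-perimeter level set $E_t$, the compatibility $\nu_{E_t}=\nu_u$ on $\partial^*E_t\cap S_u$ recalled just before the statement, and a Fubini--Tonelli swap. So there is no competing approach in the paper to compare against, and the route you chose is the right one.

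One point you should tighten. You assert that $\{t:x\in\partial^*E_t\}$ coincides with $(u_-(x),u_+(x))$ up to an $\leb^1$-null set, but you only argue one inclusion: $t<u_-(x)$ gives density one and $t>u_+(x)$ gives density zero, so the set on the left is contained in $[u_-(x),u_+(x)]$. That inclusion already yields the absolute continuity of $Ju$ with respect to $\hausd^{d-1}\res S_u$ and the one-sided bound $|Ju|\le(u_+-u_-)\,\hausd^{d-1}\res S_u$, but not the equality. For the reverse inclusion one must show that $u_-(x)<t<u_+(x)$ places $x$ in $\partial^*E_t$ for $\hausd^{d-1}\otimes\leb^1$-a.e.\ such pair: the definition of $u_\pm$ only places $x$ in the essential boundary $\partial^e E_t$, and one then needs Federer's theorem $\hausd^{d-1}(\partial^e E_t\setminus\partial^*E_t)=0$ for a.e.\ $t$, together with a further Fubini pass (requiring the $\sigma$-finiteness of $\hausd^{d-1}\res S_u$, itself obtained from $\int_\R\hausd^{d-1}(\partial^*E_t)\dt<\infty$) to control the joint exceptional set. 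This is exactly where the measurability and rectifiability points you list at the end of your proposal become load-bearing rather than subsidiary, so they should be carried out rather than only named.
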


Finally, we define the space  $\sbv_{p}(\Omega )$  as the set of functions 
$u\in \bv(\Omega )$ such that $\nabla u\in L^{p}(\Omega )$ $D^{s}u = Ju$ 
and 
\begin{equation} 
   \label{eq:cond-jump-extra}
   \begin{split}
      \hausd^{d-1}(S_u)<+\infty.
   \end{split}
\end{equation} 

Note that in \cite{AFPBV} the condition \eqref{eq:cond-jump-extra} is not imposed.

\subsection{Preliminary results}

Let us now recall some well-known results, which will be useful in the sequel. 

\begin{theorem}[{\cite[Theorem~4.7]{AFPBV}}] 
   \label{amb_theorem_4.7-sbv}
   Let $\varphi:[0,+\infty)\to [0,+\infty]$, $\theta: [0,+\infty)\to 
   [0,+\infty]$ be a lower semicontinuous function 
   increasing functions and assume that 
   \begin{equation}
      \label{afp_4.3-sbv}
      \begin{split}
         \lim_{t\to +\infty } 
         \frac{\varphi(t)}{t}= +\infty \qquad 
         \text{and} \qquad \lim_{t\to 0 }\frac{\theta(t)}{t} =+\infty
      \end{split}
   \end{equation}
   Let $\Omega \subset \R^n $ be an open and bounded and let 
   $(u_h)\subset \sbv(\Omega)$ such that  
   \begin{equation}
      \label{afp_4.4-sbv}
      \begin{split}
         \sup \insieme{\int_{\Omega} 
            \varphi(|\nabla u_h|)+\int_{J_{ u_h }}\theta (|u^{+}_h -u^{-}_h |) \d \hausd^{n-1}} <+\infty.  
      \end{split}
   \end{equation}
   If $(u_h)$ weakly* converges in $\bv(\Omega)$, then $u\in \sbv(\Omega)$, the 
   approximate gradients $\nabla u_h$ weakly converge to 
   $\nabla u \in (L^{1}(\Omega))^N$.   $D_j u_h$ weakly* converge to 
   $D_j u\in \Omega$  and 
   \begin{equation*}
      \begin{split}
         \int_{\Omega} \varphi(|\nabla u|)\dx \leq \liminf_{h\to +\infty} 
         \int_{\Omega}  \varphi (|\nabla u_h|) \dx \qquad \text{if } \varphi \text{\ is convex}\\
         \int _{J_u} \theta(|u^+ -u^-|) \d \hausd^{n-1}\leq \liminf _{h\to 
            +\infty} \int_{J_{u_h}} \theta(|u_h^+ -u_h^-|) \d\hausd^{n-1} 
      \end{split}
   \end{equation*}
   if $\theta$ is concave.  

\end{theorem}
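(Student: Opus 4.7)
The plan is to proceed in three phases: first extract weak limits for the absolutely continuous and singular parts of the derivatives; second, show that the limit $u$ actually lies in $\sbv(\Omega)$ (i.e. no Cantor part survives) and identify the weak limits with $\nabla u$ and $Ju$; third, establish the two lower semicontinuity estimates. For the compactness step, I would invoke the de la Vallée-Poussin criterion: the superlinearity $\varphi(t)/t \to +\infty$ at infinity together with the uniform bound \eqref{afp_4.4-sbv} gives equi-integrability of $\{\nabla u_h\}$ in $L^{1}(\Omega;\R^{N})$, and by Dunford-Pettis one extracts a subsequence with $\nabla u_h \rightharpoonup g$ weakly in $L^{1}$. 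The weak-$*$ convergence in $\bv$ bounds $|Du_h|(\Omega)$, so a further extraction of $Ju_h = D^{s}u_h$ converges weakly-$*$ in the sense of Radon measures to some vector-valued measure $\mu$.

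The heart of the argument is the second phase, proving $u \in \sbv(\Omega)$ with $\nabla u = g$ and $Ju = \mu$. My plan is to use the slicing characterisation of $\sbv$: a function $v\in \bv$ lies in $\sbv$ iff for every $\xi \in S^{d-1}$ and $\hausd^{d-1}$-\ae $y \in \xi^{\perp}$, the 1D slice $v^{\xi}_{y}$ lies in $\sbv$ of the corresponding interval. Slicing the energy bound \eqref{afp_4.4-sbv} produces a uniform control on slices, and I would then invoke the 1D $\sbv$ closure result: the hypothesis $\theta(t)/t\to +\infty$ as $t\to 0$ is exactly what prevents a Cantor component from forming, since any hypothetical Cantor part of $D^{s}u$ would have to arise from jump contributions $Ju_h$ of amplitude shrinking to zero, and $\theta$ assigns divergent cost per unit mass to such contributions. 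Identification of $g$ with $\nabla u$ and $\mu$ with $Ju$ then follows from the uniqueness of the Lebesgue decomposition of $Du$.

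For phase three, the bulk inequality is a direct application of an Ioffe-type weak-$L^{1}$ lower semicontinuity theorem, valid for every convex, non-negative, lower semicontinuous integrand. For the surface term, concavity of $\theta$ with $\theta(0)=0$ yields sub-additivity $\theta(a+b)\leq \theta(a)+\theta(b)$; slicing along coordinate directions reduces the claim to a 1D lower semicontinuity statement for $\sum_{x\in J_{v}}\theta(|v^{+}(x)-v^{-}(x)|)$, which is handled by Fatou combined with sub-additivity (merging atoms accumulating to a single point can only lower the total $\theta$-cost). The main obstacle I expect is the slicing argument in the second phase: making rigorous the claim that the $\theta$-bound rules out a Cantor part in the limit. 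This is the genuine content of Ambrosio's $\sbv$ closure theorem and requires both a careful analysis of 1D slices and a quantitative exploitation of the superlinearity of $\theta$ at zero.
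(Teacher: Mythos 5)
The paper does not prove this statement; it is quoted verbatim with a citation to \cite[Theorem~4.7]{AFPBV}, so there is no ``paper's own proof'' to compare against. Your outline is a faithful sketch of Ambrosio's original slicing proof of the SBV closure theorem, and the route is sound: de la Vall\'ee-Poussin plus Dunford--Pettis for the gradients, slicing plus a one-dimensional closure argument to rule out Cantor parts, Ioffe for the bulk term, and Fatou plus subadditivity of concave $\theta$ for the surface term. This is essentially the standard argument, though the cited textbook organises the Cantor-part exclusion somewhat differently (via a chain-rule / truncation device for $u\in L^\infty$ and then a truncation to remove that restriction, rather than pure slicing); the two routes are interchangeable in substance.

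One phrasing in your phase two deserves a caveat. You write that once $u\in\sbv$ is established, the identification $g=\nabla u$ and $\mu=Ju$ ``follows from the uniqueness of the Lebesgue decomposition of $Du$.'' This is not quite automatic: from the abstract limits you only have $Du = g\,\leb^n + \mu$, and since $\mu$ is a weak-$*$ limit of singular measures it need not itself be singular with respect to $\leb^n$; uniqueness of the Lebesgue decomposition is therefore not directly applicable. In the slicing approach the identification is instead a by-product of the one-dimensional closure theorem applied slice by slice: on $\hausd^{n-1}$-almost every line, the weak $L^1$ limit of the 1D approximate gradients is shown to coincide with the approximate gradient of the slice of $u$, and Fubini then yields $g=\nabla u$ $\leb^n$-a.e., hence $\mu = Du - \nabla u\,\leb^n = Ju$. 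So the identification is genuinely part of the hard slicing step and not a separate abstract deduction --- worth keeping straight, but not a fatal flaw since you already flag the slicing as the main obstacle.
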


\begin{theorem}[Compactness $\sbv$ {\cite[Theorem~4.8]{AFPBV}}] 
   \label{afp_theorem_4.8-sbv}
   Let $\varphi,\ \theta $ as in Theorem~\ref{amb_theorem_4.7-sbv}. Let $(u_h)$ in 
   $\sbv(\Omega)$ satisfy \eqref{afp_4.4-sbv} and assume in addition that 
   $\|u_h\|_{\infty}$ is uniformly bounded in $h$.  Then there exists a 
   subsection $(u_{h_k})$ weakly* converging in $\bv(\Omega)$ to $u\in \sbv(\Omega)$.  
\end{theorem}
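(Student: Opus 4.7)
The plan is to derive a uniform bound on the total variation $|Du_h|(\Omega)$, apply the classical $\bv$ compactness theorem to extract a subsequence weakly$^*$ converging to some $u\in\bv(\Omega)$, and then invoke Theorem~\ref{amb_theorem_4.7-sbv} to upgrade the limit from $\bv$ to $\sbv$.

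The first and main step is the uniform $\bv$ bound. Since $\|u_h\|_\infty\leq M$ and $\Omega$ is bounded, the sequence is bounded in $L^1(\Omega)$. For the absolutely continuous part, the superlinearity $\varphi(t)/t\to\infty$ at infinity gives, via a de la Vall\'ee Poussin style estimate, a constant $C_1$ with $t\leq C_1+\varphi(t)$, so
\begin{equation*}
   \int_\Omega |\nabla u_h|\,\dx \leq C_1|\Omega| + \int_\Omega \varphi(|\nabla u_h|)\,\dx \leq C.
\end{equation*}
For the jump part $|D^ju_h|(\Omega)=\int_{J_{u_h}}|u_h^+-u_h^-|\,\d\hausd^{n-1}$, the crucial input is $|u_h^+-u_h^-|\leq 2M$. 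I would split $J_{u_h}$ at a small threshold $\delta>0$: on the part where $|u_h^+-u_h^-|<\delta$, the superlinearity $\theta(t)/t\to\infty$ as $t\to 0$ lets one bound the integrand by $\theta(|u_h^+-u_h^-|)/K$ with $K$ as large as desired, giving a contribution bounded by $C/K$; on the complementary part, monotonicity of $\theta$ yields $\theta(|u_h^+-u_h^-|)\geq\theta(\delta)>0$, so $\hausd^{n-1}(\{|u_h^+-u_h^-|\geq\delta\})\leq C/\theta(\delta)$ and the corresponding contribution is at most $2MC/\theta(\delta)$. Combining, $|D^ju_h|(\Omega)\leq C'$ uniformly in $h$.

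With $|Du_h|(\Omega)$ uniformly bounded, standard $\bv$ compactness produces a subsequence $(u_{h_k})$ and a limit $u\in\bv(\Omega)$ with $u_{h_k}\to u$ strongly in $L^1(\Omega)$ and $Du_{h_k}$ converging weakly$^*$ to $Du$ as measures. The hypotheses of Theorem~\ref{amb_theorem_4.7-sbv} are assumed to hold along the whole sequence, so applying that closure result to the extracted subsequence gives $u\in\sbv(\Omega)$, together with the weak $L^1$ convergence of the approximate gradients and the weak$^*$ convergence of the jump parts.

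The hard part is precisely the control of the jump component $|D^ju_h|$. Without the uniform bound $\|u_h\|_\infty\leq M$ the jump amplitudes could be arbitrarily large, and neither the superlinearity of $\theta$ at $0$ (which controls only small jumps) nor its monotonicity (which controls only jumps above a fixed size) alone suffices. The $L^\infty$ bound truncates the jumps to $[0,2M]$, and on this bounded interval the two behaviours of $\theta$ dovetail to yield the required $L^1$ estimate on the jump measure, which is the key ingredient that makes $\bv$-compactness applicable and hence allows Theorem~\ref{amb_theorem_4.7-sbv} to close the argument.
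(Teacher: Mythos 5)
The paper does not prove this statement; it is quoted verbatim from Ambrosio--Fusco--Pallara (their Theorem~4.8) as a preliminary. Your argument is precisely the standard one from that reference and it is correct: a de la Vall\'ee Poussin bound from $\varphi(t)/t\to\infty$ controls $\|\nabla u_h\|_{L^1}$, the $L^\infty$ bound truncates jump amplitudes to $[0,2M]$ where $\theta(t)/t\to\infty$ at $0$ together with monotonicity gives $t\le C_M\,\theta(t)$ and hence a uniform bound on $|D^j u_h|(\Omega)$, and BV compactness followed by Theorem~\ref{amb_theorem_4.7-sbv} closes the argument.
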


\ 

We now give the a set of condition which give a representation formula similar 
to the one of  Theorem~\ref{thm:integral_represenatation_sobolev}.

Let
\begin{equation*}
   \begin{split}
      \mathcal F : \sbv_p(\Omega,\R^d)\times \mathcal A(\Omega) \to [0,+\infty]
   \end{split}
\end{equation*}
such that the followings hold:
\begin{enumerate}[label=(H\arabic*)]
   \item	$\mathcal F(u,\cdot)$ is the restriction to $\mathcal A(\Omega)$ 
      of a Radon measure,
   \item $\mathcal F (u,A)=\mathcal F(v,A)$ whenever $u=v\ \leb^n$ \ae on 
      $A\in \mathcal A(\Omega)$,  
   \item $\mathcal F(\cdot,A)$ is $L^1$ l.s.c.,
   \item there exists a constant $C$ such that 
      \begin{equation}
         \label{eq:fonseca-condition-iv}
         \begin{split}
            \frac{1}{C} &\bigg( \int_{A} |\nabla u|^p \dx  + 
            \int_{S(u)\cap A}\big( 1+ |u^+ -u^-| \big)\d\hausd^{n-1} 
            \bigg)\\& \leq \mathcal F(u,A)\\& \leq C\bigg( \int_{A} |\nabla u|^p \dx + 
            \int_{S(u)\cap A}\big( 1+ |u^+ -u^-| \big)\d\hausd^{n-1} 
            \bigg).
         \end{split}
      \end{equation}
\end{enumerate}
Here, $\Omega$ is an open bounded set of $\R^n$. As before,  $\Acal(\Omega)$ is the class of all open subsets of $\Omega$ and  $\sbvp(\Omega)$ is the space of functions $u\in \sbv(\Omega)$ such that ${\nabla}u\in L^p(\Omega)$ and $\hausd^{n-1}\big(J_u\big)< +\infty$.  
For every $u\in \sbvp(\Omega)$ and $A\in \Acal(\Omega)$ define 
\begin{equation*}
   \begin{split}
      m(u;A):=\inf \insieme{\Fcal(u;A):\ w\in \sbvp(\Omega) \ \text{such that 
            $w=u$ in a neighbourhood of $\partial A$}}
   \end{split}
\end{equation*}

The role of Theorem~\ref{thm:integral_represenatation_sobolev}, will be played by the following result, whose proof can be founded in~\cite{MR1941478}.

\begin{theorem} 
   \label{thm:integral_represenatation_sbv-sbv}
   Under hypotheses (H1)-(H4), for every $u\in \sbvp(\Omega)$ and $A\in \Acal(\Omega)$ there exists a function $W_{1}$ and $W_{2}$ such that  $W_{1}$ is quasi-convex, $W_{2}$ is $\bv$-elliptic and such that 
   \begin{equation*}
      \begin{split}
         \Fcal(u,A):=\int _{A} W_{1}(x,u,\nabla u)\dx + \int_{A\cap S_u} W_{2}(x,u^+,u^-,\nu_u)\d\hausd^{n-1}.
      \end{split}
   \end{equation*}
   Moreover, the functions $W_{1}$ and $W_{2}$ can be computed via
   \begin{equation*}
      \begin{split}
         W_1(x_0,u_0,):=\limsup _{\varepsilon\to 0^+} \frac{ \boldsymbol{m}\big(u_0+\xi(\cdot - x_0),Q(x_0,\varepsilon)\big)}{\varepsilon^n}
      \end{split}
   \end{equation*}
   \begin{equation*}
      \begin{split}
         W_2(x_0,a,b,\nu):=\limsup_{\varepsilon\to 0^+} \frac{ \boldsymbol{m}\big(u_{x_0,a,b,\nu},Q_{\nu}(x_0,\varepsilon) \big)}{\varepsilon^{ n-1 }}
      \end{split}
   \end{equation*}
   for all $x_0\in \Omega,$ $ u_0$,  $a,b\in \R^d$, $\xi\in \R^d$, 
   $\nu\in S^{n-1}$ and where 
   \begin{equation*}
      \begin{split}
         u_{x_0,a,b,\nu}(x):= \begin{cases}
            a &\text{if }(x-x_0)\cdot\nu >0,\\
            b &\text{if }(x-x_0)\cdot\nu \leq 0.
         \end{cases}
      \end{split}
   \end{equation*}
   As $u_{x_0,b,a,\nu}=u_{x_0,a,b,\nu}$ $\leb^n$ \ae \ in 
   $Q_{\nu}(x_0,\varepsilon)= Q_{-\nu}(x_0,\varepsilon)$, one has that
   \begin{equation*} 
      \begin{split}
         W_2(x_0,b,a,-\nu)=W_2(x_0,a,b,\nu),
      \end{split}
   \end{equation*} 
   for every $x_0\in \Omega$, $a,b\in \R^d$ and $\nu\in \R^d$.  
\end{theorem}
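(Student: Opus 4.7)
I would follow the ``global method for relaxation'' of Bouchitt\'e, Fonseca and Mascarenhas: compare the given functional $\Fcal(u,\cdot)$ with the minimum $m(u;\cdot)$, show that they coincide as Radon measures on $\Acal(\Omega)$, and then extract the two densities by Besicovitch differentiation with respect to $\Leb{n}\res\Omega$ and $\hausd^{n-1}\res S_u$ respectively.

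\emph{Identification $\Fcal(u,\cdot)=m(u;\cdot)$.} The inequality $m(u;A)\leq \Fcal(u,A)$ is immediate from the definition of $m$ together with (H2), since $u$ itself is an admissible competitor with the prescribed boundary values. For the reverse, given a competitor $w\in \sbv_p(\Omega)$ agreeing with $u$ near $\partial A$, I would use a soft surgery in a shrinking neighborhood of $\partial A$ to glue $w$ with $u$ without introducing new $\hausd^{n-1}$-mass on $S_u\cup S_w$ beyond what is already there. The cost of this surgery is controlled by the two-sided growth (H4), and combined with (H3) it yields $\Fcal(u,A)\leq m(u;A)+o(1)$, with the error vanishing as the annulus thickness tends to zero. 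Since (H1) says $\Fcal(u,\cdot)$ is a Radon measure and (H4) shows $m(u;\cdot)$ satisfies the De~Giorgi--Letta hypotheses of Theorem~\ref{thm:degiorgiLetta}, one extends $m$ to a Borel measure and equates the two measures.

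\emph{Blow-up of bulk and surface densities.} For $\Leb{n}$-a.e.\ $x_0 \in \Omega\setminus S_u$ I would use that $x_0$ is an approximate differentiability point of $u$, apply Besicovitch differentiation to obtain
\begin{equation*}
   \frac{\d \Fcal(u,\cdot)}{\d \Leb{n}}(x_0)
   =\lim_{\varepsilon\downarrow 0}\frac{\Fcal(u,Q(x_0,\varepsilon))}{\varepsilon^n},
\end{equation*}
and then replace $u$ on $Q(x_0,\varepsilon)$ by the affine tangent $\ell(x)=u(x_0)+\nabla u(x_0)\cdot (x-x_0)$, using (H4) to estimate the discrepancy. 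Combined with the $\Fcal=m$ identification this gives $W_1(x_0,u(x_0),\nabla u(x_0))$ as claimed. The surface density is obtained analogously at $\hausd^{n-1}$-a.e.\ $x_0\in S_u$ by blowing up on $Q_\nu(x_0,\varepsilon)$ with the normalization $\varepsilon^{-(n-1)}$; the rectifiability of $S_u$ and Proposition~\ref{prop:jump-part-sbv} ensure that $u$ is $L^1$-close to $u_{x_0,u^+(x_0),u^-(x_0),\nu_u(x_0)}$ after rescaling, while the bulk contribution is negligible at this scale by (H4).

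\emph{Quasiconvexity and BV-ellipticity.} Once $W_1$ and $W_2$ are identified via the cell formulas, quasiconvexity of $W_1$ follows by plugging into $m(u_0+\xi\cdot(\cdot-x_0),Q(x_0,\varepsilon))$ competitors of the form $\ell+\varphi$ with $\varphi\in W_0^{1,p}(Q(x_0,\varepsilon))$ and applying (H3). $\bv$-ellipticity of $W_2$ follows in the same way, testing $m(u_{x_0,a,b,\nu},Q_\nu(x_0,\varepsilon))$ with piecewise constant perturbations on polyhedral partitions. The symmetry $W_2(x_0,b,a,-\nu)=W_2(x_0,a,b,\nu)$ is immediate from $u_{x_0,b,a,-\nu}=u_{x_0,a,b,\nu}$.

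I expect the main obstacle to be the surgery step in the identification $\Fcal(u,\cdot)=m(u;\cdot)$: imposing the boundary datum of $u$ on the competitor $w$ inside a thin annulus around $\partial A$ must be done without generating extra jump mass beyond $o(\hausd^{n-1})$ and without blowing up the $L^p$ norm of the approximate gradient. This is precisely where the extra condition $\hausd^{n-1}(S_u)<+\infty$ built into $\sbv_p(\Omega)$ is essential, together with the matching upper and lower bounds in (H4).
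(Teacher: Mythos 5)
You have correctly identified the source and the strategy: the paper does not prove this theorem but cites it directly from \cite{MR1941478} (Bouchitt\'e--Fonseca--Leoni--Mascarenhas), and your sketch reproduces the outline of their global method. The Besicovitch blow-up steps, the identification of bulk and surface densities at $\Leb{n}$-a.e.\ and $\hausd^{n-1}$-a.e.\ points respectively, and the deduction of quasiconvexity and $\bv$-ellipticity from the cell formulas, are all in the right spirit.

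There is, however, a genuine gap in your first step. The claimed identification $\Fcal(u,\cdot)=m(u;\cdot)$ as Radon measures on $\Acal(\Omega)$ is false. One always has $m(u;A)\leq\Fcal(u,A)$, but equality can fail badly: with $n=1$, $\Fcal(u,A)=\int_A|u'|^2\dx+\hausd^0(S_u\cap A)$ and $u(x)=x^2$ on $A=(-1,1)$, one gets $\Fcal(u,A)=8/3$ while $m(u;A)=0$ (take competitors equal to the constant $1$ in $(-1+\delta,1-\delta)$ and equal to $x^2$ near $\pm1$, and let $\delta\downarrow0$). No surgery near $\partial A$ can close this: the surgery alters $w$, not $u$, and nothing in (H1)--(H4) forces $\Fcal(u,A)\leq\Fcal(w,A)+o(1)$ for a competitor $w$ that may have far less energy than $u$ in the bulk. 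What is actually proved in \cite{MR1941478} is the weaker (and subtler) statement that the \emph{Radon--Nikodym derivatives} of $\Fcal(u,\cdot)$ and $m(u;\cdot)$ coincide, \ie
\[
\frac{\d\Fcal(u,\cdot)}{\d\Leb{n}}(x_0)=\lim_{\varepsilon\to0^+}\frac{m\big(u;Q(x_0,\varepsilon)\big)}{\varepsilon^n}\quad\text{for }\Leb{n}\text{-a.e.\ }x_0\in\Omega,
\]
and the analogous equality at $\hausd^{n-1}$-a.e.\ $x_0\in S_u$ with normalization $\varepsilon^{-(n-1)}$. It is precisely in establishing this asymptotic density comparison that the delicate annulus surgery appears: starting from a near-optimal competitor for $m(u;Q(x_0,\varepsilon))$, one must show that the boundary mismatch between $u$ and its blow-up (the affine tangent, or the two-valued jump model) contributes only $o(\varepsilon^n)$, resp.\ $o(\varepsilon^{n-1})$, to $\Fcal$, exploiting (H1) to localise and (H4) to control the thin layer. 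Collapsing this local derivative comparison into an exact measure identity skips the hardest part of the argument.
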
 

\begin{remark} 
   \label{rmk:new2-sbv-fonseca}
   Condition \eqref{eq:fonseca-condition-iv}, can be softened  to
   \begin{equation}
      \label{eq:fonseca-condition-iv-new}
      \begin{split}
         \frac{1}{C} &\bigg( \int_{A} |\nabla u|^p \dx  + 
         \int_{S(u)\cap A}\big( |u^+ -u^-| \big)\d\hausd^{n-1} 
         \bigg)\\& \leq \mathcal F(u,A)\\& \leq C\bigg( \int_{A} |\nabla u|^p \dx + 
         \int_{S(u)\cap A}\big( |u^+ -u^-| \big)\d\hausd^{n-1} 
         \bigg).
      \end{split}
   \end{equation}
   Indeed, let us suppose that $\Fcal $ satisfies only \eqref{eq:fonseca-condition-iv-new}. By the same theorem(Theorem~\ref{thm:integral_represenatation_sbv-sbv}) it is possible to represent $Fcal(u,A) + \hausd(S_{u}\cap A)$, thus by removing the subtracted part it is possible to represent $\Fcal$. 
\end{remark} 

Finally, let us recall also the following result:
\begin{theorem}[{\cite{MR1686747}}]
\label{thm:cortesani}
Assume  that $\partial  \Omega$  is locally  Lipschitz  and let  $u\in
\sbvp(\Omega,\R^m)$. for every $\eps >0$ there exists a function $v\in
\sbvp(\Omega,\R^n)$ such that
\begin{enumerate}
\item $S_v$ is essentially closed.  
\item $\overline {S_v}$ is a polyhedral set
\item  $\|u-v\|_{L^p}\leq \eps$
\item $\|\nabla u -\nabla v\|\leq \eps$
\item $|\hausd^{n-1}(S_u)- \hausd^{n-1}(S_v)|\leq \eps$
\item $v\in C^\infty(\Omega\setminus \overline {S_v})$
\end{enumerate}
\end{theorem}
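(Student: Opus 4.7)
The plan is to proceed in two stages: first approximate $S_u$ by a polyhedral set $\Sigma\subset\Omega$ while controlling its $\hausd^{n-1}$-measure, then modify $u$ off a thin neighbourhood of the removed pieces so that the new jump set lies in $\overline\Sigma$, and finally mollify on each side of $\Sigma$.

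\textbf{Stage 1 (polyhedralization of the jump set).} Since $S_u$ is countably $(n-1)$-rectifiable with $\hausd^{n-1}(S_u)<+\infty$, a standard Besicovitch-type argument lets us cover $S_u$ up to an $\hausd^{n-1}$-null set by a countable family of $C^1$ hypersurfaces and, given $\delta>0$, extract a finite subfamily whose union captures $S_u$ up to $\hausd^{n-1}$-mass $\delta$. Each piece is then approximated by a finite union of $(n-1)$-simplices in affine hyperplanes, with both Hausdorff distance and $(n-1)$-measure converging. Assembling these pieces yields a polyhedral set $\Sigma\Subset\Omega$ with $|\hausd^{n-1}(S_u)-\hausd^{n-1}(\Sigma)|\leq\delta$, contained in an arbitrarily thin tubular neighbourhood $T_\eta(\Sigma)$, and such that $\Omega\setminus\overline\Sigma$ splits into finitely many open components $U_j$ with locally Lipschitz boundary.

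\textbf{Stage 2 (smoothing on each component).} On each $U_j$ the residual jump set $S_u\cap U_j$ has $\hausd^{n-1}$-measure at most $\delta$ and lies in the thin strip $T_\eta(\Sigma)\cap U_j$. We remove it by an $\sbvp$-reflection/extension procedure across that strip, producing a function $\tilde u_j\in W^{1,p}(U_j')$ on a slightly enlarged open set $U_j'\supset U_j$, agreeing with $u$ outside $T_\eta(\Sigma)$. We then mollify $\tilde u_j$ at a scale $\rho\ll\dist(\partial U_j\setminus \Sigma,\,\Sigma)$ and small enough that $\|\tilde u_j\ast\rho_\rho - \tilde u_j\|_{W^{1,p}(U_j)}\leq\delta$. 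Gluing the resulting smooth pieces gives $v$ with $v\in C^\infty(\Omega\setminus\overline{S_v})$, $S_v\subset\overline\Sigma$ essentially closed and polyhedral, so that (1), (2), (6) are immediate.

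\textbf{Error estimation and main obstacle.} Property (5) is exactly the Stage~1 estimate. Property (3) follows since the mollification is $L^p$-close to $\tilde u_j$, which coincides with $u$ outside the thin strip $T_\eta(\Sigma)$ of vanishing measure. The delicate point is (4): the reflection/extension used to kill the residual jumps on $T_\eta(\Sigma)$ can produce a large contribution to $\nabla\tilde u_j$ inside the strip, and one must ensure this extra mass is $O(\delta)$ in $L^p$. This is done by choosing the extension to be a uniformly bounded linear operator from $\sbvp$ on one side of a local piece of $\Sigma$ to $W^{1,p}$ on a slightly larger slab, noting that $\int_{T_\eta(\Sigma)}|\nabla u|^p\dx\to 0$ as $\eta\downarrow 0$ by absolute continuity of the integral, and that the contribution of the small residual jumps is controlled by $\delta$. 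Picking first $\delta$, then $\eta$, then the mollification scale $\rho$, in that order, yields $\|\nabla u-\nabla v\|_{L^p}\leq\epsilon$ together with (1)--(6). The main difficulty is therefore the simultaneous control of the gradient norm and the polyhedralization of the jump set, which is essentially a question of choosing the extension so that the gradient blow-up across the removed jumps is summable with the desired bound.
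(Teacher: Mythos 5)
The paper does not prove this statement; Theorem~\ref{thm:cortesani} is quoted verbatim as a known density result and attributed to Cortesani \cite{MR1686747}, so there is no internal proof to compare against. Your sketch is nevertheless worth commenting on, because it contains a genuine gap.

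The problematic step is Stage~2. After Stage~1 you have a polyhedral $\Sigma$ that captures $S_u$ up to $\hausd^{n-1}$-mass $\delta$, but the \emph{residual} jump set $S_u\setminus\Sigma$ (of mass $\le\delta$) has no reason whatsoever to lie inside a tubular neighbourhood $T_\eta(\Sigma)$: the finite subcover captures most of the \emph{measure} of $S_u$, not a spatial neighbourhood of it, and the uncovered $\delta$-mass can be scattered anywhere in $\Omega$. Your subsequent ``reflection/extension across the strip'' argument therefore acts on the wrong set. Moreover, even granting the localization, the extension operator you invoke is not defined: the residual jumps are an arbitrary rectifiable set of small measure, not a Lipschitz graph across which one can reflect. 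What is actually needed here is a separate and nontrivial lemma (the technical heart of Cortesani's argument) asserting that an $\sbvp$ function whose jump set has small $\hausd^{n-1}$-measure can be modified, away from a set of small volume, into a $W^{1,p}$ function with controlled increase of the gradient $L^p$-norm; this uses the coarea formula, a cube decomposition, and a Poincar\'e-type inequality on good cubes, and it is precisely the piece that makes property \emph{(iv)} hold. Your sketch correctly identifies \emph{(iv)} as the delicate point and correctly orders the choice $\delta\to\eta\to\rho$, but it asserts rather than proves the two facts it most needs: that the residual jump is spatially controlled, and that it can be removed at small $L^p$-gradient cost. As written, Stage~2 would fail.
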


\subsection{Hypothesis and Main Theorem} 

Given Theorem~\ref{amb_theorem_4.7-sbv}, it is natural to impose the following hypothesis.   

Let $g^{(1)}$ a monotone convex functions  such that there exists a constant $C$ such that
\begin{equation*} 
   \begin{split}
      g^{(1)}(t) \geq C \max(t^{p} -1,0)
   \end{split}
\end{equation*} 
and $g^{(2)}$ be a monotone concave function such that 
\begin{equation*} 
   \begin{split}
      g^{(2)}(t)\geq c > 0
      \qquad\text{and}\qquad \lim_{t\uparrow\infty } \frac{g^{(1)}(t)}{t} = +\infty .    
   \end{split}
\end{equation*} 
The typical example we have in mind is when $g^{(1)}(t):= t^{p}$ and  $g^{(2)}(t) :=1+ t^{\alpha }$, where $0<\alpha <1$ and $p>1$.  

Let $T_\varepsilon\uparrow \infty $ be such that $\varepsilon T_\varepsilon\downarrow 0$. We   denote
\begin{equation*}
   \begin{split}
      g_\varepsilon(x)=
      \begin{cases}
         g^{(1)}(\|x\|) &\text{if }\|x\|< T_\varepsilon,\\
         \frac{1}{\varepsilon} g^{(2)}(\varepsilon \|x\|) &\text{if }\|x\|\geq T_\varepsilon.
      \end{cases}
   \end{split}
\end{equation*} 
We will also assume that there exists a constant $C$ such that 
$g^{(1)}(T_\varepsilon)\leq \frac{C}{\varepsilon}g^{(2)}(T_\varepsilon \varepsilon)$, and that for every $M>0$ there exists a constant $C_{M}$ such that
\begin{equation*} 
   \begin{split}
      g_\varepsilon (M|t|)\leq C_{M} g_{\varepsilon }(|t|).
   \end{split}
\end{equation*} 

Let $( f_{\xi ,\varepsilon } )$ be a family of local interactions such that for every $\xi,\varepsilon$ it holds
\begin{equation} 
   \label{eq:hypothesis-alto-sbv}
   \begin{split}
      f_{\xi ,\varepsilon }(x,t) \lesssim C_{\xi} \big(g_{\varepsilon }(|t |)+1\big)
   \end{split}
\end{equation} 
with $\sum_{\xi\in \Z^d} |\xi|C_{\xi} < +\infty$, and such that for every $1\leq j\leq d$ it holds
\begin{equation} 
   \label{eq:hypothesis-stima-basso-primi-vicini-sbv}
   \begin{split}
      f_{e_{i},\varepsilon }(x,t) \gtrsim \big(g_{\varepsilon }(|t|) - 1\big) 
   \end{split}
\end{equation} 

We will assume also that there exists a constant $M<+\infty$ such that 
\begin{equation*} 
   \begin{split}
      \int_{\R} \exp\left(-g_{\varepsilon }(t)\right) \dt \leq M.
   \end{split}
\end{equation*} 

Let us now define the Hamiltonians as 
\begin{equation*}
   \begin{split}
      H(u,A,\varepsilon )= \sum_{{\xi\in
            \Z^N}}\sum_{x \in R^\xi_\varepsilon(A)}
      f_{ \xi,\varepsilon  }\Big(x,\frac{\varphi (x+\varepsilon \xi)-\varphi (x)}{ |\xi|}\Big)
   \end{split}
\end{equation*}
and 
\begin{equation*} 
   \begin{split}
      H_{\infty}({\varphi},A,\varepsilon):=\sum_{\xi\in\Z^{d}}\sum_{x\in A_{\varepsilon}} f_{\varepsilon ,\xi }(x,\nabla _{\xi }\varphi(x)).
   \end{split}
\end{equation*}

\begin{remark} 
   \label{rmk:new-sbv}
   Let $u\in \sbvp(\Omega)\cap L^{\infty}(\Omega)$. Then one can show  there exists an discretized $\varphi_{u,\varepsilon}$ such that 
   \begin{equation*} 
      \begin{split}
         \|u \|_{\sbvp} \lesssim \varepsilon^{d}\sum_{x\in \varepsilon \Z^{d}\cap \Omega} g_{\varepsilon}(\nabla \varphi_{u,\varepsilon}) \lesssim \|u \|_{\sbvp}.
      \end{split}
   \end{equation*} 
   Indeed,  whenever $u$ is piecewise in $C^{\infty}$, the statement is trivial.  In order to conclude the general case it is enough to use Theorem~\ref{thm:cortesani}. 
\end{remark} 

Let us discuss very informally the above hypothesis. 
The function $g_{\varepsilon }$ will play the role of $\|\cdot\|^{p}$ in Section~\ref{sec:sobolev-representation} and the conditions on $g^{(1)}$ and $g^{(2)}$ are in order to ensure the compactness and lower semicontinuity. 
Given that a discrete function can be interpolated by continuous functions, it does not make sense to talk about jump set. 
However, it makes sense to consider as a jump set, the set of points where the discrete gradient is bigger that a certain threshold $T_{\varepsilon}$. 
Indeed, if we were approximating a function with a jump, it is expected that the gradient would explode(in a neighbourhood of the jump set) like $\delta /\varepsilon $, where $\delta$ is the amplitude of the jump and $\varepsilon$ is the discretization parameter.  Thus $T_{\varepsilon}\uparrow \infty$. 
Indeed, suppose that the function we are approximating is $\delta \chi_{B}$, where $\delta $ is a small parameter and $B$ is the unit ball. Then the jump set would be the set of points where the gradient goes like $\frac{\delta}{\varepsilon}$. 
Thus in order to ``catch'' jumps of order $\delta$ one needs that the $\lim_{\varepsilon \downarrow 0 } T_{\varepsilon }\varepsilon \leq \delta$. 
Thus $\lim_{\varepsilon \downarrow 0} T_{\varepsilon}\varepsilon = 0$.

As in the previous section, one of the main steps will be to show that $F'_{\infty}=F'$ and that $F''_{\infty }=F''$. The basic intuition behind, is again a version of the interpolation lemma. 
As before, we will show that if one imposes ``closeness'' $v$ in $L^{p}(A)$ to some regular function $u$, then one can impose also the boundary condition by ``paying a very small price in energy''. 
More precisely, given a sequence $\{ v_{n}\}$ such that $v_{n}\to u$ in $L^{p}(A)$, where $A$ is an open set, then there exists a sequence $\{\tilde{v}_{n}\}$ such that $\tilde{v}_n \to u$,  such that $\tilde{v}_{n}|_{\partial \Omega} =u|_{\partial \Omega }$  and 
\begin{equation*} 
   \begin{split}
      \liminf_{n} \|\tilde{v}_{n}\|_{\sbvp(A)} \leq \liminf_{n} \|{v}_{n}\|_{\sbvp(A)}.
   \end{split}
\end{equation*}

\begin{rmk}
   \label{rmk:easy-conseq-potent-def-sbv}
   Let $f:[0,+\infty)\to [0,+\infty)$ be a monotone function. Then, it is immediate to have 
   \begin{equation*}
      f \big ( \frac{1}{N}\sum_{i=1}^N t_i)\leq \sum_i f(t_i),
   \end{equation*}
   where $t_{i} >0$. 
\end{rmk}

Similarly  as in Section~\ref{sec:sobolev-representation}, 
for every $ A \in \Acal (\Omega) $, we define the free-energy as
\begin{equation*} 
   \begin{split}
      F(u,A,\kappa,\varepsilon) := - \varepsilon ^{d} \log 
      \int_{\Vcal(u,A,\kappa)} \exp \Big(- 
      H(\varphi,A, \varepsilon) \Big)\d\varphi\\
      F_{\infty }(u,A,\kappa,\varepsilon) := - \varepsilon ^{d} \log 
      \int_{\Vcal_{\infty }(u,A,\kappa)} \exp \Big(- 
      H_\infty (\varphi,A, \varepsilon) \Big)\d\varphi
   \end{split}
\end{equation*} 
where 
\begin{equation*} 
   \begin{split}
      \Vcal(u,A,\kappa )&= \insieme{\varphi:A_{\varepsilon }\to  \R^m|\ \frac{\varepsilon ^{d}}{|A| 
            ^{d}}\sum_{x\in A_{\varepsilon}}|u-\varepsilon \varphi|^{p}\leq \kappa^{p} }\\
      \Vcal_{\infty }(u,A,\kappa )&= \insieme{\varphi:\varepsilon\Z^{d} \to  \R^m|\ \frac{\varepsilon ^{d}}{|A| 
            ^{d}}\sum_{x\in A_{\varepsilon}}|u-\varepsilon \varphi|^{p}\leq 
         \kappa^{p}, \text{ and } \varphi(x) = \varphi _{u,\varepsilon }(x)\ \forall x\not \in A_{\varepsilon} },
   \end{split}
\end{equation*} 
where $\varphi _{u,\varepsilon }$ is defined in \eqref{eq:def-varphi-X}.

Similarly as in Section~\ref{sec:sobolev-representation},  let us introduce the following notations:
\begin{equation*} 
   \begin{split}
      F' (u,A,\kappa)&:= \liminf_{\varepsilon\downarrow 0} F(u,A,\kappa,\varepsilon)\\
      F''(u,A,\kappa)&:= \limsup_{\varepsilon\downarrow 0} F(u,A,\kappa,\varepsilon)\\
      F' (u,A)&:= \lim_{\kappa\downarrow 0}\liminf_{\varepsilon\downarrow 0} 
      F(u,A,\kappa,\varepsilon)=\lim_{\kappa\downarrow 0}F' (u,A,\kappa)\\
      F''(u,A)&:= \lim_{\kappa\downarrow 0}\limsup_{\varepsilon\downarrow 0} 
      F(u,A,\kappa,\varepsilon)=\lim_{\kappa\downarrow 0}F''(u,A,\kappa) \\
      F_{\infty }' (u,A,\kappa)&:= \liminf_{\varepsilon\downarrow 0} F_{\infty }(u,A,\kappa,\varepsilon)\\
      F_{\infty }''(u,A,\kappa)&:= \limsup_{\varepsilon\downarrow 0} F_{\infty }(u,A,\kappa,\varepsilon)\\
      F_{\infty }' (u,A)&:= \lim_{\kappa\downarrow 0}\liminf_{\varepsilon\downarrow 0} 
      F_{\infty }(u,A,\kappa,\varepsilon)=\lim_{\kappa\downarrow 0}F_{\infty }' (u,A,\kappa)\\
      F_{\infty }''(u,A)&:= \lim_{\kappa\downarrow 0}\limsup_{\varepsilon\downarrow 0} 
      F_{\infty }(u,A,\kappa,\varepsilon)=\lim_{\kappa\downarrow 0}F_\infty ''(u,A,\kappa) \\
   \end{split}
\end{equation*}

We are now able to write the main result of this section. 
\begin{theorem} 
   \label{thm:main-representation-sbv-statmech-sbv}
   Assume the previous hypothesis and that $u\in \sbvp\cap L^{\infty}$. Then for every infinitesimal sequence
   $(\varepsilon_n)$ there exists a  subsequence $\varepsilon_{n_k}$ and  functions 
   $W_{1}:\Omega \times \R^{d\times m} \to \R $ and $W_{2}:\Omega \times \R^{m}\times S^{d-1}\to \R$ such that
   \begin{equation*} 
      \begin{split}
         F(u,A):= F'_{n_k}(u,A) =  F''_{n_k}(u,A) = \int_A W_{1}(x,\nabla u) \dx + \int_{S_u} W_{2}( x,u^{+}(x)-u^{-}(x),\nu_{u}(x)),
      \end{split}
   \end{equation*} 
   where the function $W_{1} $ is a quasiconvex function and $W_{2} $ is a BV-elliptic function and depend on the chosen subsequence $\{\varepsilon_{n_k}\}$.  
\end{theorem}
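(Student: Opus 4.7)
The plan is to follow the strategy of $\S$\ref{sec:sobolev-representation} essentially verbatim, with two systematic substitutions: replace the power $|t|^p$ by the function $g_{\varepsilon}(|t|)$ everywhere in the energy estimates, and replace Theorem~\ref{thm:integral_represenatation_sobolev} by the $\sbv$ representation result (Theorem~\ref{thm:integral_represenatation_sbv-sbv}) together with the softening of Remark~\ref{rmk:new2-sbv-fonseca}. The diagonal/compactness argument of Proposition~\ref{prop:propriet-compact} is purely measure-theoretic and, once lower semicontinuity is established along the chosen subsequence, produces the required $\{\varepsilon_{n_k}\}$ such that $F'_{\{\varepsilon_{n_k}\}}(u,A)=F''_{\{\varepsilon_{n_k}\}}(u,A)=:F(u,A)$ on a countable dense family of $\sbvp\cap L^{\infty}$.

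The technical lemmas of $\S$\ref{sec:sobolev-representation} must be rewritten with $g_{\varepsilon}$. First, Lemma~\ref{lemma:kotecky-luckhaus_A_1} is used with $g=g_{\varepsilon}$, invoking the uniform integrability bound $\int_{\R}\exp(-g_{\varepsilon})\,dt\leq M$; this yields an analogue of Lemma~\ref{lemma:utile-stime-g}. Combining the upper-bound hypothesis \eqref{eq:hypothesis-alto-sbv} with the doubling property $g_{\varepsilon}(M|t|)\leq C_{M} g_{\varepsilon}(|t|)$ gives a lower energy estimate $F(u,A,\kappa,\varepsilon)\geq -D\varepsilon^{-d}-D\sum_{\xi}C_{\xi}\sum_{x}g_{\varepsilon}(|\nabla_{\xi}\varphi_{u,\varepsilon}|)$, which by Remark~\ref{rmk:new-sbv} is bounded in terms of $\|u\|_{\sbvp}$. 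The nearest-neighbour lower bound \eqref{eq:hypothesis-stima-basso-primi-vicini-sbv}, zig-zag Lemma~\ref{lemma:zig-zag-sobolev}, and Remark~\ref{rmk:new-sbv} give the matching upper estimate that will yield coercivity. Exponential tightness (Lemma~\ref{lemma:exponential-tightness}), the regularity lemma (Lemma~\ref{lemma:regularity}), subadditivity (Lemma~\ref{lemma:subadditivity}), locality (Lemma~\ref{lemma:locality-sobolev}) and the boundary-change lemma (Lemma~\ref{lemma:cambio-dati-bordo}) are then transcribed with only notational changes, since their proofs only use pigeonhole on a telescoping family of strips together with the upper and lower bounds just established.

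With these in hand, I verify the hypotheses of Theorem~\ref{thm:integral_represenatation_sbv-sbv}. Locality (H2) is inherited from the discrete Hamiltonian, the $L^{1}$-lower semicontinuity (H3) is proved as in Proposition~\ref{prop:propriet-compact} along the chosen subsequence, and the measure property (H1) is obtained from De Giorgi--Letta's criterion (Theorem~\ref{thm:degiorgiLetta}) applied to $A\mapsto F(u,A)+C(\|\nabla u\|_{L^p(A)}+\hausd^{d-1}(S_u\cap A)+1)$, using regularity and subadditivity plus Remark~\ref{rmk:almost-monotonicity}. The growth hypothesis (H4), in the softened form \eqref{eq:fonseca-condition-iv-new} permitted by Remark~\ref{rmk:new2-sbv-fonseca}, is precisely what the $g_{\varepsilon}$-bounds produce: the convex part $g^{(1)}$ encodes the $\int_A |\nabla u|^p$ contribution, while the concave part $g^{(2)}$ with $g^{(2)}(t)/t\to\infty$ at $0$ encodes the $\int_{S_u\cap A}|u^+-u^-|\,d\hausd^{d-1}$ contribution, via the natural rescaling $\tfrac{1}{\varepsilon}g^{(2)}(\varepsilon\cdot)$ across the threshold $T_{\varepsilon}$.

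The main obstacle is the sharp matching of the discrete energy with the $\sbvp$ norm asserted in Remark~\ref{rmk:new-sbv}, which is what makes (H4) quantitatively correct: one must discretize $u$ so that ``small'' gradients (below $T_{\varepsilon}$) are measured by $g^{(1)}$ and contribute to $\int|\nabla u|^p$, while large gradients (above $T_{\varepsilon}$) are concentrated near $S_u$ and, after rescaling by $\varepsilon$, contribute $|u^+-u^-|$ per unit $(d-1)$-area. For smooth $u$ with polyhedral jump set this can be done by direct construction; the general case is reduced to this one by the density Theorem~\ref{thm:cortesani}, which produces approximants with the required structure and continuity in all the relevant quantities ($L^p$, $\nabla$, $\hausd^{d-1}(S)$). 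Once this matching is established, the representation formulas for $W_1$ (quasiconvex) and $W_2$ ($\bv$-elliptic) given by Theorem~\ref{thm:integral_represenatation_sbv-sbv} yield the claim, and independence of $W_1,W_2$ on the subsequence is deduced, as in the Sobolev homogenisation argument of Theorem~\ref{thm:sobolev_homog_main}, whenever the additional periodicity assumptions are in force.
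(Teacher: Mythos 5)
Your proposal is correct and follows essentially the same route as the paper: substitute $g_{\varepsilon}$ for $|t|^{p}$ in the technical lemmas, extract a subsequence via the diagonal/compactness argument, and verify (H1)--(H4) of Theorem~\ref{thm:integral_represenatation_sbv-sbv} (with the softening of Remark~\ref{rmk:new2-sbv-fonseca} and Remark~\ref{rmk:new-sbv} supplying the growth bounds) to obtain the integral representation. The closing sentence about subsequence-independence under periodicity is superfluous for this statement --- the theorem explicitly allows $W_1,W_2$ to depend on $\{\varepsilon_{n_k}\}$, and the paper provides no $\sbv$ analogue of the homogenisation Theorem~\ref{thm:sobolev_homog_main} --- but it does not affect the validity of the argument.
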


\subsection{Proofs} 
\label{sub:Proofs-sbv}

The next technical lemma is a version of Lemma~\ref{lemma:zig-zag-sobolev}, that asserts that finite difference quotients along any direction can be controlled by finite difference quotients along the coordinate directions.

\begin{lemma} 
   \label{lemma:zig-zag-sbv-sbv}
   Let $A\subset \mathcal A(\Omega) $ and set $A_\varepsilon= \insieme{x\in A: \dist (x,A)>2\sqrt N \varepsilon }$. 
   Then there exists a dimensional constant $C:=C(N)$ such that  for any $\xi\in \mathbb Z^N$ there holds 
   \begin{equation*}
      \begin{split}
         \sum_{x\in R^{e_i}_\varepsilon (A_\varepsilon)} g_\varepsilon \big({ \nabla }_\xi
         u(x ) \big) \leq C |\xi| \sum_{i=1}^N\sum_{x\in
            R^{e_i}_\varepsilon (A)} g_\varepsilon({{ \nabla }}_{e_i}u(x)).
      \end{split}
   \end{equation*}
\end{lemma}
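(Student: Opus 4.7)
The plan is to carry out the classical coordinate zig-zag argument of Lemma~\ref{lemma:zig-zag-sobolev}, but with $g_\varepsilon$ in place of $|\cdot|^p$. The two crucial ingredients will be the scaling hypothesis $g_\varepsilon(M|t|)\le C_M g_\varepsilon(|t|)$ and the Jensen-type inequality of Remark~\ref{rmk:easy-conseq-potent-def-sbv}, namely $g_\varepsilon(\frac{1}{N}\sum_k t_k)\le \sum_k g_\varepsilon(t_k)$, which is valid because $g_\varepsilon$ is monotone (so that $g_\varepsilon(\frac{1}{N}\sum_k t_k)\le g_\varepsilon(\max_k t_k)\le \sum_k g_\varepsilon(t_k)$).

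First I would fix $\xi\in \Z^N$, set $N_\xi:=|\xi|_1=\sum_i|\xi_i|$, and for each $x\in R^\xi_\varepsilon(A_\varepsilon)$ choose the canonical discrete zig-zag $x=y_0,y_1,\ldots,y_{N_\xi}=x+\varepsilon\xi$ which first performs $|\xi_1|$ steps of size $\varepsilon$ in direction $\mathrm{sgn}(\xi_1)e_1$, then $|\xi_2|$ steps in direction $\mathrm{sgn}(\xi_2)e_2$, and so on. Telescoping and dividing by $|\xi|$ gives
\begin{equation*}
\nabla_\xi u(x)=\frac{1}{|\xi|}\sum_{k=0}^{N_\xi-1}\sigma_k\,\nabla_{e_{i_k}}u(\tilde y_k),
\end{equation*}
with $\sigma_k\in\{\pm1\}$ and $\tilde y_k\in\{y_k,y_{k+1}\}$. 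Since $N_\xi/|\xi|=|\xi|_1/|\xi|_2\le\sqrt N$, the triangle inequality together with the scaling bound $g_\varepsilon(\sqrt N\,s)\le C g_\varepsilon(s)$ and Remark~\ref{rmk:easy-conseq-potent-def-sbv} applied to the average $\frac{1}{N_\xi}\sum_k |\nabla_{e_{i_k}}u(\tilde y_k)|$ will yield
\begin{equation*}
g_\varepsilon\bigl(|\nabla_\xi u(x)|\bigr)\le C\sum_{k=0}^{N_\xi-1} g_\varepsilon\bigl(|\nabla_{e_{i_k}}u(\tilde y_k)|\bigr).
\end{equation*}

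Next I would sum over $x\in R^\xi_\varepsilon(A_\varepsilon)$ and swap the order of summation. For each fixed $k$ the map $x\mapsto\tilde y_k$ is a pure translation, so a change of variable dominates the inner sum by $\sum_{y\in R^{e_{i_k}}_\varepsilon(A)} g_\varepsilon(|\nabla_{e_{i_k}}u(y)|)$. Grouping the $N_\xi$ indices $k$ by their coordinate direction (each direction $e_i$ appearing exactly $|\xi_i|$ times) and using $|\xi|_1\le\sqrt N\,|\xi|$ then delivers
\begin{equation*}
\sum_{x\in R^\xi_\varepsilon(A_\varepsilon)} g_\varepsilon(|\nabla_\xi u(x)|)\le C|\xi|\sum_{i=1}^N\sum_{y\in R^{e_i}_\varepsilon(A)} g_\varepsilon(|\nabla_{e_i}u(y)|),
\end{equation*}
which is the stated inequality.

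The main obstacle I expect is verifying that every vertex of the zig-zag path actually lies inside $A$ whenever $x\in R^\xi_\varepsilon(A_\varepsilon)$: the straight segment $[x,x+\varepsilon\xi]$ sits in $A_\varepsilon$ by definition of $R^\xi_\varepsilon$, but the zig-zag can stray from that segment by up to $|\xi|\varepsilon$ in the $\ell^\infty$ sense, whereas the shrinkage in the definition of $A_\varepsilon$ is only of order $\varepsilon$. Handling this will force either enlarging the shrinkage defining $A_\varepsilon$ to a quantity comparable to $|\xi|\varepsilon$, or choosing a more careful path that stays in a thin tubular neighbourhood of the segment; this is the only genuinely nontrivial point, and it is what constrains the constant $C$ to depend on the dimension while preserving the linear-in-$|\xi|$ factor on the right-hand side.
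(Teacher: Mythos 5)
Your argument is essentially the same coordinate zig-zag the paper uses, and you have correctly isolated the two ingredients the paper compresses into ``monotonicity and equivalence of norms'': the scaling hypothesis $g_\varepsilon(M|t|)\le C_M\,g_\varepsilon(|t|)$ to absorb the prefactor $N_\xi/|\xi|\le\sqrt N$, and Remark~\ref{rmk:easy-conseq-potent-def-sbv} to split $g_\varepsilon$ of the average. Your counting that produces the $|\xi|$ factor (each direction $e_i$ appears $|\xi_i|\le|\xi|$ times) is also exactly what is needed.

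The obstacle you flag at the end is genuine, but your proposed path is the wrong one to start from. Ordering the unit steps as ``all of $e_1$, then all of $e_2$, \ldots'' lets the broken line stray from the segment $[x,x+\varepsilon\xi]$ by a distance of order $|\xi|\varepsilon$, which overwhelms the fixed $2\sqrt N\varepsilon$ shrinkage in the definition of $A_\varepsilon$ as soon as $|\xi|$ is large. The fix, implicit in the paper and in \cite[Lemma~3.6]{MR2083851}, is not to enlarge the shrinkage but to \emph{interleave} the coordinate moves: at each step $k$ advance the coordinate whose partial count most lags $\tfrac{k}{N_\xi}\xi$, so that every partial sum $\xi_k$ stays within $\ell^\infty$-distance $1$ (hence $\ell^2$-distance $\sqrt N$) of the segment $[0,\xi]$. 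With this Bresenham-type choice every vertex $x+\varepsilon\xi_k$ lies within $\sqrt N\varepsilon$ of $[x,x+\varepsilon\xi]\subset A_\varepsilon$, hence inside $A$, and your change-of-variable and summation-swap steps go through verbatim, with a purely dimensional constant and the linear $|\xi|$ factor intact.
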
 

\begin{proof} 
   As in the proof of Lemma~\ref{lemma:zig-zag-sobolev}, let  $\xi \in \Z^{d}$.
   By decomposing it into coordinates, it is not difficult to notice that it can be written as 
   \begin{equation*} 
      \begin{split}
         \xi = \sum_{k=1}^{N_{\xi }} \alpha  _{k}(\xi )e_{i_{k}},
      \end{split}
   \end{equation*} 
   where $N_{\xi } \leq \delta |\xi|$ and $\alpha_k (\xi )\in \{ -1,1\}$. 
   Denote by 
   \begin{equation*} 
      \begin{split}
         \xi _{k} = \sum_{j=1}^{N_{\xi }} \alpha_{k}(\xi ),
      \end{split}
   \end{equation*} 
   hence $|\xi_{k}| \leq |\xi|$ for all $k$. Thus
   \begin{equation*} 
      \begin{split}
         \nabla _{\xi } u (x) = \frac{1}{|\xi  |}\sum_{k=1}^{N_{\xi }}  \nabla 
         _{\alpha _{k}(\xi )e_{i}}u(x +\varepsilon \xi_{k} )
      \end{split}
   \end{equation*} 
   Moreover, by the monotonicity of $g_{\varepsilon}$, we have
   \begin{equation*} 
      \begin{split}
         g_{\varepsilon }\Big( \frac{1}{N_{\xi }} \sum_{k=1}^{N_{\xi } } \nabla 
         _{\alpha _{k}(\xi )e_{i}}u(x +\varepsilon \xi_{k} )\Big)
         \leq 
         \sum_{k=1}^{N_{\xi }} g_{\varepsilon }\Big(\nabla 
         _{\alpha _{k}(\xi )e_{i}}u(x +\varepsilon \xi_{k} )\Big)
      \end{split}
   \end{equation*} 
   Finally by summing over all $\xi $, exchanging the sums and using the 
   equivalence of the norms \ie $|\xi| \leq N_{\xi }\leq d |\xi|$ one has the 
   desired result.

\end{proof}

\vspace{3mm}

As in the previous section, let $G^{\lambda}$ be the free-energy (see \eqref{eq:def-free-energy} for the definition) induced by the Hamiltonian
\begin{equation*} 
   \begin{split}
      \tilde{H}^{\lambda}(\varphi,A,\varepsilon):=\lambda \sum_{i=1}^{d}\sum_{x\in R_{\varepsilon }^{e_{i}}(A)} g_{\varepsilon}(|\nabla_{i}\varphi |).
   \end{split}
\end{equation*} 

Similarly to Lemma~\ref{lemma:utile-stime-g}, one can prove

\begin{lemma} 
   \label{lemma:utile-stime-g-sbv}
   There exists  constants $C_{\lambda},D_{\lambda}$, such that it holds
   \begin{equation*} 
      \begin{split}
         C_{\lambda}|A| \leq G^{\lambda }(0,A,\kappa ,\varepsilon ) \leq D_{\lambda }|A|
      \end{split}
   \end{equation*} 
\end{lemma}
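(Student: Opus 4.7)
The plan is to mirror the proof of Lemma~\ref{lemma:utile-stime-g}, substituting the function $g_\varepsilon$ for $|\cdot|^p$ and using the doubling-type hypothesis $g_\varepsilon(M|t|)\leq C_M g_\varepsilon(|t|)$ wherever homogeneity of $|\cdot|^p$ was invoked in the Sobolev setting. The structure is identical: an upper bound on $G^\lambda$ (equivalently a lower bound on the partition function) by restricting the domain of integration and factorising via Fubini, and a lower bound on $G^\lambda$ (equivalently an upper bound on the partition function) by invoking the modified Koteck\'y--Luckhaus estimate of Lemma~\ref{lemma:kotecky-luckhaus_A_1}.

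For the upper bound $G^\lambda(0,A,\kappa,\varepsilon)\leq D_\lambda$, I would first observe that the doubling condition together with monotonicity yields
\begin{equation*}
   g_\varepsilon\big(|\nabla_i \varphi(x)|\big)\leq g_\varepsilon\big(2\max(|\varphi(x+\varepsilon e_i)|,|\varphi(x)|)\big)\leq C_2\big(g_\varepsilon(|\varphi(x+\varepsilon e_i)|)+g_\varepsilon(|\varphi(x)|)\big),
\end{equation*}
which plays the role of the estimate \eqref{eq:04181397841972} in Lemma~\ref{lemma:utile-stime-g}. Summing, one gets $\tilde H^\lambda(\varphi,A,\varepsilon)\leq 2dC_2\lambda\sum_{x\in A_\varepsilon}g_\varepsilon(|\varphi(x)|)$. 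Since the pointwise set $\{|\varepsilon\varphi(x)|\leq\kappa\text{ for all }x\in A_\varepsilon\}$ is contained in $\Vcal(0,A,\kappa,\varepsilon)$, Fubini produces
\begin{equation*}
   \int_{\Vcal(0,A,\kappa,\varepsilon)}\exp\!\big(-\tilde H^\lambda\big)\d\varphi\ \geq\ \prod_{x\in A_\varepsilon}\int_{|t|\leq \kappa/\varepsilon}\exp\!\big(-2dC_2\lambda\,g_\varepsilon(|t|)\big)\d t.
\end{equation*}
For $\varepsilon$ small enough that $\kappa/\varepsilon\geq 1$, the integrand agrees on $|t|\leq 1$ with $\exp(-2dC_2\lambda g^{(1)}(|t|))$, which is uniformly bounded below; hence each factor is bounded below by a constant independent of $\varepsilon$. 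Taking $-\varepsilon^d\log$ and using $\varepsilon^d|A_\varepsilon|\to|A|$ produces the required $D_\lambda$.

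For the lower bound $G^\lambda(0,A,\kappa,\varepsilon)\geq C_\lambda$, my plan is to apply Lemma~\ref{lemma:kotecky-luckhaus_A_1} directly to the sum $\tilde H^\lambda$ with the single-variable weight $g=\lambda g_\varepsilon$. The hypothesis $\int_\R\exp(-g_\varepsilon(t))\d t\leq M$ together with the doubling property ensures that $c_\varepsilon:=\int_\R\exp(-\lambda g_\varepsilon(t))\d t$ is bounded uniformly in $\varepsilon$ by some $M_\lambda$. The lemma then gives
\begin{equation*}
   \int_{\Vcal(0,A,\kappa,\varepsilon)}\exp\!\big(-\tilde H^\lambda\big)\d\varphi\ \leq\ C|A_\varepsilon|\kappa\varepsilon^{-1}\exp\!\big(|A_\varepsilon|\log c_\varepsilon\big),
\end{equation*}
and taking $-\varepsilon^d\log$, the polynomial prefactor contributes only terms of order $\varepsilon^d|\log\varepsilon|$, so in the limit one recovers a lower bound of the form $C_\lambda=-|A|\log M_\lambda$.

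The main subtlety, and really the only place where the argument departs from the Sobolev case, is the application of Lemma~\ref{lemma:kotecky-luckhaus_A_1} with an $\varepsilon$-dependent integrand $\lambda g_\varepsilon$. Inspecting the proof of that lemma in \cite{LK-comm-math-phys}, one sees that its conclusion depends on $g$ only through the single constant $c=\int\exp(-g)$; consequently uniform integrability of $\exp(-\lambda g_\varepsilon)$, which we have built into our hypotheses, is enough to guarantee that the estimate holds with a constant independent of $\varepsilon$. Once this is granted, the rest of the argument is purely a transcription of the Sobolev proof.
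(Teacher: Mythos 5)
Your proposal matches the paper's approach exactly: the paper's entire ``proof'' of Lemma~\ref{lemma:utile-stime-g-sbv} is the sentence that it follows ``in a very similar fashion'' to Lemma~\ref{lemma:utile-stime-g}, and what you have written is precisely the faithful transcription of that argument with $g_\varepsilon$ replacing $|\cdot|^p$ and the doubling hypothesis $g_\varepsilon(M|t|)\leq C_M g_\varepsilon(|t|)$ substituting for $p$-homogeneity. One small caveat worth noting: the uniform bound $\int_{\R}\exp(-\lambda g_\varepsilon)\leq M_\lambda$ follows trivially from the assumed $\int_\R\exp(-g_\varepsilon)\leq M$ only when $\lambda\geq1$ (using $g_\varepsilon\geq0$), whereas for $\lambda<1$ one needs the coercivity $g^{(1)}(t)\geq C\max(t^p-1,0)$ together with the tail growth of $g^{(2)}$ rather than the doubling property as you claim---but this is a detail the paper itself elides.
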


The next proof is the analog of Lemma~\ref{lemma:stima-basso}.
\begin{lemma} 
   \label{lemma:stima-basso-sbv}
   Let $\insieme{f_{\xi,\varepsilon }} $ satisfy the usual hypothesis.    Then there exists a constant $D >0 $ and $\varepsilon _{0}>0$ such that for every $\kappa <1$ it holds 
   \begin{equation} 
      \label{eq:stima-basso-1-sbv-sbv}
      \begin{split}
         \exp\left(-\varepsilon ^{-d} F(u,A,\kappa ,\varepsilon )\right) \leq  
         \exp\left( D|A| \varepsilon ^{-d} + D 
            \sum_{\xi \in R^{e_{i}}_{\xi }(A)}\sum_{i=1}^{d} g_{\varepsilon 
            }(\nabla _{e_{i}} \varphi _{u,\varepsilon }) \right),
      \end{split}
   \end{equation} 
   where $\varphi_{u,\varepsilon }$ is defined in \eqref{eq:def-varphi-X}.
\end{lemma}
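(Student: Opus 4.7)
The plan is to mimic the strategy of Lemma~\ref{lemma:stima-basso}, replacing the power $|\cdot|^p$ by $g_\varepsilon$ and using the doubling-type property of $g_\varepsilon$ in place of the inequality $\|b-a\|^p\geq 2^{1-p}\|a\|^p-\|b\|^p$. First, I would write $\varphi=\psi+\varphi_{u,\varepsilon}$, so that $\nabla_{e_i}\psi=\nabla_{e_i}\varphi-\nabla_{e_i}\varphi_{u,\varepsilon}$, and derive a reverse triangle inequality for $g_\varepsilon$: using monotonicity of $g_\varepsilon$ together with the hypothesis $g_\varepsilon(Mt)\leq C_M g_\varepsilon(t)$, one obtains
\begin{equation*}
   g_\varepsilon(|\nabla_{e_i}\psi|)\leq C\bigl(g_\varepsilon(|\nabla_{e_i}\varphi|)+g_\varepsilon(|\nabla_{e_i}\varphi_{u,\varepsilon}|)\bigr),
\end{equation*}
and hence the symmetric bound $g_\varepsilon(|\nabla_{e_i}\varphi|)\geq \tfrac{1}{C}g_\varepsilon(|\nabla_{e_i}\psi|)-g_\varepsilon(|\nabla_{e_i}\varphi_{u,\varepsilon}|)$. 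Combining this with hypothesis \eqref{eq:hypothesis-stima-basso-primi-vicini-sbv}, which gives $f_{e_i,\varepsilon}(x,t)\gtrsim g_\varepsilon(|t|)-1$, yields a lower bound of the form
\begin{equation*}
   H(\varphi,A,\varepsilon)\geq c\sum_{i=1}^d\sum_{x\in R^{e_i}_\varepsilon(A)} g_\varepsilon(|\nabla_{e_i}\psi(x)|) - C\sum_{i=1}^d\sum_{x\in R^{e_i}_\varepsilon(A)} g_\varepsilon(|\nabla_{e_i}\varphi_{u,\varepsilon}(x)|) - D\varepsilon^{-d}.
\end{equation*}

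Next I would insert this lower bound into the definition of $F(u,A,\kappa,\varepsilon)$. The term in $\varphi_{u,\varepsilon}$ is deterministic and factors out of the integral, producing exactly the exponential factor on the right-hand side of \eqref{eq:stima-basso-1-sbv-sbv}. The remaining step is to perform the change of variables $\psi=\varphi-\varphi_{u,\varepsilon}$ in the integral over $\mathcal{V}(u,A,\kappa,\varepsilon)$, and to check that the new domain of integration is contained in $\mathcal{V}(0,A,\kappa',\varepsilon)$ for some $\kappa'$ comparable to $\kappa$; this follows because $\varepsilon\varphi_{u,\varepsilon}$ is the cell-average of $u$, so the $L^p$-distance between $u$ and $\varepsilon\varphi_{u,\varepsilon}$ goes to $0$ as $\varepsilon\downarrow 0$ (at least for $u\in W^{1,p}$, and in the $\sbv_p\cap L^\infty$ setting by invoking Theorem~\ref{thm:cortesani} and Remark~\ref{rmk:new-sbv}).

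After the change of variables, the integral to control is
\begin{equation*}
   \int_{\mathcal{V}(0,A,\kappa',\varepsilon)}\exp\Bigl(-c\sum_{i=1}^d\sum_{x\in R^{e_i}_\varepsilon(A)} g_\varepsilon(|\nabla_{e_i}\psi(x)|)\Bigr)\,\d\psi,
\end{equation*}
which is exactly $\exp(-\varepsilon^{-d}G^{c}(0,A,\kappa',\varepsilon))$. Applying Lemma~\ref{lemma:utile-stime-g-sbv} bounds this by $\exp(D\varepsilon^{-d})$, and concatenating the estimates gives \eqref{eq:stima-basso-1-sbv-sbv}.

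The main obstacle I expect is the reverse triangle inequality for $g_\varepsilon$: because $g_\varepsilon$ is piecewise (convex on $[0,T_\varepsilon)$, concave and rescaled on $[T_\varepsilon,\infty)$), verifying monotonicity and the constant in the doubling bound uniformly in $\varepsilon$ requires careful use of the matching condition $g^{(1)}(T_\varepsilon)\leq C\varepsilon^{-1}g^{(2)}(T_\varepsilon\varepsilon)$ at the transition. Once this is handled cleanly, the rest of the argument is a direct transcription of the Sobolev proof.
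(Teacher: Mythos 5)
Your proposal is correct and follows essentially the same route as the paper's own proof: establish the quasi-triangle inequality $g_\varepsilon(|a|)\lesssim g_\varepsilon(|a-b|)+g_\varepsilon(|b|)$, use the lower bound $f_{e_i,\varepsilon}(x,t)\gtrsim g_\varepsilon(|t|)-1$ to bound $H$ from below by a nearest-neighbour $g_\varepsilon$-energy in $\psi=\varphi-\varphi_{u,\varepsilon}$ minus a deterministic term, and then invoke Lemma~\ref{lemma:utile-stime-g-sbv} for the remaining Gaussian-type integral. The paper states the quasi-triangle inequality without proof and leaves the change of variables and the adjustment of $\kappa$ implicit; you have supplied both of those details explicitly (the verification that the doubling bound $g_\varepsilon(M|t|)\leq C_M g_\varepsilon(|t|)$, together with monotonicity of $g_\varepsilon$, yields the quasi-triangle inequality uniformly in $\varepsilon$, and the observation that $\|u-\varepsilon\varphi_{u,\varepsilon}\|_{L^p}\to 0$ so the shifted domain lies inside $\Vcal(0,A,\kappa',\varepsilon)$ for $\varepsilon$ small). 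These are exactly the points the paper glosses over, so your write-up is a faithful filled-in version of the same argument rather than a genuinely different one.
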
 

\begin{proof} 
   Given that  $g_{\varepsilon }(|a|)\lesssim g_{\varepsilon }(|a-b |) + g_{\varepsilon }(|b|)$ 
   one has that there exist constants $C_1$  such that 
   \begin{equation*} 
      \begin{split}
         H(\varphi ,A,\varepsilon ) & \geq C_{1}\sum_{i=1}^{d}\sum_{x\in R^{e_{i} 
            }_{\varepsilon }{(A)}} g_{\varepsilon }( |\nabla_{e_i}\varphi(x)  | ) \\ & \geq 
         C_{1} \sum_{i=1}^{d}  \sum_{x\in R^{e_{i}  }_{\varepsilon }{(A)}} 
         g_{\varepsilon }(|\nabla \psi |) -\tilde{C}_{1} \sum_{i=1}^{d}\sum_{x\in R^{e_{i}}_{\varepsilon }{(A)}} g_{\varepsilon }( |\nabla_{e_i}\varphi _{u,\varepsilon }(x) | )  
      \end{split}
   \end{equation*} 
   where $\psi = \varphi - \varphi _{ u,\varepsilon }  $.  Hence the estimate 
   \eqref{eq:stima-basso-1-sbv-sbv} reduces to prove that there exists a constant $D$  
   such that 
   \begin{equation*} 
      \begin{split}
         \int_{\insieme{\|\varepsilon\varphi \| \leq \kappa }} \exp\left(- C 
            \sum_{i=1}^{d}\sum_{x\in R^{e_{i} 
               }_{\varepsilon }{(A)}} g_{\varepsilon }( |\nabla_{e_i}\varphi | )\right) \leq 
         \exp\left( D|A| \varepsilon ^{-d}\right).
      \end{split}
   \end{equation*} 

   The above follows from Lemma~\ref{lemma:utile-stime-g-sbv}.
\end{proof}

As in Remark~\ref{rmk:almost-monotonicity}, we have the following: 
\begin{remark} 
   \label{rmk:almost-monotonicity-sbv}
   Let $u\in L^{\infty}\cap \sbvp$, then along the lines of Lemma~\ref{lemma:stima-basso-sbv}  one can easily prove that 
   there exists a constant $C$ such that 
   \begin{equation} 
      \label{eq:monotonicity_hypthesis-sbv}
      \begin{split}
         A\mapsto F'(u,A) + C(\|u\|_{\sbvp(A)} + |A | ) \qquad  A\mapsto F''(u,A)+ C(\|u\|_{\sbvp(A)} + |A |)
      \end{split}
   \end{equation} 
   are monotone with respect to the inclusion relation.  

   Without loss of generality we may assume that $F'$ and $F''$ satisfy \eqref{eq:monotonicity_hypthesis-sbv}. Indeed, recall that there is a representation formula as in the claim of Theorem~\ref{thm:main-representation-sobolev-statmech} for 
   $$F_{\{ \varepsilon_{n_k}\}}(u,A) + C(\|u\|_{\sbvp(A)} + |A | ),$$
   if and only if there is a representation formula for $F_{\{ \varepsilon_{n_k}\}}(u,A)$. 
\end{remark}

\begin{lemma} 
   \label{lemma:stima-alto-sbv}
   Let $f_{\xi ,\varepsilon }$ satisfy our hypothesis and let $A $ be an open 
   set. Then there exists  a constant $C,D>0$  such that 
   \begin{equation*} 
      \begin{split}
         \exp\left(-\varepsilon ^{-d} F(u,A,\kappa ,\varepsilon )\right) \geq  \exp\left( - D|A| \varepsilon ^{-d} -  C\sum_{i=1}^{d}\sum_{x\in R^{e_{i}}_{\varepsilon }} g_{\varepsilon }\big( |\nabla 
            _{e_i}\varphi_{u,\varepsilon } (x) | \big)  \right)     
      \end{split}
   \end{equation*} 
   where $\varphi_{u,\varepsilon }$ is defined in \eqref{eq:def-varphi-X}.
\end{lemma}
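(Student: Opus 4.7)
The plan is to run through the same three-step argument as in Lemma~\ref{lemma:stima-alto}, with $g_\varepsilon(|\cdot|)$ replacing $|\cdot|^p$, and to verify at each step that the weaker (piecewise) growth of $g_\varepsilon$ is still enough. Setting $\psi := \varphi - \varphi_{u,\varepsilon}$ and tracking the normalisations in the definition of $F$ in this section (no $|A|^{-1}$ factor), the claimed bound is equivalent to
\begin{equation*}
   \int_{\Vcal(u,A,\kappa,\varepsilon)} \exp\bigl(-H(\varphi,A,\varepsilon)\bigr)\,\d\varphi
   \;\geq\; \exp\Bigl(-D\varepsilon^{-d} - C\sum_{i=1}^d\sum_{x\in R^{e_i}_\varepsilon(A)} g_\varepsilon\bigl(|\nabla_{e_i}\varphi_{u,\varepsilon}(x)|\bigr)\Bigr),
\end{equation*}
and the strategy is to produce an upper bound on $H(\varphi,A,\varepsilon)$ that decouples the ``target'' piece $\varphi_{u,\varepsilon}$ from the integration variable $\psi$, and then to recognise the remaining integral as the one already controlled by Lemma~\ref{lemma:utile-stime-g-sbv}.

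First, I would use the growth hypothesis \eqref{eq:hypothesis-alto-sbv} to write $H(\varphi,A,\varepsilon)\leq \sum_\xi C_\xi \sum_{x\in R^\xi_\varepsilon(A)}\bigl(g_\varepsilon(|\nabla_\xi\varphi|)+1\bigr)$. The constant term contributes at most $D_0\varepsilon^{-d}$ since $\sum_\xi C_\xi<\infty$ and $|A_\varepsilon|\lesssim \varepsilon^{-d}$. For the gradient term I would apply Lemma~\ref{lemma:zig-zag-sbv-sbv} direction by direction and sum against the weights $C_\xi$; this is legitimate because the hypothesis $\sum_\xi|\xi|C_\xi<+\infty$ controls exactly the series $\sum_\xi C_\xi|\xi|$ that appears. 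The outcome is
\begin{equation*}
   H(\varphi,A,\varepsilon)\;\leq\;C_1\sum_{i=1}^d\sum_{x\in R^{e_i}_\varepsilon(A)} g_\varepsilon\bigl(|\nabla_{e_i}\varphi(x)|\bigr)+D_0\varepsilon^{-d}.
\end{equation*}

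Next, I would decouple $\varphi_{u,\varepsilon}$ from $\psi$ using a quasi-triangle inequality for $g_\varepsilon$. Since $g^{(1)}$ and $g^{(2)}$ are monotone and nonnegative, $g_\varepsilon$ is monotone, so $g_\varepsilon(|a+b|)\leq g_\varepsilon(2|a|)+g_\varepsilon(2|b|)$, and the scaling hypothesis $g_\varepsilon(M|t|)\leq C_M g_\varepsilon(|t|)$ then gives $g_\varepsilon(|a+b|)\leq C_2\bigl(g_\varepsilon(|a|)+g_\varepsilon(|b|)\bigr)$. Applied with $a=\nabla_{e_i}\varphi_{u,\varepsilon}$ and $b=\nabla_{e_i}\psi$ this yields
\begin{equation*}
   H(\varphi,A,\varepsilon)\;\leq\;C\sum_{i,x}g_\varepsilon\bigl(|\nabla_{e_i}\varphi_{u,\varepsilon}|\bigr)+C\sum_{i,x}g_\varepsilon\bigl(|\nabla_{e_i}\psi|\bigr)+D_0\varepsilon^{-d}.
\end{equation*}
The first sum is exactly the quantity that will appear inside the desired exponential; it does not depend on the integration variable and factors out.

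Finally, I would change variables $\varphi\mapsto\psi$. Since $\varepsilon\varphi_{u,\varepsilon}$ is the piecewise average of $u$, the $L^p$ constraint defining $\Vcal(u,A,\kappa,\varepsilon)$ is satisfied as soon as $\psi\in\Vcal(0,A,\kappa/2,\varepsilon)$ for $\varepsilon$ sufficiently small, so the domain of integration contains $\varphi_{u,\varepsilon}+\Vcal(0,A,\kappa/2,\varepsilon)$. After factoring out the $\varphi_{u,\varepsilon}$-dependent exponential and the $\exp(-D_0\varepsilon^{-d})$, what remains is exactly $\int_{\Vcal(0,A,\kappa/2,\varepsilon)}\exp\bigl(-C\sum_{i,x}g_\varepsilon(|\nabla_{e_i}\psi|)\bigr)\,\d\psi$, which is $\exp(-\varepsilon^{-d}G^{C}(0,A,\kappa/2,\varepsilon))$ and is bounded from below by $\exp(-D_1\varepsilon^{-d})$ by the upper bound in Lemma~\ref{lemma:utile-stime-g-sbv}. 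Combining these estimates gives the claim with $D:=D_0+D_1$ and a suitable $C$.

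The main technical subtlety is the quasi-triangle inequality in the second step: since $g_\varepsilon$ is glued together from a convex piece below $T_\varepsilon$ and a rescaled concave piece above, one cannot invoke convexity directly, and one must verify that the hypothesis $g^{(1)}(T_\varepsilon)\leq C\varepsilon^{-1}g^{(2)}(\varepsilon T_\varepsilon)$ together with the scaling bound $g_\varepsilon(M|t|)\leq C_M g_\varepsilon(|t|)$ is enough to justify $g_\varepsilon(|a+b|)\lesssim g_\varepsilon(|a|)+g_\varepsilon(|b|)$ uniformly in $\varepsilon$. Everything else is parallel to the Sobolev case.
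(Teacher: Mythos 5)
Your proposal follows essentially the same route as the paper: bound $H$ by the nearest-neighbour $g_\varepsilon$-energy via the growth hypothesis and Lemma~\ref{lemma:zig-zag-sbv-sbv}, decouple $\varphi_{u,\varepsilon}$ from $\psi=\varphi-\varphi_{u,\varepsilon}$ via the quasi-triangle inequality $g_\varepsilon(|a+b|)\lesssim g_\varepsilon(|a|)+g_\varepsilon(|b|)$ (which, as the paper's proof does, follows from monotonicity plus the assumed scaling bound $g_\varepsilon(M|t|)\leq C_M g_\varepsilon(|t|)$), and then invoke the upper bound on $G^\lambda$ from Lemma~\ref{lemma:utile-stime-g-sbv} to control the remaining $\psi$-integral. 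The only cosmetic difference is that you leave the residual Hamiltonian in gradient form $\sum g_\varepsilon(|\nabla_{e_i}\psi|)$ while the paper applies one more quasi-triangle inequality to convert it to $\sum_x g_\varepsilon(|\psi(x)|)$ before quoting the auxiliary lemma; both phrasings reduce to the same estimate.
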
 

\begin{proof} 
   Using Lemma~\ref{lemma:zig-zag-sbv-sbv}, one has that there exists a 
   constant $C$ such that 
   \begin{equation*} 
      \begin{split}
         H(\varphi ,A,\varepsilon ) \leq C \sum_{i=1}^{d}\sum_{x\in R^{e_{i} }_{\varepsilon }{(A)}} g_{\varepsilon }(|\nabla _{e_{i}}\varphi |)
      \end{split}
   \end{equation*}

   Given that 
   $g_{\varepsilon }(a+b)\leq g_{\varepsilon }(2a)+ g_{\varepsilon }(2b)\lesssim g_{\varepsilon }(a)+g_{\varepsilon }(b)$, there exist a constant $C_{1} $  such that
   \begin{equation*} 
      \begin{split}
         H(\varphi,A,\varepsilon)\leq 
         C_{1} \sum_{i=1}^{d}\sum_{x\in R^{e_i}_{\varepsilon }(A)} (  g_{\varepsilon }( |\nabla_{e_i} \varphi _{\varepsilon ,u}| )  +1 ) +2d\sum_{i=1}^{d}\sum_{x\in A_{\varepsilon }}  g_{\varepsilon }( |\nabla_{e_i}\psi(x) | ),
      \end{split}
   \end{equation*} 
   where $\psi = \varphi - \varphi _{u,\varepsilon }  $. 
   Hence, the estimate \eqref{eq:stima-basso-1-sbv-sbv} reduces to prove that there exists a constant $D $ such that 
   \begin{equation*} 
      \begin{split}
         \int_{\Vcal(0,A,\kappa ,\varepsilon )} \exp\left(- C
           \sum_{i=1}^{d} \sum_{x\in A_{\varepsilon}} g_{\varepsilon }( |\nabla _{e_i}\psi(x) | )\right) \geq ( \varepsilon 
         \kappa ) ^{-d}  
         \exp\left( D|A| \varepsilon ^{-d}\right).
      \end{split}
   \end{equation*} 

   The above inequality was proved in Lemma~\ref{lemma:utile-stime-g}.

\end{proof}

\begin{lemma}[exponential tightness] 
   \label{lemma:exponential-tightness-sbv}
   Let $A$ be an open set and $K\geq 0$.   Denote by 
   \begin{equation*} 
      \begin{split}
         \Mcal_{K} := \insieme{\varphi:\ H(\varphi,A,\varepsilon) \geq K 
            \varepsilon^{-d} |A|}.
      \end{split}
   \end{equation*} 
   Then there exists a constant $ D,K_{0},\varepsilon_{0} $ such that for every 
   $ K\geq K_{0},\ \varepsilon\leq \varepsilon_0  $   it holds
   \begin{equation*} 
      \begin{split}
         \int _{\Mcal _{K}\cap \Vcal(u,A,\kappa)} \exp \big( -H(\varphi ,A,\varepsilon )  \big) \leq \exp\Big(- \frac{1}{2}K\varepsilon^{-d}+ D{\varepsilon^{-d}}-D\sum_{i-1}^{d}\sum_{x\in R_{\varepsilon }^{e_i}(A)}g_{\varepsilon}(|\nabla _{e_i}\varphi_u |)\Big)
      \end{split}
   \end{equation*} 
\end{lemma}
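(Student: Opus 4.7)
The plan is to mimic essentially verbatim the Sobolev analog (Lemma~\ref{lemma:exponential-tightness}), using Lemma~\ref{lemma:stima-basso-sbv} in place of Lemma~\ref{lemma:stima-basso}. The idea is the elementary splitting trick: on the event $\Mcal_K$ one has extra room in the Hamiltonian, which can be spent to pull out an exponential factor $e^{-K\varepsilon^{-d}|A|/2}$ while leaving behind a Hamiltonian still of the same form (up to a factor $1/2$) that we can control by the upper bound on the partition function.

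Concretely, first I would note that for any $\varphi \in \Mcal_{K}$ the definition of $\Mcal_K$ gives
\begin{equation*}
   H(\varphi,A,\varepsilon) \;\geq\; \tfrac{1}{2} K \varepsilon^{-d}|A| + \tfrac{1}{2} H(\varphi,A,\varepsilon),
\end{equation*}
which yields the pointwise bound
\begin{equation*}
   \exp\bigl(-H(\varphi,A,\varepsilon)\bigr) \;\leq\; \exp\bigl(-\tfrac{1}{2}K\varepsilon^{-d}|A|\bigr)\,\exp\bigl(-\tfrac{1}{2}H(\varphi,A,\varepsilon)\bigr)
\end{equation*}
on $\Mcal_K$. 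Integrating over $\Mcal_K \cap \Vcal(u,A,\kappa)$ and enlarging the domain of integration to all of $\Vcal(u,A,\kappa)$, this produces
\begin{equation*}
   \int_{\Mcal_{K}\cap \Vcal(u,A,\kappa)} \exp(-H) \;\leq\; \exp\bigl(-\tfrac{1}{2}K\varepsilon^{-d}|A|\bigr) \int_{\Vcal(u,A,\kappa)} \exp(-H/2).
\end{equation*}

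Next I would apply Lemma~\ref{lemma:stima-basso-sbv} to the rescaled family $\{\tfrac{1}{2} f_{\xi,\varepsilon}\}$, which still satisfies the growth hypotheses \eqref{eq:hypothesis-alto-sbv}--\eqref{eq:hypothesis-stima-basso-primi-vicini-sbv} up to the harmless constants $c_1/2$ and $C_\xi/2$. This gives
\begin{equation*}
   \int_{\Vcal(u,A,\kappa)} \exp(-H/2) \;\leq\; \exp\!\Bigl( D\varepsilon^{-d} + D \sum_{i=1}^{d}\sum_{x\in R_\varepsilon^{e_i}(A)} g_{\varepsilon}\bigl(|\nabla_{e_i}\varphi_{u,\varepsilon}(x)|\bigr)\Bigr)
\end{equation*}
for some $D>0$ and all $\varepsilon\leq\varepsilon_0$, $\kappa<1$. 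Multiplying the two displays yields precisely the claimed estimate (after possibly enlarging $D$).

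There is no real obstacle here: the argument is a one-line splitting followed by an invocation of the (already proved) SBV upper bound on the partition function. The only point that needs a line of verification is that the scaled Hamiltonian $H/2$ still fits the hypotheses of Lemma~\ref{lemma:stima-basso-sbv}, and that $K_0$ and $\varepsilon_0$ can be chosen independently of $u$, which is immediate because the right-hand side bound already carries the $\varphi_{u,\varepsilon}$--dependence explicitly.
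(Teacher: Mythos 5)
Your proof is correct and follows exactly the route the paper intends: split off $e^{-K\varepsilon^{-d}|A|/2}$ on $\Mcal_K$ via $H\geq \tfrac12 K\varepsilon^{-d}|A|+\tfrac12 H$, then bound $\int_{\Vcal(u,A,\kappa)}e^{-H/2}$ via the partition-function upper bound of Lemma~\ref{lemma:stima-basso-sbv}. Your version is in fact more careful than the paper's two-line proof, which contains evident slips (it writes $H\geq K/2\,\varepsilon^{-d}+H$ with the factor $\tfrac12$ dropped, and it cites Lemma~\ref{lemma:stima-alto-sbv}, the \emph{lower} bound on the partition function, where it clearly means Lemma~\ref{lemma:stima-basso-sbv}). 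Your added remark that $H/2$ still satisfies the growth hypotheses, so Lemma~\ref{lemma:stima-basso-sbv} applies to the halved family, is exactly the small verification the paper omits.

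One small caveat: your derivation yields $+D\sum_i\sum_x g_\varepsilon(|\nabla_{e_i}\varphi_{u,\varepsilon}|)$ in the exponent, whereas the lemma as printed has a minus sign in front of that sum. You claim the two displays give ``precisely the claimed estimate,'' which is not literally true with that sign. However, the minus sign in the statement (which would make the bound strictly \emph{stronger}) appears to be a typo: the proof mechanism (both yours and the paper's) can only produce a plus, and in the only place the lemma is actually used, inside the proof of Lemma~\ref{lemma:regularity-sbv}, the term involving $\varphi_u$ is dropped entirely and only $\exp(-\tfrac12 K\varepsilon^{-d}|A|+D\varepsilon^{-d}|A|)$ is invoked. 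So your argument proves the estimate that is actually needed, and it is the statement, not your proof, that should be read with the sign flipped.
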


\begin{proof} 
   For every $\varphi \in \Mcal_{K} $ it holds
   \begin{equation*} 
      \begin{split}
         H(\varphi,A,\varepsilon) \geq K/2 \varepsilon^{-d} + H(\varphi,A,\varepsilon).
      \end{split}
   \end{equation*} 
   Hence, by using Lemma~\ref{lemma:stima-alto-sbv}  we have the desired result.

\end{proof}

The proof of the following lemma is similar to Lemma~\ref{lemma:regularity}.

\begin{lemma}[regularity] 
   \label{lemma:regularity-sbv}
   Let $f_{\xi} $ satisfy the usual hypothesis and $u\in \sbvp\cap L^{\infty}$ then
   \begin{equation*} 
      \begin{split}
         \sup_{A'\Subset A} F''(u,A') = F''(u,A).
      \end{split}
   \end{equation*} 
\end{lemma}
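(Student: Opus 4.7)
The plan is to transpose the proof of Lemma~\ref{lemma:regularity} almost verbatim, replacing the polynomial weight $|t|^p$ by the weight $g_\varepsilon(|t|)$ and invoking throughout the $\sbv$-flavoured companion lemmas (Lemma~\ref{lemma:zig-zag-sbv-sbv}, Lemma~\ref{lemma:stima-basso-sbv}, Lemma~\ref{lemma:stima-alto-sbv} and Lemma~\ref{lemma:exponential-tightness-sbv}). More precisely, fix $A'\Subset A$, pick $N\in\N$, set $\delta=\dist(A',A^c)$, choose a strictly increasing family $t_1,\dots,t_N\in(0,\delta)$ with $t_{i+1}-t_i>\delta/(2N)$, and define
\begin{equation*}
   A_i:=\{x\in A_\varepsilon:\dist(x,A^c)\geq t_i\},\qquad S^{\xi,\varepsilon}_i:=\{x\in A_i:x+\varepsilon\xi\in A\setminus A_i\}.
\end{equation*}
The elementary decomposition $R^\xi_\varepsilon(A)=R^\xi_\varepsilon(A_i)\cup R^\xi_\varepsilon(A\setminus \overline{A_i})\cup S^{\xi,\varepsilon}_i$ gives, thanks to the upper bound \eqref{eq:hypothesis-alto-sbv},
\begin{equation*}
   H(\varphi,A,\varepsilon)\leq H(\varphi,A_i,\varepsilon)+H(\varphi,A\setminus\overline{A_i},\varepsilon)+\sum_{\xi\in\Z^d}\sum_{x\in S^{\xi,\varepsilon}_i}C_\xi\big(g_\varepsilon(|\nabla_\xi\varphi(x)|)+1\big).
\end{equation*}

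Next I would split the last sum according to whether $|\xi|\leq M$ or $|\xi|>M$. For the high-frequency part, using that $\sum_{|\xi|>M}|\xi|C_\xi$ can be made arbitrarily small by choosing $M$ large and using Lemma~\ref{lemma:zig-zag-sbv-sbv} (the extra factor $|\xi|$ is exactly what the SBV zig-zag produces), together with the nearest-neighbour lower bound \eqref{eq:hypothesis-stima-basso-primi-vicini-sbv}, one bounds the tail by $\delta_1 H(\varphi,A,\varepsilon)$ for a prescribed small $\delta_1$. For the finitely many low-frequency $\xi$ with $|\xi|\leq M$ the slabs $S^{\xi,\varepsilon}_i$, $S^{\xi,\varepsilon}_j$ are disjoint for $|i-j|\geq 2$ once $\varepsilon MN\leq 2\delta$, so by a pigeonhole/averaging argument there exists $i_0\in\{1,\ldots,N-2\}$ with
\begin{equation*}
   \sum_{|\xi|<M}\sum_{x\in S^{\xi,\varepsilon}_{i_0}}g_\varepsilon(|\nabla_\xi\varphi(x)|)\leq \frac{2}{N-2}H(\varphi,A,\varepsilon).
\end{equation*}
This is the SBV analog of inequality \eqref{eq:cagata1}.

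With the slab $i_0$ selected, I would introduce the sets $\Ncal_{i_0}$ and $\Ncal^K_{i_0}=\Ncal_{i_0}\setminus\Mcal_K$ exactly as in the Sobolev proof, and use Lemma~\ref{lemma:exponential-tightness-sbv} to discard configurations whose Hamiltonian exceeds $K\varepsilon^{-d}|A|$. On the complement the Hamiltonian on $A$ is bounded from above by $H(\varphi,A_{i_0},\varepsilon)+H(\varphi,A\setminus\overline{A_{i_0}},\varepsilon)+K/(N-2)$, and the ``independence'' of the two remaining Hamiltonians allows a Fubini factorisation of the partition function. Passing to $\limsup_\varepsilon$, then $N\to\infty$, then $K\to\infty$, then $\kappa\downarrow 0$, and finally estimating the boundary slab term $F(u,A\setminus\overline{A_{i_0}},\kappa,\varepsilon)$ via Lemma~\ref{lemma:stima-alto-sbv} combined with Remark~\ref{rmk:almost-monotonicity-sbv} and Remark~\ref{rmk:new-sbv} (which uses $u\in L^\infty\cap\sbvp$ to control $\sum_x g_\varepsilon(|\nabla\varphi_{u,\varepsilon}|)$ by $\|u\|_{\sbvp}$), yields the regularity identity.

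The main obstacle is the step that bounds the error from the high-frequency interactions. Unlike the Sobolev case, where homogeneity of $|\cdot|^p$ makes the zig-zag lemma clean, here $g_\varepsilon$ has different growth regimes (polynomial below $T_\varepsilon$, concave above); the doubling property $g_\varepsilon(Mt)\leq C_M g_\varepsilon(t)$ together with the factor $|\xi|$ produced by Lemma~\ref{lemma:zig-zag-sbv-sbv} is what makes the summability $\sum_\xi|\xi|C_\xi<\infty$ imposed in \eqref{eq:hypothesis-alto-sbv} the right hypothesis. All other steps are cosmetic rewrites of the Sobolev argument once this control is in place.
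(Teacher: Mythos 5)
Your proposal matches the paper's own proof: the paper's Lemma~\ref{lemma:regularity-sbv} is established by running the slab-decomposition, pigeonhole, and exponential-tightness argument of Lemma~\ref{lemma:regularity} verbatim with $g_\varepsilon$ in place of $|\cdot|^p$, invoking Lemma~\ref{lemma:zig-zag-sbv-sbv}, Lemma~\ref{lemma:exponential-tightness-sbv}, Lemma~\ref{lemma:stima-alto-sbv}, and the $\sbvp$ version of the almost-monotonicity remark exactly as you describe, with the $|\xi|$-weighted summability $\sum_\xi|\xi|C_\xi<\infty$ and the doubling property of $g_\varepsilon$ doing the work you identify in the high-frequency tail estimate. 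Your observation about needing $N\to\infty$ before sending $K$ large (so that $K/(N-2)\to 0$) is also in line with the paper's limit ordering ``in $\varepsilon$, then in $N$, then in $\kappa$.''
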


\begin{proof} 

   Let us fix $A'\Subset A$ and $N\in \N$ ({to be chosen later}).  Let $\delta =\dist(A',A^C)$, and let 
   $0< t_{1},\ldots,t_N\leq \delta$ such that $t_{ i+1 }-t_{i}> \frac{\delta }{2N}$. 
   Without loss of generality, we may assume that there exists no 
   $x\in A_{\varepsilon }$ such that $\dist(x,A^{C})=t_i$. 
   For every $i
   $ we define
   \begin{equation*} 
      \begin{split}
         A_{i}:=\insieme{x\in A_{\varepsilon }:\ \dist(x,A^{C})\geq t_i}
      \end{split}
   \end{equation*} 
   and
   \begin{equation*} 
      \begin{split}
         S^{\xi ,\varepsilon }_{i}:=\insieme{x\in  ( A_{i} )_{\varepsilon }:\ x+\varepsilon \xi 
            \in A\setminus A_{i} }.
      \end{split}
   \end{equation*}

   We have that
   \begin{equation*} 
      \begin{split}
         R^{\xi }_{\varepsilon }(A) =  R^{\xi }_{\varepsilon }(A') +R^{\xi 
         }_{\varepsilon }(A\setminus \bar{A'}) + S^{\varepsilon ,\xi }_{i}
      \end{split}
   \end{equation*} 
   thus
   \begin{equation*} 
      \begin{split}
         H^{\xi }(\varphi ,A,\varepsilon ) = H^{\xi } (\varphi ,A\setminus 
         \bar{A_{i}},\varepsilon ) + H^\xi (\varphi ,A_{i},\varepsilon ) + 
         \sum_{x\in  S^{\xi ,\varepsilon }_{i}} f_{\xi ,\varepsilon }(\nabla _{\xi }\varphi (x)).
      \end{split}
   \end{equation*} 

   Hence,
   \begin{equation*} 
      \begin{split}
         H(\varphi ,A ,\varepsilon ) &= H(\varphi ,A_{i},\varepsilon ) + 
         H(\varphi ,A\setminus A_{i},\varepsilon ) + \sum_{\xi \in \Z^d}\sum_{x\in  S_{i}^{\xi 
               ,\varepsilon }} C_{\xi }\Big(g_{\varepsilon }\big(|\nabla _{\xi }\varphi (x) |\big)+1\Big)
      \end{split}
   \end{equation*} 

   Let us now estimate the last term in the previous inequality.  

   We separate the sum into two terms
   \begin{equation} 
      \label{eq:10121381574063-sbv}
      \begin{split}
         \sum_{\xi \in \Z^{d}} \sum_{x\in S^{\xi,\varepsilon  }_{i}} g_{\varepsilon }( |\nabla _{\xi }\varphi (x)| )= &
         \sum_{|\xi|\leq M} \sum_{x\in S^{\xi,\varepsilon  }_{i}} g_{\varepsilon }( |\nabla _{\xi }\varphi (x)| )+
         \sum_{ |\xi  |> M} \sum_{x\in S^{\xi,\varepsilon  }_{i}} g_{\varepsilon }( |\nabla _{\xi }\varphi (x)| ).
      \end{split}
   \end{equation} 

   Let $M\in \N$. 
   From the condition~\eqref{eq:hypothesis-alto-sbv} and  by taking $M$ sufficiently large,  we may also assume  without loss of  generality that 
   \begin{equation*} 
      \begin{split}
         \sum_{|\xi|\geq M} |\xi|C_{\xi} \leq \delta_1.
      \end{split}
   \end{equation*} 
   Hence, by using Lemma~\ref{lemma:zig-zag-sbv-sbv} we have that
   \begin{equation*} 
      \begin{split}
         \sum_{|\xi|\geq M} \sum_{x\in S^{\xi ,\varepsilon }_{i}} g_{\varepsilon }( |\nabla _{\xi }\varphi (x) | )\leq
         C \delta_1   \sum_{k=1}^{d} \sum_{x\in R^{e_{k} }_{\varepsilon }(A)} 
         g_{\varepsilon }( |\nabla _{e_{k} }\varphi (x)  | ) \leq \tilde{C}\delta _{1} H(\varphi ,A,\varepsilon ),
      \end{split}
   \end{equation*} 
   where in the last inequality we have used hypothesis~\eqref{eq:hypothesis-stima-basso-primi-vicini-sbv}.

   Let $|\xi|< M$.  If  $\varepsilon MN\leq 2\delta $, then for every 
   \begin{equation*} 
      \begin{split}
         S^{\xi,\varepsilon}_{i} \cap S^{\xi,\varepsilon}_{j}=\emptyset \qquad 
         \text{ whenever  } |i-j |\geq 2.
      \end{split}
   \end{equation*} 
   As the we are interested in $\varepsilon\to 0$, we may assume the above without loss of generality. 

   Given that
   \begin{equation*} 
      \begin{split}
         \sum_{i=1}^{N-2} \sum_{|\xi| < M}\sum_{x\in  S_{i}^{\varepsilon ,\xi }} 
         g_{\varepsilon }( |\nabla \varphi (x) | )\leq  2C H(\varphi ,A,\varepsilon )
      \end{split}
   \end{equation*} 
   there exist $0<i\leq N-2$ such that 
   \begin{equation} 
      \label{eq:cagata1-sbv}
      \begin{split}
         \sum_{|\xi| <M} \sum_{x\in S_{i}^{\xi ,\varepsilon }}g_{\varepsilon }( |\nabla_{\xi }\varphi | )<\frac{2C}{N-2}H(\varphi ,A,\varepsilon ).
      \end{split}
   \end{equation} 

   Let us denote by $\Ncal_{i}$ the set of all 
   $\varphi\in \Vcal(u,A,\kappa,\varepsilon )$ such that 
   \eqref{eq:cagata1-sbv} holds 
   for the first time, namely for every $j\leq i$ 
   \begin{equation} 
      \label{eq:cagata2-sbv}
      \begin{split}
         \sum_{|\xi| <M} \sum_{x\in S_{i}^{\xi ,\varepsilon }}g_{\varepsilon }( |\nabla_{\xi }\varphi | ) <  \frac{2C}{N-2}H(\varphi ,A,\varepsilon ).
      \end{split}
   \end{equation} 
   On one side, we have that
   \begin{equation*} 
      \begin{split}
         \int_{\Vcal(u,A,\kappa ,\varepsilon )} \exp\left( -H(\varphi ,A,\varepsilon )   \right) \leq \sum_{i=1}^{N}\int_{\Ncal_{i}}\exp\left( - H(\varphi ,A_{i},\varepsilon) - H(\varphi ,A\setminus \bar{A}_{i},\varepsilon )\right),
      \end{split}
   \end{equation*} 
   on the other side one has that
   \begin{equation*} 
      \begin{split}
         \int _{\Vcal(u,A,\kappa ,\varepsilon )} \exp\left( -H(\varphi ,A,\varepsilon )   \right) \geq 
         \sum_{i=1}^{N}\int _{\Ncal^{K}_{i}}\exp\left( -H(\varphi ,A,\varepsilon )  \right) ,
      \end{split}
   \end{equation*} 
   where $\Ncal^{K}_{i}:= \Ncal_{i}\setminus \Mcal_{K}$.  
   By using  \eqref{eq:cagata2-sbv}, one has that for every $\varphi \in  \Ncal_{i}^{K}$ it holds
   \begin{equation*} 
      \begin{split}
        H(\varphi ,A,\varepsilon )\leq H(\varphi, A_{i}) + H(\varphi,A\setminus \bar{A}_{i}) + \frac{K|A|\varepsilon^{-d}}{N-2} 
      \end{split}
   \end{equation*} 
   and for every $\varphi$ it holds
   \begin{equation*} 
      \begin{split}
         H(\varphi ,A,\varepsilon )\geq H(\varphi, A,\varepsilon ) + H(\varphi ,A\setminus \bar{A}_i,\varepsilon ) .
      \end{split}
   \end{equation*} 

   Hence,
   \begin{equation*} 
      \begin{split}
         \int _{\Vcal(u,A,\kappa ,\varepsilon )} \exp\left( -H(\varphi ,A,\varepsilon )     \right) \geq  
         \sum_{i=1}^{N}\int _{\Ncal^{K}_{i}}\exp\left( - H(\varphi, A_{i}) - H(\varphi 
            ,A\setminus \bar{A}_{i}) - \frac{K|A|\varepsilon^{-d}}{N-2} \right).
      \end{split}
   \end{equation*}

   From now on the proof follows as in Lemma~\ref{lemma:regularity}.
\end{proof}

\begin{lemma} 
   \label{lemma:cambio-dati-bordo-sbv}
   For every open set $A$  and $u\in \sbvp(\R^{d})\cap L^{\infty}$  it holds
   \begin{equation*} 
      \begin{split}
         F'(u,A)=F'_{\infty }(u,A)\qquad\text{and}\qquad  F'(u,A)=F'_{\infty }(u,A).
      \end{split}
   \end{equation*} 
\end{lemma}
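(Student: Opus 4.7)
The plan is to follow the same architecture as the proof of Lemma~\ref{lemma:cambio-dati-bordo} in the Sobolev section, substituting at every step the SBV counterparts of the tools used there: the growth function $|\cdot|^p$ is replaced by $g_{\varepsilon}$, Lemma~\ref{lemma:zig-zag-sobolev} is replaced by Lemma~\ref{lemma:zig-zag-sbv-sbv}, Lemma~\ref{lemma:exponential-tightness} by Lemma~\ref{lemma:exponential-tightness-sbv}, and the discrete $W^{1,p}$ bound on $\varphi_{u,\varepsilon}$ is replaced by the discrete $\sbvp$ bound supplied by Remark~\ref{rmk:new-sbv} (which relies on Theorem~\ref{thm:cortesani}). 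The inequalities $F'_{\infty}\geq F'$ and $F''_{\infty}\geq F''$ are immediate, since $H_{\infty}(\varphi,A,\varepsilon)\geq H(\varphi,A,\varepsilon)$ pointwise and $\Vcal_{\infty}(u,A,\kappa,\varepsilon)\subset\Vcal(u,A,\kappa,\varepsilon)$, which forces the partition function for $F_{\infty}$ to be no larger than the one for $F$.

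For the reverse direction, I fix $A'\Subset A$ and $N\in\N$ (later taken large), and introduce the annular decomposition $A_i=\{x\in A_{\varepsilon}:\dist(x,A^c)>t_i\}$ with evenly spaced thresholds $t_i\in(0,\dist(A',A^c))$, together with the crossing sets $S_i^{\xi,\varepsilon}=\{x\in A_i:x+\varepsilon\xi\notin A_i\}$. Decomposing the Hamiltonian gives
\begin{equation*}
H_{\infty}(\varphi,A,\varepsilon)\leq H(\varphi,A_i,\varepsilon)+H_{\infty}(\varphi,A\setminus A_i,\varepsilon)+\sum_{\xi\in\Z^d}\sum_{x\in S_i^{\xi,\varepsilon}}f_{\xi,\varepsilon}(x,\nabla_{\xi}\varphi(x)).
\end{equation*}
The last term is split at a cutoff $|\xi|=M$. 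For $|\xi|>M$, condition \eqref{eq:hypothesis-alto-sbv} combined with Lemma~\ref{lemma:zig-zag-sbv-sbv} bounds the contribution by $\delta_1 H(\varphi,A,\varepsilon)$ with $\delta_1$ as small as we want by taking $M$ large. For $|\xi|\leq M$, an averaging argument over $i=1,\dots,N$ (as in the proof of Lemma~\ref{lemma:regularity-sbv}) produces some $i_0$ for which the crossing contribution is at most $\tfrac{C_{N,M}}{N-2}H(\varphi,A,\varepsilon)$.

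Next I apply Lemma~\ref{lemma:exponential-tightness-sbv} to restrict the integration on $\Vcal(u,A,\kappa)$ to the sub-level set $\Vcal(u,A,\kappa)\setminus\Mcal_K$, paying an additive error that is exponentially small as $K\to\infty$. On the remaining good set one has $H(\varphi,A,\varepsilon)\leq K\varepsilon^{-d}|A|$, so the two previous reductions combine to give
\begin{equation*}
H_{\infty}(\varphi,A,\varepsilon)\leq H(\varphi,A_{i_0},\varepsilon)+H_{\infty}(\varphi,A\setminus A_{i_0},\varepsilon)+\frac{K'}{N}\varepsilon^{-d}|A|,
\end{equation*}
with $K'$ depending on $K,M,\delta_1$. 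Because the variables supported on $A_{i_0}$ and on $\varepsilon\Z^d\setminus A_{i_0}$ are independent under the product Lebesgue measure, Fubini's theorem factorises the relevant integral into the partition function on $A_{i_0}$ with soft constraint and the partition function on $A\setminus A_{i_0}$ with boundary datum $\varphi_{u,\varepsilon}$; the latter is controlled from above via the SBV analogue of Lemma~\ref{lemma:stima-alto} together with Remark~\ref{rmk:new-sbv}, yielding a bound in terms of $\|u\|_{\sbvp(A\setminus A_{i_0})}+|A\setminus A_{i_0}|$. Taking the limits in the order $\varepsilon\downarrow 0$, $K\uparrow\infty$, $N\uparrow\infty$, $\kappa\downarrow 0$, and finally $A'\uparrow A$, and invoking Lemma~\ref{lemma:regularity-sbv} to pass from $A'$ to $A$, gives $F''_{\infty}(u,A)\leq F''(u,A)$. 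The liminf version $F'_{\infty}(u,A)\leq F'(u,A)$ is obtained analogously.

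The main obstacle I expect is the identification and control of the boundary-strip factor $H_{\infty}(\varphi,A\setminus A_{i_0},\varepsilon)$: on this strip the configuration is pinned to $\varphi_{u,\varepsilon}$ only outside $A_{\varepsilon}$, so one must argue that the SBV energy of the discretised boundary datum is faithfully of the order $\|u\|_{\sbvp(A\setminus A_{i_0})}$ (including the jump contribution), which is precisely what Theorem~\ref{thm:cortesani} together with Remark~\ref{rmk:new-sbv} delivers; and one must further check that this quantity vanishes as $A'\uparrow A$, which is the $\sbvp$ analogue of absolute continuity of $\|u\|_{W^{1,p}(\cdot)}$ and follows from $\hausd^{d-1}(S_u)<+\infty$ together with $\nabla u\in L^p$. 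All the remaining bookkeeping is a direct transcription of the proof of Lemma~\ref{lemma:cambio-dati-bordo}.
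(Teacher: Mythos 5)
Your proposal is correct and follows precisely the route the paper takes: the paper's proof of Lemma~\ref{lemma:cambio-dati-bordo-sbv} is just the one-line statement that it "follows in the same way as in Lemma~\ref{lemma:cambio-dati-bordo}", and what you have written out is exactly that transcription, with the right SBV replacements ($g_\varepsilon$ for $|\cdot|^p$, Lemma~\ref{lemma:zig-zag-sbv-sbv} for Lemma~\ref{lemma:zig-zag-sobolev}, Lemma~\ref{lemma:exponential-tightness-sbv} for Lemma~\ref{lemma:exponential-tightness}, and Remark~\ref{rmk:new-sbv} with Theorem~\ref{thm:cortesani} controlling the discretised boundary datum in $\sbvp$). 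You also correctly single out the only genuinely new point compared to the Sobolev case — that the boundary-strip contribution $\|u\|_{\sbvp(A\setminus A')}$ vanishes as $A'\uparrow A$, which follows from the fact that $|\nabla u|^p\,\Leb{d}+g^{(2)}(|u^+-u^-|)\,\hausd^{d-1}\res S_u$ is a finite Radon measure on $A$ (here the standing assumption $\hausd^{d-1}(S_u)<\infty$ is essential).
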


\begin{proof} 

	The proof of the above statement follows in the same way as in Lemma~\ref{lemma:cambio-dati-bordo}.

\end{proof}

\begin{lemma}[subadditivity] 
   \label{lemma:subadditivity-sbv}
   Let $A',A,B',B \subset \Omega $ be open sets such that $A'\Subset A $ and 
   such that $B' \Subset B $.  Then for every $u\in \sbvp\cap L^{\infty}$ one has that
   \begin{equation*} 
      \begin{split}
         F''(u,A'\cup B') \leq F''(u,A) + F''(u,B)
      \end{split}
   \end{equation*} 
\end{lemma}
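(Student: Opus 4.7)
The proof will be a two-set De~Giorgi slicing argument, modeled on the proof of Lemma~\ref{lemma:regularity-sbv}. My target estimate is the pointwise-in-$\varepsilon$ bound $F(u, A' \cup B', \kappa, \varepsilon) \leq F(u, A, \kappa, \varepsilon) + F(u, B, \kappa, \varepsilon) + o_{\varepsilon}(1)$, which is equivalent to a lower bound of the form $\int_{\Vcal(u, A' \cup B', \kappa)} e^{-H(\varphi, A' \cup B', \varepsilon)}\, d\varphi \geq e^{-o(\varepsilon^{-d})} \int_{\Vcal(u, A, \kappa)} e^{-H(\varphi, A, \varepsilon)}\, d\varphi \cdot \int_{\Vcal(u, B, \kappa)} e^{-H(\varphi, B, \varepsilon)}\, d\varphi$ on the partition functions. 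Once this is established, taking $\limsup_{\varepsilon \downarrow 0}$ and then $\kappa \downarrow 0$ on both sides yields the claim.

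First, I would set up layers. Since $A' \Subset A$ and $B' \Subset B$, I set $\delta := \tfrac{1}{2} \min\{\dist(A', A^{c}), \dist(B', B^{c})\} > 0$ and for $N$ large pick nested families $A' \Subset A^{N} \Subset \cdots \Subset A^{1} \Subset A$ and $B' \Subset B^{N} \Subset \cdots \Subset B^{1} \Subset B$ with consecutive layers separated by at least $\delta / N$, together with the crossing strips $S^{A, \xi, \varepsilon}_{i}$, $S^{B, \xi, \varepsilon}_{i}$ defined exactly as in Lemma~\ref{lemma:regularity-sbv}. Second, I would truncate the $\xi$-sum to $|\xi| \leq M$ using the summability of $|\xi| C_{\xi}$ from \eqref{eq:hypothesis-alto-sbv} and Lemma~\ref{lemma:zig-zag-sbv-sbv}, and then average over $i$ (in analogy with \eqref{eq:cagata1-sbv}) to produce, for every $\varphi \notin \Mcal_{K}$, a joint index $i(\varphi)$ whose combined crossing energy in $S^{A, \xi, \varepsilon}_{i(\varphi)} \cup S^{B, \xi, \varepsilon}_{i(\varphi)}$ is bounded by $\tfrac{C}{N} H(\varphi, A \cup B, \varepsilon)$; the bad set $\Mcal_{K}$ is absorbed via the exponential tightness estimate of Lemma~\ref{lemma:exponential-tightness-sbv}. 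Third, I would drop these small-weight crossing bonds and apply Fubini twice: once to decouple $A^{i(\varphi)}$ from its complement in $A \cup B$, and then a second time (after a second pigeonhole on the $B^{j}$ family) to decouple $B^{j(\varphi)}$ from the residual thin annulus $A \setminus \bar{A}^{i(\varphi)}$. This produces the desired product lower bound for $Z_{A' \cup B'}$.

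The hard part will be combinatorial rather than analytic. Unlike Lemma~\ref{lemma:regularity-sbv}, where one set is cut into two automatically disjoint pieces, here two sets $A, B$ must be simultaneously peeled down to $A', B'$ while ensuring that their bulk regions $A^{i}, B^{j}$ are not still coupled through short-range ($|\xi| \leq M$) bonds in the overlap region $A \cap B$. When $A \cap B = \emptyset$ this is immediate; in the general case, I would select the two families of slices sequentially so that every short-range bond lies in the interior of exactly one bulk, which is possible for large $N$ by a nested use of the one-set pigeonhole. Once this bookkeeping is handled, the argument closes by sending $\varepsilon \downarrow 0$, then $N \to \infty$, then $\kappa \downarrow 0$, invoking the almost monotonicity from Remark~\ref{rmk:almost-monotonicity-sbv} to absorb the residual energetic contribution of the thin annular strips; as stated in the Sobolev counterpart (Lemma~\ref{lemma:subadditivity}), the remaining details are then ``very similar to Lemma~\ref{lemma:regularity}''.
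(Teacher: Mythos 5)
Your approach — De Giorgi slicing, truncation of the $\xi$-sum, a pigeonhole on the slice index, exponential tightness from Lemma~\ref{lemma:exponential-tightness-sbv}, Fubini, and then taking $\varepsilon\downarrow 0$, $N\to\infty$, $\kappa\downarrow 0$ — is precisely the argument the paper points to when it says the proof is ``very similar to Lemma~\ref{lemma:regularity-sbv} and Lemma~\ref{lemma:cambio-dati-bordo-sbv}'', so the route is the same. That said, the combinatorial hard part you flag (peeling $A$ and $B$ simultaneously and worrying about short-range bonds coupling the two bulks across $A\cap B$) does not actually arise, and the second pigeonhole on the $B^j$ family is unnecessary. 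A single slice $A_i$ with $A'\Subset A_i\Subset A$ already cuts $A'\cup B'$ into the two \emph{disjoint} pieces $C_1=A_i\cap(A'\cup B')$ and $C_2=(A'\cup B')\setminus\overline{A}_i$; since $A'\subset A_i$ one has $C_2=B'\setminus\overline{A}_i$, hence automatically $C_1\Subset A$ and $C_2\Subset B'\Subset B$, and the only crossing bonds are those through $\partial A_i$, exactly as in the one-set Lemma~\ref{lemma:regularity-sbv}. The Fubini decoupling then gives $F(u,A'\cup B',\kappa,\varepsilon)\leq F(u,C_1,\kappa,\varepsilon)+F(u,C_2,\kappa,\varepsilon)+o(1)$, and the passage from $C_1,C_2$ to $A,B$ is not by Remark~\ref{rmk:almost-monotonicity-sbv} but by the already-proved inner regularity, since $\sup_{A''\Subset A}F''(u,A'')=F''(u,A)$ forces $F''(u,C_1)\leq F''(u,A)$ and likewise $F''(u,C_2)\leq F''(u,B)$. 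Relatedly, your stated $\varepsilon$-level target $F(u,A'\cup B',\kappa,\varepsilon)\leq F(u,A,\kappa,\varepsilon)+F(u,B,\kappa,\varepsilon)+o_\varepsilon(1)$ is slightly stronger than what the slicing produces at fixed $\varepsilon$; the factorisation only yields the inner sets $C_1,C_2$, and one upgrades to $A,B$ only after the limits, via Lemma~\ref{lemma:regularity-sbv}. None of this changes the conclusion, but it streamlines the bookkeeping and removes the ``nested pigeonhole'' you proposed, which is where the actual risk of error in your sketch lies.
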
 

\begin{proof} 
   The proof of this statement is very similar to 
   Lemma~\ref{lemma:regularity-sbv} and Lemma~\ref{lemma:cambio-dati-bordo-sbv} .
\end{proof}

\begin{lemma}[locality] 
   \label{lemma:locality-sbv}
   Let $u,v \in \sbvp(\Omega)\cap L^{\infty}$ such that $u\equiv v $ in $A$.   Then 
   \begin{equation*} 
      \begin{split}
         F'(u,A)=F'(v,A)\qquad \text{and}\qquad F''(u,A)=F''(v,A)
      \end{split}
   \end{equation*} 
\end{lemma}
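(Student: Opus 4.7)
The plan is to notice that locality is a property of the constraint set alone: the Hamiltonian $H(\varphi,A,\varepsilon)$ is a sum over $x\in R^\xi_\varepsilon(A)$, hence over segments $[\alpha,\alpha+\varepsilon\xi]\subset A$, and makes no reference to the datum $u$. Thus the only place where $u$ enters the definition of $F(u,A,\kappa,\varepsilon)$ is through the constraint set $\Vcal(u,A,\kappa,\varepsilon)$, whose definition in turn depends on $u$ only through its cell averages $\varphi_{u,\varepsilon}(\varepsilon i)$ at lattice points $\varepsilon i \in A_\varepsilon$ (cf.\ \eqref{eq:def-varphi-X}).

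The key step is then a simple boundary-layer estimate. Since $u \equiv v$ on $A$, for every lattice point $\varepsilon i$ such that $\varepsilon i + Q(\varepsilon) \subset A$ one has $\varphi_{u,\varepsilon}(\varepsilon i) = \varphi_{v,\varepsilon}(\varepsilon i)$. Only the cells that straddle $\partial A$ may contribute a discrepancy, and there are at most $O(\varepsilon^{-(d-1)})$ such cells. Using $u,v\in L^\infty(\Omega)$, the discrepancy between the two constraint functionals is bounded by
\begin{equation*}
|A|^{-d}\,\varepsilon^d \cdot O(\varepsilon^{-(d-1)}) \cdot \bigl(\|u\|_\infty + \|v\|_\infty\bigr)^p = O(\varepsilon).
\end{equation*}
Consequently, for every $\delta>0$ and every $\varepsilon$ small enough one has the nesting
\begin{equation*}
\Vcal(v,A,\kappa-\delta,\varepsilon) \subset \Vcal(u,A,\kappa,\varepsilon) \subset \Vcal(v,A,\kappa+\delta,\varepsilon).
\end{equation*}

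Since the integrand $\exp(-H(\varphi,A,\varepsilon))$ does not depend on $u$ or $v$, this nesting immediately yields
\begin{equation*}
F(v,A,\kappa+\delta,\varepsilon) \leq F(u,A,\kappa,\varepsilon) \leq F(v,A,\kappa-\delta,\varepsilon)
\end{equation*}
for all sufficiently small $\varepsilon$. Passing to $\liminf_{\varepsilon\downarrow 0}$ and then to $\sup_{\kappa>0}$ (using the monotonicity in $\kappa$ from Remark~\ref{rmk:simple1}, which is inherited verbatim in the $\sbv$ setting), one obtains $F'(v,A)\leq F'(u,A)\leq F'(v,A)$, and the symmetric chain with $\limsup$ gives the statement for $F''$. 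The only substantive observation is the boundary-layer bound above; no obstacle of a real analytical nature arises, which is why the authors can legitimately write ``follows from the definitions''.
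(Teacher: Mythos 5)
Your proof is correct and simply spells out what the paper's one-line justification (``The statement follows from the definitions'') leaves implicit: the Hamiltonian $H(\varphi,A,\varepsilon)$ never sees the datum $u$, while the constraint set $\Vcal(u,A,\kappa,\varepsilon)$ depends on $u$ only through the cell averages $\varphi_{u,\varepsilon}$, which agree with those of $v$ on all interior cells when $u\equiv v$ on $A$, so the discrepancy is an $O(\varepsilon^{1/p})$ boundary-layer term that can be absorbed into the $\kappa$-tolerance via the $\ell^p$ triangle inequality. The one small omission is that after passing to $\sup_{\kappa>0}$ you should also let $\delta\downarrow 0$ (or equivalently choose $\delta=\kappa/2$ throughout) to remove the slack in the nesting; this is immediate from the monotonicity of $\kappa\mapsto F'(v,A,\kappa)$ and does not affect the conclusion.
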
 

\begin{proof} 
   The statement follows from the definitions.  

\end{proof} 

\begin{proof}[Proof of {Theorem~\ref{thm:main-representation-sbv-statmech-sbv}}]\    
   Let us suppose initially that there exists a sequence for which 
   $F(\cdot,\cdot)=F'(\cdot,\cdot)= F''(\cdot,\cdot)$.  Then to conclude it is enough to notice that $F$ satisfies the conditions of Theorem~\ref{thm:integral_represenatation_sbv-sbv}, which are proved in the previous Lemmas.  
\end{proof}

\begin{corollary} 
   Because of Lemma~\ref{lemma:cambio-dati-bordo-sbv}, the same statement holds true for $F_{\infty}$. 
   This in particular implies that for the sequence $\insieme{\varepsilon _{n_{k}}}$ in Theorem~\ref{thm:main-representation-sbv-statmech-sbv}  there holds a large deviation principle with rate functional 
   \begin{equation}
      I(v)=\int_{\Omega} W_{1}(x,\nabla v) \dx \int_{Ju}W_{2}(x,u_{+}(x)-u_{-}(x)) \d \hausd^{d-1}(x) - \min_{\bar v\in W^{1,p}_0(\Omega)+u}\,  
      \int_{\Omega} W(\nabla \bar v(x))dx. 
   \end{equation} 
\end{corollary}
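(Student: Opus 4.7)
The plan is to handle the two assertions of the corollary in turn. The first claim, that the integral representation extends to $F_\infty$, is essentially immediate from what has been assembled. By Lemma~\ref{lemma:cambio-dati-bordo-sbv}, along any infinitesimal sequence one has $F'_\infty(u,A)=F'(u,A)$ and $F''_\infty(u,A)=F''(u,A)$ for every $u\in\sbvp\cap L^\infty$ and every open set $A\subset\Omega$ with piecewise regular boundary. Combining this with the representation established in Theorem~\ref{thm:main-representation-sbv-statmech-sbv} along the extracted subsequence $\{\varepsilon_{n_k}\}$, I get
\begin{equation*}
F_\infty(u,A)=F'_\infty(u,A)=F''_\infty(u,A)=\int_{A}W_1(x,\nabla u)\dx+\int_{A\cap S_u}W_2\big(x,u^+(x),u^-(x),\nu_u(x)\big)\d\hausd^{d-1}.
\end{equation*}

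For the large deviation principle, I would first identify the relevant Gibbs measure as the normalized density $Z_{\Omega,\varepsilon}^{-1}\exp(-H_\infty(\varphi,\Omega,\varepsilon))$ on configurations with $\varphi=\varphi_{u,\varepsilon}$ outside $\Omega_\varepsilon$. The key observation is that, by definition,
\begin{equation*}
-\varepsilon^d\log\int_{\Vcal_\infty(v,\Omega,\kappa,\varepsilon)}\exp(-H_\infty(\varphi,\Omega,\varepsilon))\,\d\varphi=F_\infty(v,\Omega,\kappa,\varepsilon),
\end{equation*}
so that, up to the normalizing factor $-\varepsilon^d\log Z_{\Omega,\varepsilon}$, the measure of an $L^p$-neighborhood of $v$ has logarithmic asymptotics prescribed by the first part of the corollary. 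The normalization is then identified as follows: by the change of variable $\varphi\mapsto\varphi-\varphi_{\bar v,\varepsilon}$, for every $\bar v\in W^{1,p}_0(\Omega)+u$ one has $Z_{\Omega,\varepsilon}\geq\int_{\Vcal_\infty(\bar v,\Omega,\kappa,\varepsilon)}\exp(-H_\infty)$, giving $-\varepsilon^d\log Z_{\Omega,\varepsilon}\leq F_\infty(\bar v,\Omega,\kappa,\varepsilon)$; taking $\varepsilon\to 0$, $\kappa\to 0$ and then the infimum over $\bar v$ yields one inequality for the normalization. The matching upper bound on $Z_{\Omega,\varepsilon}$ is obtained by covering the admissible configurations in $L^p$ by finitely many $\kappa$-balls, thereby reducing to the estimate just mentioned on each ball and using Remark~\ref{rmk:almost-monotonicity-sbv} together with Lemma~\ref{lemma:exponential-tightness-sbv} to control configurations of very large energy. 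The standard LDP upper and lower bounds on closed and open sets then follow by the same covering of neighborhoods in the $L^p$-topology.

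The main obstacle will be the compactness step required to turn the local neighborhood estimate into a genuine LDP: even after the exponential tightness discards configurations with $H(\varphi,\Omega,\varepsilon)\geq K\varepsilon^{-d}$, one must show that the remaining piece concentrates on a relatively compact subset of $L^p$ along which lower semicontinuity of $v\mapsto\int W_1(x,\nabla v)\dx+\int_{S_v}W_2$ holds. Here the $\sbv$ compactness Theorem~\ref{afp_theorem_4.8-sbv}, together with the superlinearity of $g^{(1)}$ at infinity and of $g^{(2)}$ at zero encoded in the hypothesis, ensure that bounded energy configurations (after interpolation via $\Pi_\varepsilon$) extract weak-$*$ convergent subsequences in $\bv$ whose limits lie in $\sbvp$. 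Combined with the lower semicontinuity of the limit functional (which is a consequence of the representation, since $W_1$ is quasiconvex and $W_2$ is $\bv$-elliptic), this suffices to identify the rate functional as
\begin{equation*}
I(v)=\int_{\Omega}W_1(x,\nabla v)\dx+\int_{J_v}W_2\big(x,v^+-v^-,\nu_v\big)\d\hausd^{d-1}-\min_{\bar v\in W^{1,p}_0(\Omega)+u}\Bigl[\int_{\Omega}W_1(x,\nabla\bar v)\dx+\int_{J_{\bar v}}W_2\,\d\hausd^{d-1}\Bigr],
\end{equation*}
which, modulo the apparent typographical compression in the displayed formula of the statement, is the asserted rate functional.
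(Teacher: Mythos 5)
The paper states this corollary without any proof (and the same is true of the analogous Sobolev-setting corollary after Theorem~\ref{thm:main-representation-sobolev-statmech}), so there is no written argument to compare your proposal against. Your outline supplies the standard fill-in. The first claim is indeed immediate: Lemma~\ref{lemma:cambio-dati-bordo-sbv} gives $F'_\infty=F'$ and $F''_\infty=F''$, which composed with Theorem~\ref{thm:main-representation-sbv-statmech-sbv} along the extracted subsequence yields the representation for $F_\infty$. For the LDP, identifying $-\varepsilon^d\log\mu(\Vcal_\infty(v,\Omega,\kappa,\varepsilon))$ with $F_\infty(v,\Omega,\kappa,\varepsilon)-\bigl(-\varepsilon^d\log Z_{\Omega,\varepsilon}\bigr)$ and then sending $\varepsilon\to 0$, $\kappa\to 0$ is the right mechanism, and your recognition that the printed rate functional has a missing $+$ sign and drops the $W_2$ jump contribution from the normalizing minimum is correct.

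Two points deserve sharpening in a full write-up. First, the admissible class for the minimum should not be $W^{1,p}_0(\Omega)+u$, which is a leftover from the Sobolev corollary; in the $\sbv$ setting the natural class is $\sbvp(\Omega)\cap L^\infty(\Omega)$ with $\bar v=u$ in a neighbourhood of $\partial\Omega$, in the spirit of the minimization $\boldsymbol m(u;A)$ used in Theorem~\ref{thm:integral_represenatation_sbv-sbv}. Second, your compactness step invokes Theorem~\ref{afp_theorem_4.8-sbv}, but that theorem requires a uniform $L^\infty$ bound, while the $\Vcal$ constraint controls only $\|u-\varepsilon\varphi\|_{L^p}$; to close the argument one would either truncate $\Pi_\varepsilon\varphi$ at a level determined by $\|u\|_\infty$ and show this does not increase the energy beyond a controllable factor (using the structure of $g_\varepsilon$), or appeal to a version of $\sbv$ compactness that replaces the $L^\infty$ hypothesis by equi-integrability. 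Without such an intermediate step the exponential tightness of Lemma~\ref{lemma:exponential-tightness-sbv} bounds the energy but does not by itself yield relative $L^p$-compactness of the surviving configurations, and the LDP upper bound is not fully justified.
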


\end{document}